  \providecommand\BibTeX{{%
    \normalfont B\kern-0.5em{\scshape i\kern-0.25em b}\kern-0.8em\TeX}}}
\newtheorem{theorem}{Theorem}[section]
\newtheorem{lemma}[theorem]{Lemma}
\newtheorem{problem}[theorem]{Problem}
\newtheorem{definition}[theorem]{Definition}
\newtheorem{observation}[theorem]{Observation}
\newtheorem{corollary}[theorem]{Corollary}
\newtheorem{fact}[theorem]{Fact}
\renewenvironment{quote}
  {\list{}{\rightmargin=0.8cm \leftmargin=0.8cm}%
   \item\relax}
  {\endlist}
\newcommand{\uniffrom}{\ensuremath{\overset{r}{\leftarrow}}}
\RenewDocumentCommand\P{ m g }{
  \ensuremath{
    \IfNoValueTF{#2}
     {\Pr [ #1 ]}
    {\Pr_{#1}[#2]}
  }
}
\DeclareMathOperator*{\Expectation}{\mathbb{E}}
\NewDocumentCommand\E{ m g }{
  \ensuremath{
    \IfNoValueTF{#2}
    {\Expectation \left[#1\right]}
    {\Expectation_{#1}\left[#2\right]}
  }
}
\DeclareMathOperator*{\Variance}{\mathbb{V}\mathrm{ar}}
\NewDocumentCommand\Var{ m g }{
  \ensuremath{
    \IfNoValueTF{#2}
    {\Variance \left[#1\right]}
    {\Variance_{#1}\left[#2\right]}
  }
}
\newcommand{\UsrData}[1]{{\text{st}}_{#1}}
\newcommand{\UsrDerivative}[1]{X_{#1}}
\newcommand{\PorpulationSum}[1]{{a}[#1]}
\newcommand{\EstPorpulationSum}[1]{\hat{{a}}[#1]}
\NewDocumentCommand\dyadicInterval{ g g }{
  \ensuremath{
    \IfNoValueTF{#2}
    { \IfNoValueTF{#1}{ \mathcal{I} }{ \mathcal{I}_{#1, * } } }
    { \mathcal{I}_{#1, #2} }
  }
}
\NewDocumentCommand\dyadicIntervalSet{g}{
  \ensuremath{
    \IfNoValueTF{#1}{ \textit{ISet} }{ \textit{ISet}[{#1}] }
  }
}
\newcommand{\dyadicDecomposition}[1]{\mathcal{C}(#1)}
\NewDocumentCommand\partialsum{gg}{
  \ensuremath{
    \IfNoValueTF{#2}
    { S ({#1} ) }
    { S_{#1} ({#2}) }
  }
}
\newcommand{\EstPartialsum}[1]{ \hat{S}({#1}) }
\newcommand{\IntSet}[2]{[{#1}\,.\,.\,{#2}]}
\newcommand{\ldp}{{\bf LDP}\xspace}
\newcommand{\targetOutputs}{\mathcal{G}}
\newcommand{\indicator}[1]{\mathds{1}_{\left[#1\right]}}
\newcommand{\norm}[1]{\left\Vert {#1} \right\Vert}
\newcommand{\PAREN}[1]{{\left( {#1} \right)}}
\newcommand{\paren}[1]{{( {#1} )}}
\newcommand{\bracket}[1]{{[ {#1} ]}}
\newcommand{\card}[1]{\left| {#1} \right|}
\newcommand{\set}[1]{\left\{ {#1} \right\}}
\newcommand{\eps}{\varepsilon}
\newcommand{\tildeeps}{\tilde{\varepsilon}}
\newcommand{\supp}[1]{\text{supp}(#1)}
\newcommand{\Annulus}[1]{\textbf{Ann}(#1)}
\newcommand{\cGap}{c_{\text{gap}}}
\newcommand{\pMax}{p_{\text{max}}}
\newcommand{\pMin}{p_{\text{min}}}
\newcommand{\NewLb}{\text{LB}}
\newcommand{\NewUb}{\text{UB}}
\newcommand{\nnz}{\mathsf{nnz}}
\newcommand{\PrOut}{P_\text{out}}
\newcommand{\numerator}[1]{\mathcal{N}_{#1}}
\newcommand{\denominator}[1]{\mathfrak{D}_{#1}}
\newcommand{\init}{\textit{init}}
\newcommand{\rndmzrPropI}{{\bf Property I}\xspace}
\newcommand{\rndmzrPropII}{{\bf Property II}\xspace}
\newcommand{\rndmzrPropIII}{{\bf Property III}\xspace}
\newcommand{\ourRandomizer}{\textit{FutureRand}\xspace}
\def\algoServer{\mathcal{A}_{ \text{svr} }}
\def\algoClient{\mathcal{A}_{ \text{clt} }}
\def\pavg{p_{\text{avg} } }
\newcommand{\cA}{\mathcal{A}}
\newcommand{\cD}{\mathcal{D}}
\newcommand{\cM}{\mathcal{M}}
\newcommand{\cR}{\mathcal{R}}
\newcommand{\cU}{\mathcal{U}}
\newcommand{\cY}{\mathcal{Y}}
\newcommand{\pmbomega}{\pmb{\omega}}
\newcommand{\N}{\mathbb{N}}
\newcommand{\R}{\mathbb{R}}
\newcommand{\mbbZ}{\mathbb{Z}}
\newif\ifcomment
\newcommand{\tony}[1]{\textcolor{brown}{[TONY: #1]}}
\newcommand{\hao}[1]{\textcolor{blue}{[HAO: #1]}}
\newcommand{\olya}[1]{\textcolor{purple}{[Olya: #1]}}
\definecolor{DarkGreen}{rgb}{0.1,0.5,0.1}
\newcommand{\suggest}[1]{\textcolor{DarkGreen}{#1}}
\definecolor{DarkGreen}{rgb}{0.1,0.5,0.1}
\newcommand{\tony}[1]{}
\newcommand{\hao}[1]{}
\newcommand{\olya}[1]{}
\begin{document}
\fancyhead{}

\title{Randomize the Future: Asymptotically Optimal Locally Private Frequency Estimation Protocol for Longitudinal Data}

\author{Olga Ohrimenko}
\email{oohrimenko@unimelb.edu.au}
\affiliation{%
  \institution{The University of Melbourne$^\dagger$}
  \authornote{
    School of Computing and Information Systems
    \vspace{-1.5mm}
  }
  \country{Australia}
}

\author{Anthony Wirth}
\email{awirth@unimelb.edu.au}
\affiliation{%
  \institution{The University of Melbourne$^\dagger$}
  \country{Australia}
}

\author{Hao Wu}
\email{whw4@student.unimelb.edu.au}
\affiliation{%
  \institution{The University of Melbourne$^\dagger$}
  \country{Australia}
}

\renewcommand{\shortauthors}{}

\begin{abstract}

    Longitudinal data tracking under Local Differential Privacy (\ldp) is a challenging task.
    Baseline solutions that repeatedly invoke a protocol designed for one-time computation lead to linear decay in the privacy or utility guarantee with respect to the number of computations.
    To avoid this, the recent approach of \citeauthor{EFMRTT19}~(\citeyear{erlingsson2020amplification}) exploits the potential sparsity of user data that  changes only infrequently.
    Their protocol targets the fundamental problem of frequency estimation for longitudinal binary data, 
    with~$\ell_\infty$ error of~$O ( (1 / \eps) \cdot (\log d)^{3 / 2} \cdot k \cdot \sqrt{ n  \cdot \log ( d /  \beta ) } )$, where~$\eps$ is the privacy budget,~$d$ is the number of time periods,~$k$ is the maximum number of changes of user data, and~$\beta$ is the failure probability.
    Notably, the error bound scales polylogarithmically with~$d$, but linearly with~$k$.
    
    In this paper, we break through the linear dependence on~$k$ in the estimation error.
    Our new protocol has error~$O ( (1 / \eps) \cdot (\log d) \cdot \sqrt{  k \cdot n  \cdot \log ( d /  \beta ) }  )$, matching the lower bound 
    up to a logarithmic factor.
    The protocol is an online one, that outputs an estimate at each time period. 
    The key breakthrough is a new randomizer for sequential data, \ourRandomizer, with two key features. 
    The first is a composition strategy that correlates the noise across the non-zero elements of the sequence.
    The second is a pre-computation technique which, by exploiting the symmetry of input space, enables the randomizer to output the results on the fly, without knowing future inputs.
    Our protocol closes the error gap between existing online and offline algorithms.

\end{abstract}

\begin{CCSXML}
  <ccs2012>
  <concept>
  <concept_id>10002978.10002991.10002995</concept_id>
  <concept_desc>Security and privacy~Privacy-preserving protocols</concept_desc>
  <concept_significance>500</concept_significance>
  </concept>
  </ccs2012>
\end{CCSXML}

\ccsdesc[500]{Security and privacy~Privacy-preserving protocols\vspace{-2mm}}

\keywords{Local Differential Privacy, Longitudinal Data, Frequency Estimation \vspace{-4mm}}

\maketitle

\section{Introduction}

Frequency estimation underlines a wide range of applications in data mining and machine learning (for example, learning users’ preferences, uncovering commonly used phrases, and finding popular URLs).
However, the data collected for the frequency analysis can contain sensitive personal information such as income, gender, health information, etc. 
To protect this information from the data collector, the local model of differential privacy (\ldp) has been deployed by companies including Google~\citep{EPK14, FPE16}, Apple~\citep{D.P.Apple17} and Microsoft~\citep{DKY17}.
In \ldp, each user perturbs their data locally before reporting it to the (untrusted) data collector (aka server) for analysis.

Existing solutions typically focus on one-time computation.
However practical applications often involve continuous monitoring in order to discover trends over time. 
For example, search-engine providers keep track of popular URLs. The naive solution of repeated computation leads to a rapid degradation of privacy guarantee, that scales linearly with the number of computations~\citep{DR14}.
But such degradation is unnecessary, when  users' data changes infrequently.
For example,  a list of frequently visited URLs by a user changes little everyday.

The observation of infrequent data changes is captured by~\citeauthor{EFMRTT19}~\cite{erlingsson2020amplification}.
Noting that many \ldp algorithms~\citep{BNS19, BNST20, EPK14, NXYHSS16} require each client to send just one bit to the server, they %
formulated the following longitudinal data tracking problem for Boolean data.

\begin{quote}
    \textbf{Research Problem:}
    Given a set of~$n$ users, each holding a Boolean value that changes at most~$k$ times across~$d$ time periods; the server needs to report the number of users holding Boolean value~$1$ at each time period.
\end{quote}
The problem is presented in an online setting; in an offline setting, the server reports an estimate only after~$d$ time periods.
Though here we study the problem over Boolean data, our algorithm
can be adapted to solve frequency estimation and heavy hitter problems in richer domains via existing techniques~\citep{BNS19, BNST20, JRUW18, EPK14, FPE16}.

For privacy budget~$\eps$ and~failure probability~$\beta$, \citeauthor{EFMRTT19} described a protocol that achieves maximum estimation error~$O ( (1 / \eps) \cdot (\log d)^{3 / 2} \cdot k \cdot \sqrt{ n  \cdot \log ( d /  \beta ) } )$~\cite{erlingsson2020amplification}, and scales only linearly with~$k$ and not~$d$.
In this paper we study the following question:
\begin{quote}
    \textbf{Research Question:}
    Can the estimation error for the (online) longitudinal data tracking problem above be reduced to sub-linear in~$k$?
\end{quote}
Besides improvement on previous work in terms of the error, answering this question can
close the gap between error guarantees of the online algorithm and the lower bound
of~$\Omega ( (1 / \eps) \cdot \sqrt{  k \cdot n  \cdot \log ( d /  k ) } )$
that was recently presented in~\citep{zhou2021locally}.

\if 0
Olya: I commented thi sout for now.
On the other hand, the problem \suggest{for the offline and online settings} admits a lower bound of~$O ( (1 / \eps) \cdot \sqrt{  k \cdot n  \cdot \log ( d /  k ) } )$~\citep{zhou2021locally}.
This poses a gap between the error guarantees of the online algorithm and the lower bound. 
\begin{quote}
    \textbf{Research Question:}
    Can the estimation error of online   algorithms can be reduced to sub-linear in~$k$.
\end{quote}

\fi

\vspace{1mm}
\noindent {\bf Our Contributions.}
In this work, we provide a new \ldp protocol whose error scales with the square root of the number of data changes.
Specifically, it achieves error~
$$
    O \big( (1 / \eps) \cdot (\log d) \cdot \sqrt{  k \cdot n  \cdot \log ( d /  \beta ) }  \big).
$$ 
Our protocol builds on a standard technique for converting the original data sequence into a sparse one~\citep{erlingsson2020amplification}, hierarchical aggregation scheme for releasing continual data~\citep{DworkNPR10}, and a new randomizer, \ourRandomizer, for sequential data. 
The \ourRandomizer has two key components, a composed randomizer~$\tilde{\cR} : \set{-1, 1}^k \rightarrow \set{-1, 1}^k$ for the non-zero coordinates of the input sequence, and a pre-computation technique that enables~$\tilde{\cR}$ to handle online inputs. 
The main properties of~$\tilde{\cR}$, which play vital role in establishing privacy guarantee and achieving a~$\sqrt{k}$ estimation error, are stated below: assuming that~$\eps \le 1$
\begin{itemize}
    \item 
        There exit~$\pMin', \pMax' \in \paren{0, 1}$ with~$\pMax' \le e^\eps \cdot \pMin'$, such that for each input sequence~$b \in \set{-1, 1}^k$, and each sequence~$s \in \set{-1, 1}^k$, then
        \begin{equation*}
            \P{ \tilde \cR(b) = s } \in \bracket{ \pMin', \pMax' }. 
        \end{equation*}
    
    \item 
        For a given input~$b$, denote~$\tilde{b} \doteq \tilde{\cR}(b)$ the output of~$\tilde{\cR}$.
        There exists some~$\cGap \in \Omega \paren{ \eps / \sqrt{k} }$, such that for each input~$b \in \set{-1, 1}^k$, and for each~$i \in \bracket{k}$,
        \begin{equation*}
            \P{ \tilde{b}_i = b_i } - \P{ \tilde{b}_i = -b_i } = \cGap \in \Omega \paren{ \eps / \sqrt{k} }. 
        \end{equation*}
\end{itemize}

$\tilde{\cR}$ builds on the composed randomizer proposed by~\citeauthor{BNS19}~\citep{BNS19}.
The design in~\citep{BNS19} focused on preserving the statistical distance between the distribution of the output of the composed randomizer, and joint distribution of~$k$ independent randomized responses.
This difference in the problem setting requires non-trivial changes in terms of parameters, assumptions and analysis.
The proof in~\citep{BNS19} relies extensively on the concentration and anti-concentration inequalities;
in comparison, our proof avoids this and investigates the inherent structure of the problem.
Finally, \citeauthor{BNS19}'s original design~\citep{BNS19} applies only to offline inputs. 
Our paper develops the pre-computation technique for converting it into an online one.

\vspace{1mm}
\noindent {\bf Organization.}
Our paper is organized as follows.
Section~\ref{sec: problem definition} introduces the problem formally. 
Section~\ref{sec: prelimary} discusses the preliminaries for our protocol.
Section~\ref{sec: framework} develops an algorithmic framework for longitudinal data tracking.
Section~\ref{sec: our randomizer} introduces the \ourRandomizer.
Section~\ref{sec: review} summarizes the related works.

\section{Problem Definition }
\label{sec: problem definition}

We present formally the longitudinal data tracking problem first introduced by~\citeauthor{EFMRTT19}~\cite{erlingsson2020amplification}.
There is a server and a set of~$n$ users, each holding a Boolean data item that changes at most~$k$ times across the~$d$ time periods. 
Without loss of generality, we assume that~$d$ is a power of~$2$.
For user $u \in [n]$, denote the value sequence of its Boolean data across the~$d$ time periods as
$$
    \UsrData{u} = ( \UsrData{u}[1], \ldots, \UsrData{u} [d]  ) \in \set{ 0, 1 }^d\,.
$$
For each time~$t \in [d]$, denote the number of users with value~$1$ by
\begin{equation}
    \label{equa: def porpulation sum}
    \PorpulationSum{t} \doteq \sum_{ u \in [n] } \UsrData{u} [t].
\end{equation}

\begin{definition}[$(\alpha, \beta)$-Accurate Protocol]
    Given~$\alpha > 0$, and~$\beta \in (0, 1)$, a server-side algorithm is called an~$(\alpha, \beta)$-accurate protocol for longitudinal data collection if it outputs at each time~$t \in [d]$, an estimate~$\EstPorpulationSum{t}$ of~$\PorpulationSum{t}$, such that
    $$
        \begin{array}{c}
            \P{ \max_{t \in [d] } \left| \EstPorpulationSum{t} - \PorpulationSum{t} \right| > \alpha }
            \le \beta\,.
        \end{array}
    $$
\end{definition}

\begin{definition}[Differential Privacy \citep{DR14}] \label{def: Differential Privacy}
    Let $\cA: \cD \rightarrow \cY$ be a randomized algorithm.
    Algorithm~$\cA$ is called  $\eps$-differentially private if for all $v, v' \in \cD$ and all (measurable) $Y \subseteq \cY$,
    \begin{equation} \label{ineq: def private algo}
        \P{ \cA (v) \in Y } \le e^\eps \cdot \Pr [ \cA (v') \in Y ]\,.
    \end{equation}
    The parameter~$\eps$ characterizes the similarity of the output distributions of~$\cA (v)$ and~$\cA (v')$, and is called the \emph{privacy budget}.
\end{definition}

A data-collecting protocol is called~$\eps{}$-locally differentially private (\ldp) if each user reports only a version of their data perturbed locally by an $\eps{}$-differentially private algorithm.

\begin{problem}[Private Longitudinal Data Collection] Given privacy parameter~$\eps{}$, and failure probability $\beta$, design an $\eps$-locally differentially private~$(\alpha, \beta)$-accurate protocol for longitudinal data collection, that minimizes~$\alpha$.
\end{problem}

Our goal is to achieve an~$\alpha \in O ( (1 / \eps) \cdot (\log d) \cdot \sqrt{  k \cdot n  \cdot \log ( d /  \beta ) }  )$.

\section{Preliminaries}
\label{sec: prelimary}

    We use the following notation.
    For every~$i, j \in \N$, such that $i \le j$, ~$\IntSet{i}{j}$ denotes the set of integers~$\set{i, \ldots, j}$.
    For each~$j \in \N^+$, set~$\bracket{ j }$ refers to $\IntSet{1}{j}$.

    As the data of each user~$u$ changes at most~$k$ times, we introduce the following transformation of~$\UsrDerivative{u}$ into a sparse vector, which has at most~$k$ non-zero coordinates.

\begin{definition}[Data Derivative~\citep{erlingsson2020amplification}]
    For each user~$u$ and for all~$t \in \bracket{ d }$, let $\UsrDerivative{u}\bracket{ t } \doteq \UsrData{u}\bracket{ t } - \UsrData{u}\bracket{ t - 1 }$, where for convenience, we assume that $\UsrData{u}\bracket{ 0 } = 0$, so that~$\UsrDerivative{u}\bracket{ 1 }$ is well defined.
    The discrete derivative of user data is
    $$
        \UsrDerivative{u} = \paren{ \UsrDerivative{u}\bracket{ 1 }, \ldots, \UsrDerivative{u}\bracket{ d } } \in \set{ -1, 0, 1 }^d.
    $$
\end{definition}

For example, if~$\UsrData{u} = \paren{ 0, 1, 1, 0}$, then~$\UsrDerivative{u} = \paren{ 0, 1, 0, -1}$.
Observe that for each~$t \in \bracket{ d }$,
$
    \UsrData{u}\bracket{ t } = \sum_{t' \in \bracket{ t } } \UsrDerivative{u}\bracket{  t'  }.
$
We are interested in the cumulative data changes over intervals whose length is a power of~$2$: such intervals are called \emph{dyadic}.

\begin{definition}[Dyadic Intervals]
    For each~$h \in \IntSet{0}{\log d}$, and each~$j \in \bracket{ d / 2^h }$, define the dyadic interval~$\dyadicInterval{h}{j} \doteq \{ (j - 1) \cdot 2^h + 1, \ldots, j \cdot 2^h \}$, where~$h$ is called its \emph{order}.
    Denote the collection of dyadic intervals of order~$h$ by
    $$
        \dyadicIntervalSet{h} \doteq \set{ \dyadicInterval{h}{j} : j \in \bracket{ d / 2^h  } }\,,
    $$
    and the collection of all dyadic intervals by
    $$
        \dyadicIntervalSet \doteq \cup_{ h \in \IntSet{0}{\log d} } \dyadicIntervalSet{h}\,.
    $$ 
    Interval~$\dyadicInterval$ represents an element from~$\dyadicIntervalSet$.
\end{definition}

\begin{figure}[t]
	\includegraphics[width=0.95\linewidth,right]{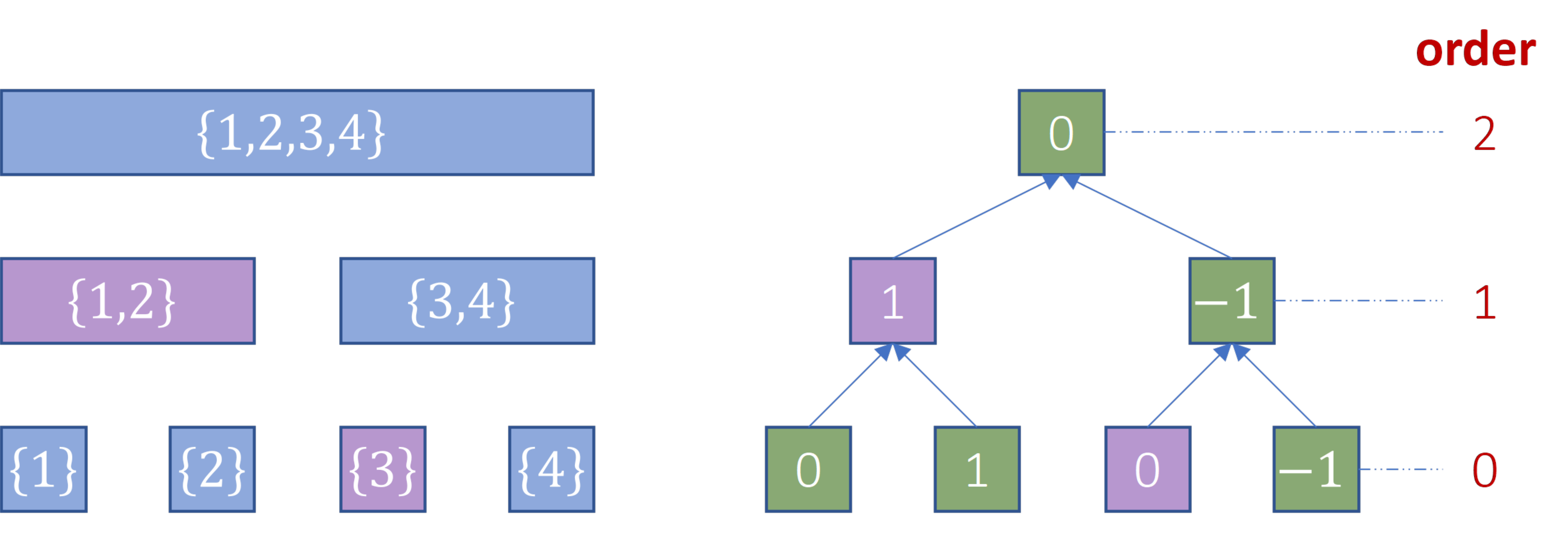}
	\vspace{-4mm}
	\caption{ 
	An example with~$d = 4$ and~$k = 2$.
	The left-hand side enumerates all dyadic intervals defined on~$\bracket{d}$, with the nodes highlighted in purple representing the dyadic decomposition~$\dyadicDecomposition{3}$ of the interval~$\bracket{3}$.
	The right-hand side enumerates all partial sums for the discrete derivative~$\UsrDerivative{u} = \paren{0, 1, 0, -1}$, 
	with the nodes highlighted in purple representing the partial sums associated with dyadic intervals in~$\dyadicDecomposition{3}$. \copyright Olga Ohrimenko, Anthony Wirth, Hao Wu
	}
	\label{fig:dyadic}
	\vspace{-4mm}
\end{figure}

\begin{example}
    All possible dyadic intervals defined on 
    interval~$\bracket{4}$ include
    $\dyadicInterval{0}{1} = \set{1}$,
    $\dyadicInterval{0}{2} = \set{2}$,
    $\dyadicInterval{0}{3} = \set{3}$,
    $\dyadicInterval{0}{4} = \set{4}$,
    $\dyadicInterval{1}{1} = \set{1, 2}$,
    $\dyadicInterval{1}{2} = \set{3, 4}$,
    $\dyadicInterval{2}{1} = \set{1, 2, 3, 4}$. 
\end{example}

\begin{definition}[Partial Sum]
    For each~$h \in \IntSet{0}{\log d}$, and each~$j \in \bracket{ d / 2^h }$, 
    define  for each user
    the partial sum associated with~$\dyadicInterval{h}{j}$ to be
    \vspace{-2mm}
    \begin{equation} 
        \label{equa: def of partial sum}
        \partialsum{u}{\dyadicInterval{h}{j}} \doteq \sum_{ t \in \dyadicInterval{h}{j} } \UsrDerivative{u}\bracket{ t }, \, \forall u \in [n].
    \end{equation}
    \vspace{-1mm}
    and their sum as
    \vspace{-1mm}
    \begin{equation}
        \label{equa: def sum of partial sum}
        \partialsum{\dyadicInterval{h}{j} } \doteq \sum_{ u \in \bracket{ n } } \partialsum{u}{\dyadicInterval{h}{j} }
    \end{equation}
    \vspace{-1mm}
    Here, $h$ refers to the \emph{order} of a partial sum. 
\end{definition}

\begin{example}
    Suppose that~$\UsrDerivative{u} = \paren{0, 1, 0, -1}$. 
    The possible partial sums are
    $\partialsum{u}{ \dyadicInterval{0}{1}} = 0$,
    $\partialsum{u}{ \dyadicInterval{0}{2}} = 1$,
    $\partialsum{u}{ \dyadicInterval{0}{3}} = 0$,
    $\partialsum{u}{ \dyadicInterval{0}{4}} = -1$,
    $\partialsum{u}{ \dyadicInterval{1}{1}} = 1$,
    $\partialsum{u}{ \dyadicInterval{1}{2}} = -1$,
    $\partialsum{u}{ \dyadicInterval{2}{1}} = 0$. 
\end{example}

Note that if~$\partialsum{u}{\dyadicInterval{h}{j}} \neq 0$, then there is at least one~$t \in \dyadicInterval{h}{j}$ for which~$\UsrDerivative{u}\bracket{t} \neq 0$.
Since dyadic intervals of order~$h$ are disjoint, and there are at most~$k$ non-zero~$\UsrDerivative{u}\bracket{t}$ over all~$t \in [d]$, we have this observation:

\begin{observation} \label{observation: upper bound for non zero elements each order}
     For each ~$h \in \IntSet{0}{\log d}$, there can be at most~$k$ indices~$j \in \bracket{ d / 2^h }$ for which~$\partialsum{u}{\dyadicInterval{h}{j}} \neq 0$.
\end{observation}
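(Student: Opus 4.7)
The plan is to argue by a direct counting argument based on the sparsity of the discrete derivative~$\UsrDerivative{u}$, combined with the disjointness of dyadic intervals of a fixed order.

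First, I would fix an arbitrary order~$h \in \IntSet{0}{\log d}$ and define the set
$
    J_h \doteq \set{ j \in \bracket{d / 2^h} : \partialsum{u}{\dyadicInterval{h}{j}} \neq 0 }.
$
For each~$j \in J_h$, by the definition of~$\partialsum{u}{\dyadicInterval{h}{j}}$ in equation~\eqref{equa: def of partial sum} as a sum of~$\UsrDerivative{u}\bracket{t}$ over~$t \in \dyadicInterval{h}{j}$, the partial sum being non-zero forces at least one~$t \in \dyadicInterval{h}{j}$ with~$\UsrDerivative{u}\bracket{t} \neq 0$. So I can pick such a witness~$t_j \in \dyadicInterval{h}{j}$ for each~$j \in J_h$.

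Next, the key structural fact is that~$\dyadicIntervalSet{h} = \set{\dyadicInterval{h}{j} : j \in \bracket{d/2^h}}$ forms a partition of~$\bracket{d}$ into blocks of size~$2^h$: distinct~$j, j' \in \bracket{d/2^h}$ yield disjoint intervals. Hence the map~$j \mapsto t_j$ from~$J_h$ into the set~$T \doteq \set{ t \in \bracket{d} : \UsrDerivative{u}\bracket{t} \neq 0 }$ is injective, giving~$|J_h| \le |T|$.

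Finally, I would invoke the modeling assumption that the Boolean sequence~$\UsrData{u}$ changes at most~$k$ times across the~$d$ time periods. By the definition of the discrete derivative as~$\UsrDerivative{u}\bracket{t} = \UsrData{u}\bracket{t} - \UsrData{u}\bracket{t-1}$ (with the convention~$\UsrData{u}\bracket{0} = 0$), the coordinate~$\UsrDerivative{u}\bracket{t}$ is non-zero precisely when~$\UsrData{u}$ changes at time~$t$ (counting the possible initial change from~$0$). Thus~$|T| \le k$, which yields~$|J_h| \le k$ and completes the argument. There is no real obstacle here; the only care needed is in the bookkeeping of the convention~$\UsrData{u}\bracket{0} = 0$, to ensure that ``at most~$k$ changes'' correctly bounds~$|T|$.
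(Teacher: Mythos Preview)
Your argument is correct and follows essentially the same approach as the paper: the paper notes that a non-zero partial sum forces some non-zero $\UsrDerivative{u}[t]$ in the interval, that the order-$h$ dyadic intervals are disjoint, and that there are at most $k$ non-zero $\UsrDerivative{u}[t]$ overall. Your version simply makes this counting explicit via the injection $j \mapsto t_j$, which is exactly the same idea spelled out more formally.
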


Combining Equation~(\ref{equa: def of partial sum}) and the fact that~$\UsrDerivative{u}\bracket{ t } = \UsrData{u}\bracket{ t } - \UsrData{u}\bracket{ t - 1 }$, and that~$\dyadicInterval{h}{j} = \{ (j - 1) \cdot 2^h + 1, \ldots, j \cdot 2^h \}$, we observe:

\begin{observation} \label{observation: basic partial sum properties}
    For each~$u \in [n]$, and~$\dyadicInterval{h}{j} \in \dyadicIntervalSet$, we have
    \begin{equation} \label{equa: range of partial sum}
    \partialsum{u}{\dyadicInterval{h}{j}} 
        = \UsrData{u}\bracket{ j \cdot 2^h } - \UsrData{u}\bracket{ (j - 1) \cdot 2^h  } 
        \in \set{-1, 0, 1}\,.
    \end{equation}
\end{observation}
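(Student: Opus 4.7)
The plan is to prove the equality by a direct telescoping-sum argument, and then deduce the set membership from the Boolean range of~$\UsrData{u}$.

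First, I would unfold the definitions. By Equation~(\ref{equa: def of partial sum}) and the definition of~$\dyadicInterval{h}{j}$,
\begin{equation*}
    \partialsum{u}{\dyadicInterval{h}{j}}
    = \sum_{ t = (j-1) \cdot 2^h + 1 }^{ j \cdot 2^h } \UsrDerivative{u}\bracket{ t }\,.
\end{equation*}
Substituting the defining relation~$\UsrDerivative{u}\bracket{ t } = \UsrData{u}\bracket{ t } - \UsrData{u}\bracket{ t - 1 }$ (which holds for all~$t \in \bracket{ d }$ thanks to the convention~$\UsrData{u}\bracket{ 0 } = 0$) turns the right-hand side into a telescoping sum:
\begin{equation*}
    \sum_{ t = (j-1) \cdot 2^h + 1 }^{ j \cdot 2^h } \paren{ \UsrData{u}\bracket{ t } - \UsrData{u}\bracket{ t - 1 } }
    = \UsrData{u}\bracket{ j \cdot 2^h } - \UsrData{u}\bracket{ (j - 1) \cdot 2^h }\,,
\end{equation*}
which establishes the claimed equality.

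For the set-membership assertion, I would then invoke the fact that each~$\UsrData{u}\bracket{ t } \in \set{0, 1}$ by the problem statement (and the convention~$\UsrData{u}\bracket{0} = 0$ when~$j \cdot 2^h$ or~$(j-1) \cdot 2^h$ equals~$0$). The difference of two values from~$\set{0, 1}$ necessarily lies in~$\set{-1, 0, 1}$, completing the proof. There is no substantive obstacle here: the statement is essentially a restatement of the telescoping identity plus a one-line bound on the range of a difference of Booleans; the only point requiring mild care is confirming that the boundary index~$(j-1) \cdot 2^h$ is always in the valid range where~$\UsrData{u}$ has been defined, which is handled by the~$\UsrData{u}\bracket{0} = 0$ convention.
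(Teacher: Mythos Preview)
Your proposal is correct and follows essentially the same approach as the paper: the paper's justification, given in the sentence immediately preceding the observation, simply says to combine Equation~(\ref{equa: def of partial sum}), the defining relation~$\UsrDerivative{u}\bracket{t} = \UsrData{u}\bracket{t} - \UsrData{u}\bracket{t-1}$, and the explicit form of~$\dyadicInterval{h}{j}$, which is exactly the telescoping argument you spell out in more detail.
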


Recall that~$\UsrData{u}\bracket{ t } = \sum_{t' \in \bracket{ t } } \UsrDerivative{u}\bracket{  t'  }$.
It can be viewed as the cumulative data change from time~$1$ to time~$t$.
If we  decompose the interval~$\bracket{ t }$ into a sequence of disjoint dyadic intervals, then~$\UsrData{u}\bracket{ t }$ can be expressed as the sum of cumulative changes over these intervals. 

\begin{fact}(Dyadic Decomposition) \label{fact: nodes be in different level}
    For each~$t \in \bracket{ d }$, the interval~$\bracket{ t }$ can be decomposed into a minimum collection, $\dyadicDecomposition{t}$, of at most~$\lceil \log t \rceil$ disjoint dyadic intervals with distinct orders.
\end{fact}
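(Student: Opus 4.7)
The plan is to construct $\dyadicDecomposition{t}$ directly from the binary representation of $t$ and then argue both the upper bound on its size and its minimality. Write $t = 2^{h_1} + 2^{h_2} + \cdots + 2^{h_m}$ with $h_1 > h_2 > \cdots > h_m \ge 0$; this is just the binary expansion of $t$, and $m$ equals the number of $1$-bits in $t$. For $i \in [m]$, set $s_i \doteq 2^{h_1} + \cdots + 2^{h_{i-1}}$ (with $s_1 = 0$), and propose the interval $I_i \doteq \{ s_i + 1, \ldots, s_i + 2^{h_i} \}$. By construction these intervals are disjoint, contiguous, and cover $[1 \,.\,.\, t]$.

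The first technical step is to check that each $I_i$ is actually a dyadic interval of order $h_i$. For this I need $s_i$ to be an integer multiple of $2^{h_i}$, since then $I_i = \dyadicInterval{h_i}{s_i / 2^{h_i} + 1}$. This holds because each summand $2^{h_j}$ with $j < i$ satisfies $h_j > h_i$, so it is divisible by $2^{h_i}$. This also shows that the orders $h_1, \ldots, h_m$ appearing in the decomposition are distinct, matching the ``distinct orders'' assertion of the statement. Counting the intervals, $m$ is the Hamming weight of $t$, which is at most the bit length of $t$, and hence at most $\lceil \log t \rceil$ for $t \ge 2$ (the degenerate case $t = 1$ is trivially $\dyadicDecomposition{1} = \{ \dyadicInterval{0}{1} \}$).

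The step I expect to require a little more care is minimality. Here the plan is the following combinatorial argument: any decomposition of $[1 \,.\,.\, t]$ into disjoint dyadic intervals of orders $g_1, \ldots, g_p$ (with possible repetitions) must satisfy $\sum_{i=1}^p 2^{g_i} = t$, because the lengths partition $t$. Thus any such decomposition yields a representation of $t$ as a sum of powers of two, and the minimum number of powers of two summing to $t$ is exactly the Hamming weight of $t$: any repetition $2^g + 2^g$ can be rewritten as $2^{g+1}$, strictly reducing the count, and this rewriting terminates precisely at the unique binary expansion. Consequently, no decomposition of $[1 \,.\,.\, t]$ uses fewer than $m$ dyadic intervals, so the construction above is minimum. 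The distinctness of orders then drops out of minimality itself (two equal orders could be merged using the same pairing argument, provided the adjacent same-order intervals are aligned; but this alignment is forced by the partition of $[1 \,.\,.\, t]$ from the left).

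The main obstacle, therefore, is not the construction itself but tying minimality cleanly to the uniqueness of the binary expansion and ensuring the ``alignment'' of adjacent dyadic intervals cooperates with the merging argument. Once that is in place, Fact~\ref{fact: nodes be in different level} follows, and $\dyadicDecomposition{t}$ can be identified with the positions of the $1$-bits of $t$ — an explicit description that will be convenient later for bounding the online algorithm's per-step work and error.
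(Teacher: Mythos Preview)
Your argument is correct and is the standard binary-expansion construction. The paper does not actually prove this fact: it states it as a \emph{Fact}, gives the example $\dyadicDecomposition{3} = \{\{1,2\},\{3\}\}$, and remarks only that the distinct-orders property ``holds since the interval $[t]$ starts from $1$.'' So there is nothing to compare against; you have simply supplied the proof the paper omits.

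Two small comments. First, your arithmetic minimality argument (any partition gives $\sum 2^{g_i} = t$, and the Hamming weight is the minimum number of powers of two summing to $t$) is already complete; the parenthetical about alignment of adjacent same-order intervals is unnecessary and slightly muddies the water, since two adjacent dyadic intervals of the same order need \emph{not} merge into a dyadic interval (e.g., $\{2\}$ and $\{3\}$). Drop that parenthetical. Second, you correctly flag that $t = 1$ needs one interval while $\lceil \log 1 \rceil = 0$; this is an off-by-one in the paper's stated bound, not in your proof. In the paper's applications only the $O(\log d)$ order matters, so it is harmless there.
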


For example, the dyadic decomposition of the interval~$\bracket{3}$ is given by~$\set{ \set{1,2}, \set{3} }$.
See Figure~\ref{fig:dyadic} for the illustration.
In general, for~$1 \le \ell \le r \le d$, the interval~$\IntSet{\ell}{r}$ can also be decomposed into a minimum collection of at most~$\lceil 2 \cdot  \log (r - \ell + 1)
\rceil$
disjoint dyadic intervals.
But these intervals might not have distinct orders.
For example, the decomposition of the interval~$\IntSet{2}{3}$ is given by~$\set{\set{2}, \set{3}}$.
Fact~\ref{fact: nodes be in different level} holds since the interval~$\bracket{t}$ starts from~$1$.

Via the definition of~$\dyadicDecomposition{t}$, we have~$\UsrData{u}\bracket{ t } = \sum_{ \dyadicInterval{h}{j} \in \dyadicDecomposition{t} } \partialsum{u}{\dyadicInterval{h}{j}}$.
Summing over~$u \in [n]$, and combining this with Equation~(\ref{equa: def porpulation sum}),~(\ref{equa: def of partial sum}) and~(\ref{equa: def sum of partial sum}), we have

\begin{observation}
    For each time~$t \in [d]$, 
    \begin{equation}
        \label{equa: decomposition of porpulatioin sum}
        \PorpulationSum{t} 
        = \sum_{u \in \bracket{n} } \sum_{ \dyadicInterval{h}{j} \in \dyadicDecomposition{t} }  \partialsum{u}{\dyadicInterval{h}{j}}
        = \sum_{ \dyadicInterval{h}{j} \in \dyadicDecomposition{t} } \partialsum{\dyadicInterval{h}{j}}.
    \end{equation}    
\end{observation}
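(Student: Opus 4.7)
The plan is to verify the identity by unfolding each definition in sequence and swapping summation orders. First I would expand the left-hand side using~(\ref{equa: def porpulation sum}) as $\PorpulationSum{t} = \sum_{u \in [n]} \UsrData{u}[t]$, then invoke the relation $\UsrData{u}[t] = \sum_{t' \in [t]} \UsrDerivative{u}[t']$ that was already noted just after the definition of the data derivative. This reduces the task to rewriting $\sum_{t' \in [t]} \UsrDerivative{u}[t']$ in terms of the partial sums on~$\dyadicDecomposition{t}$.

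The key step is then to use Fact~\ref{fact: nodes be in different level}: since $\dyadicDecomposition{t}$ is a collection of pairwise disjoint dyadic intervals whose union is exactly~$[t]$, the single sum over~$t'$ splits as a double sum, one over~$\dyadicInterval{h}{j} \in \dyadicDecomposition{t}$ and an inner one over $t' \in \dyadicInterval{h}{j}$. Applying the definition~(\ref{equa: def of partial sum}) of~$\partialsum{u}{\dyadicInterval{h}{j}}$ to the inner sum immediately yields the first equality:
\[
    \UsrData{u}[t] = \sum_{\dyadicInterval{h}{j} \in \dyadicDecomposition{t}} \partialsum{u}{\dyadicInterval{h}{j}},
\]
and summing over~$u \in [n]$ gives the middle expression in the claim.

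For the second equality, I would simply interchange the two finite sums (both ranges are independent of each other, so Fubini is trivial) and invoke definition~(\ref{equa: def sum of partial sum}) of~$\partialsum{\dyadicInterval{h}{j}}$ to collapse the inner sum over~$u$ into~$\partialsum{\dyadicInterval{h}{j}}$. There is no real obstacle here: the argument is a straightforward chain of definitional substitutions combined with the partitioning property of the dyadic decomposition guaranteed by Fact~\ref{fact: nodes be in different level}. The only thing that needs to be stated explicitly, to avoid an off-by-one slip, is that we adopt the convention $\UsrData{u}[0] = 0$ so that the telescoping underlying $\sum_{t' \in [t]} \UsrDerivative{u}[t'] = \UsrData{u}[t]$ holds for $t = 1$ as well; this is already built into the definition of~$\UsrDerivative{u}$ in the excerpt.
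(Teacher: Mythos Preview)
Your proposal is correct and follows essentially the same approach as the paper: the paper's justification (given in the sentence immediately preceding the Observation) is precisely the chain you describe---use the dyadic decomposition of $[t]$ to write $\UsrData{u}[t] = \sum_{\dyadicInterval{h}{j} \in \dyadicDecomposition{t}} \partialsum{u}{\dyadicInterval{h}{j}}$, sum over $u$, and apply Equations~(\ref{equa: def porpulation sum}), (\ref{equa: def of partial sum}), and (\ref{equa: def sum of partial sum}). Your write-up is simply a more explicit unpacking of that one-line argument.
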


\section{Framework} \label{sec: framework}

In this section, we introduce an \ldp protocol 
that achieves our proposed error guarantee.
It improves the previously known bound~\citep{erlingsson2020amplification} by a factor of~$\sqrt{k}$. %
The main result of the paper is summarized as follows:

\begin{theorem} \label{theorem: property of further improved private algorithm}
    Assuming that $\eps \le 1$, and $ \eps^{-1} \cdot \paren{ \log d } \cdot \sqrt{ k \cdot \ln \paren{ d / \beta } }$ $\le \sqrt{n}$, there is an $\eps$-local differentially private protocol, such that, with probability at least $1 - \beta$, its~$d$ estimates~$\{\EstPorpulationSum{t}\}$ of~$\{\PorpulationSum{t}\}$ satisfy
    $$
        \max_{t \in [d] } \card{ \EstPorpulationSum{t} - \PorpulationSum{t} }
        \in
        O \left( \frac{ \log d } {\eps} \cdot \sqrt{  k \cdot n  \cdot \ln \frac{d}{\beta} }  \right)\,.
    $$
\end{theorem}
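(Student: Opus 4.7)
The plan is to combine the hierarchical (dyadic) aggregation scheme of Section~\ref{sec: prelimary} with a \emph{single} per-user invocation of \ourRandomizer, run over the concatenation of non-zero partial sums across all $1+\log d$ dyadic levels, paying the full budget $\eps$ once instead of splitting it across levels. Spending the budget once is the reason the final bound carries only $\log d$ rather than $(\log d)^{3/2}$ as in \citep{erlingsson2020amplification}.

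Each user~$u$ first computes~$\UsrDerivative{u}$ and then, for every dyadic interval $\dyadicInterval \in \dyadicIntervalSet$, the partial sum $\partialsum{u}{\dyadicInterval} \in \set{-1,0,1}$. By Observation~\ref{observation: upper bound for non zero elements each order} applied at each of the $1+\log d$ levels, at most $K \doteq k \cdot (1+\log d)$ of these partial sums are non-zero. The user feeds the signs of these non-zero entries, in time order (so the online pre-computation technique applies), as the input sequence to \ourRandomizer with parameter $K$ and privacy budget $\eps$. For each dyadic interval $\dyadicInterval$ the user transmits one bit $\tilde b_u^{\dyadicInterval} \in \set{-1,1}$: the corresponding \ourRandomizer output for the non-zero partial sums, and an independent uniform bit for the zero ones. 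The server forms $\hat S(\dyadicInterval) \doteq (1/\cGap) \sum_{u \in \bracket{n}} \tilde b_u^{\dyadicInterval}$ and, for each $t \in \bracket{d}$, outputs $\EstPorpulationSum{t} \doteq \sum_{\dyadicInterval \in \dyadicDecomposition{t}} \hat S(\dyadicInterval)$. By Equation~(\ref{equa: decomposition of porpulatioin sum}) together with the gap property of \rndmzrPropII, this is unbiased for $\PorpulationSum{t}$. Privacy is immediate from \rndmzrPropI, since each user invokes exactly one $\eps$-LDP mechanism.

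For accuracy, fix $t$ and write $\EstPorpulationSum{t} - \PorpulationSum{t} = \sum_{u \in \bracket{n}} Y_u^{(t)}$, where $Y_u^{(t)} \doteq \sum_{\dyadicInterval \in \dyadicDecomposition{t}} \paren{ (1/\cGap)\,\tilde b_u^{\dyadicInterval} - \partialsum{u}{\dyadicInterval} }$. Each $Y_u^{(t)}$ is zero-mean, depends only on user~$u$'s randomness, and is bounded by $\card{Y_u^{(t)}} \in O(\log d/\cGap) = O((\log d) \sqrt{K}/\eps)$. Because the $Y_u^{(t)}$ are independent across~$u$, the total variance is at most $n \cdot \Var{Y_u^{(t)}} \in O(n \log d / \cGap^2) = O(n k (\log d)^2 / \eps^2)$, provided one can argue that \ourRandomizer's outputs at the $\le 1+\log d$ indices of $\dyadicDecomposition{t}$ combine in variance as if independent. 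Bernstein's inequality then yields, with probability at least $1 - \beta/d$, an error bound of $O( (\log d / \eps) \sqrt{k n \ln(d/\beta)} )$, with the Bernstein tail correction absorbed by the Gaussian term under the hypothesis $\eps^{-1} (\log d) \sqrt{k \ln(d/\beta)} \le \sqrt n$. A union bound over $t \in \bracket d$ completes the claim.

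The main obstacle is the variance estimate $\Var{Y_u^{(t)}} \in O(\log d/\cGap^2)$: \rndmzrPropII only controls the per-coordinate bias, and hence the marginal variance of each $\tilde b_u^{\dyadicInterval}$, but the $\log d$ outputs feeding into $Y_u^{(t)}$ are correlated by \ourRandomizer's correlated-noise composition. I would handle this by unfolding the internals of \ourRandomizer and showing that, up to constants, the joint distribution on any subset of output coordinates is close to a product distribution with the same marginals, either through a direct structural analysis mirroring the composed-randomizer construction of~\citep{BNS19} on which \ourRandomizer is built, or through a chain-rule argument over the tree used by the online pre-computation technique.
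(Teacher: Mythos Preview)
Your approach diverges from the paper's in a structural way, and the divergence creates precisely the gap you yourself flag.

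The paper does \emph{not} concatenate all $K = k(1+\log d)$ non-zero partial sums into one invocation of \ourRandomizer. Instead (Algorithm~\ref{algo: client}), each user samples a single level $h_u \uniffrom \IntSet{0}{\log d}$ and runs \ourRandomizer with sparsity parameter $k$ (not $K$) only on the $L = d/2^{h_u}$ partial sums of that one level. The server estimator is $z_u[h,j] = (1+\log d)\,(\cGap)^{-1}\,\pmbomega_u[j]\,\indicator{h_u = h}$. The payoff (Lemma~\ref{lemma: error of single estimator for basic private algorithm}): since $\dyadicDecomposition{t}$ contains intervals of \emph{distinct} orders (Fact~\ref{fact: nodes be in different level}) and $z_u[h,j]$ vanishes unless $h = h_u$, the per-user contribution $Y_u = \sum_{\dyadicInterval{h}{j}\in\dyadicDecomposition{t}} z_u[h,j]$ has \emph{at most one} non-zero summand. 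Hence $|Y_u| \le (1+\log d)\,(\cGap)^{-1}$ deterministically, with $\cGap \in \Omega(\eps/\sqrt{k})$ from Theorem~\ref{theorem: lower bound on c-gap}, and plain Hoeffding plus a union bound over $t$ finishes the proof. No correlation analysis is ever needed; the $(1+\log d)$ factor enters once, through the sampling rescaling.

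Your construction forfeits this: by running \ourRandomizer on sparsity $K$ you pay $\cGap \in \Omega(\eps/\sqrt{k\log d})$, and your $Y_u^{(t)}$ is a genuine sum of up to $1+\log d$ correlated randomizer outputs. Recovering the $\log d$ rather than $(\log d)^{3/2}$ dependence then forces the variance estimate $\Var{Y_u^{(t)}} \in O((\log d)/\cGap^2)$, which amounts to controlling pairwise covariances among the non-zero-coordinate outputs of $\tilde\cR(1^K)$. \rndmzrPropI--\rndmzrPropIII say nothing about the joint behavior of output coordinates, so establishing this would require re-opening the internals of $\tilde\cR$ and carrying out a calculation comparable in effort to Lemmas~\ref{lemma: probability of modified composed randomizer}--\ref{lemma: lower bound of csvr}. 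Your suggested fixes are not clearly workable: the pre-computation technique has no tree structure (it is a flat pre-sampled vector $\tilde b = \tilde\cR(1^K)$, see Algorithm~\ref{algo: our randomizer}), and ``close to a product distribution'' is exactly the statement that needs proving. The paper's level-sampling device is precisely what sidesteps this entire difficulty.
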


The assumption $\eps^{-1} \cdot \paren{ \log d } \cdot \sqrt{ k \cdot \ln \paren{ d / \beta } } \le \sqrt{n}$ ensures that ~$\eps^{-1} \cdot \paren{ \log d } \cdot \sqrt{ k \cdot n \cdot \ln \paren{ d / \beta } } \le n$, avoiding a trivial bound.

\vspace{3mm}
\subsection{Overview}

Our framework is split between the user (or client) and the server.
Each user reports perturbed data to the server, who aggregates
the data from all users and computes the required statistics while adjusting for the noise in the reports.

As a warm up consider the following (non-private) naive protocol.
Here, each user~$u \in \bracket{n}$ computes and reports (to the server) each partial sum~$\partialsum{u}{ \dyadicInterval{h}{j} } $ immediately after it has all the data needed for the computation.
In particular, since the last number in the dyadic interval~$\dyadicInterval{h}{j}$ is~$j \cdot 2^{h_u}$, the last data point needed to compute~$\partialsum{u}{ \dyadicInterval{h}{j} }$ is~$\UsrDerivative{u} \bracket{ j \cdot 2^{h_u} }$.
Based on the reports of the partial sums, according to Equation~(\ref{equa: decomposition of porpulatioin sum}), the server can obtain for each~$t \in [d]$ a precise value of~$\PorpulationSum{t}$.

The naive protocol does not provide LDP guarantees. To this end, our algorithms build on a combination of the following two reporting techniques.

\vspace{2mm}
\noindent {\bf Reporting based on Sampling.}
Suppose, instead of reporting every 
partial sum, each user~$u \in \bracket{n}$ samples an integer~$h_u \in \IntSet{0}{\log d}$ uniformly at random, and reports only the partial sums with order~$h_u$.
The server no longer has exact values of the~$\PorpulationSum{t}, \forall t \in \bracket{d}$.
However, the server can easily construct an unbiased estimate.
For each~$h \in \IntSet{0}{\log d}$, and each~$j \in \bracket{ d / 2^h }$, let~$z_u[h, j]$ be the server's estimate of~$\partialsum{u}{ \dyadicInterval{h}{j} }$. 
If it sets~$z_u[h, j] = (1 + \log d) \cdot \indicator{h_u = h} \cdot \partialsum{u}{ \dyadicInterval{h}{j} }$, where~$\indicator{h_u = h}$ is the indicator for the event~$h_u = h$,
then 
~$z_u[h, j ]$ is an unbiased estimate of~$\partialsum{u}{ \dyadicInterval{h}{j} }$.
Via linearity of expectation, replacing~$\partialsum{u}{ \dyadicInterval{h}{j} }$ in Equation~(\ref{equa: decomposition of porpulatioin sum}) with~$z_u[h, j ]$ gives an unbiased estimator~$\PorpulationSum{t}$.

\vspace{2mm}
\noindent {\bf Reporting with Perturbation.} 
In the previous paragraph, conditioned on~$h_u = h$, user~$u \in \bracket{n}$ reports a sequence of $L \doteq d / 2^h$ partial sums with order~$h$ to the server, each taking value in~$\set{-1, 0, 1}$ according to Equation~(\ref{equa: range of partial sum}).
To achieve \ldp with this approach, the user~$u$ should invoke some randomizer~$\cM$, to perturb their partial sums before reporting.
This also requires a change in computing the estimators~$z_u[h, j]$ with respect to~$\cM$.
Our randomizer,~$\cM$, exploits sparsity: there are at most~$k$ non-zero partial sums with order~$h$
(Observation~\ref{observation: upper bound for non zero elements each order}).

\if 0
Below, we present the high level ideas behind our algorithms.

\noindent {\bf Naive Reporting.} 
Consider the following (non-private) naive protocol.
Here, each user~$u \in \bracket{n}$ computes and reports (to the server) each partial sum~$\partialsum{u}{ \dyadicInterval{h}{j} } $ immediately after it has all the data needed for the computation.
In particular, since the last number in the dyadic interval~$\dyadicInterval{h}{j}$ is~$j \cdot 2^{h_u}$, the last data point needed to compute~$\partialsum{u}{ \dyadicInterval{h}{j} }$ is~$\UsrDerivative{u} \bracket{ j \cdot 2^{h_u} }$.
Based on the reports of the partial sums, according to Equation~(\ref{equa: decomposition of porpulatioin sum}), the server can obtain for each~$t \in [d]$ a precise value  of~$\PorpulationSum{t}$.

\vspace{2mm}
\noindent {\bf Reporting based on Sampling.}
Suppose, instead of reporting every such partial sum, each user~$u \in \bracket{n}$ samples an integer~$h_u \in \IntSet{0}{\log d}$ uniformly at random, and reports only the partial sums with order~$h_u$.
The server no longer has exact estimates  of the~$\PorpulationSum{t}, \forall t \in \bracket{d}$.
However, the server can easily construct an unbiased estimate.
For each~$h \in \IntSet{0}{\log d}$, and each~$j \in \bracket{ d / 2^h }$, let~$z_u[h, j]$ be the server's estimate of~$\partialsum{u}{ \dyadicInterval{h}{j} }$. 
If it sets~$z_u[h, j] = (1 + \log d) \cdot \indicator{h_u = h} \cdot \partialsum{u}{ \dyadicInterval{h}{j} }$, where~$\indicator{h_u = h}$ is the indicator for the event~$h_u = h$,
then 
~$z_u[h, j ]$ is an unbiased estimate of~$\partialsum{u}{ \dyadicInterval{h}{j} }$.
Via linearity of expectation, replacing~$\partialsum{u}{ \dyadicInterval{h}{j} }$ in Equation~(\ref{equa: decomposition of porpulatioin sum}) with~$z_u[h, j ]$ gives an unbiased estimator~$\PorpulationSum{t}$.

\vspace{2mm}
\noindent {\bf Reporting with Perturbation.} 
In the previous paragraph, conditioned on~$h_u = h$, user~$u \in \bracket{n}$ reports a sequence of $L \doteq d / 2^h$ partial sums with order~$h$ to the server, each taking value in~$\set{-1, 0, 1}$ according to Equation~(\ref{equa: range of partial sum}).
To achieve \ldp with this approach, the user~$u$ should invoke some randomizer~$\cM$, to perturb their partial sums before reporting.
This also requires a change in computing the estimators~$z_u[h, j]$ with respect to~$\cM$.
Our randomizer,~$\cM$, exploits sparsity: there are at most~$k$ non-zero partial sums with order~$h$
(Observation~\ref{observation: upper bound for non zero elements each order}).
\fi

\subsection{Client Side} \label{subsec: client side}

{\bf Randomizer.}
Let~$\cM$ be a client-side randomizer. Our protocol relies on a randomizer 
$\cM$ that provides the following functionalities and has three properties described below.

$\cM$ has an (optional) initialization phase~$\cM.\init(L, k, \eps)$, with parameters the length of the input~$L$, the maximum number of non-zero elements in the input~$k$, and the privacy budget~$\eps$.
In this phase, $\cM$ can perform some pre-computation whose result is kept for reference later.

During the protocol, $\cM$ takes as input a sequence $v_1, \ldots, v_L$ where each value is in $\{-1, 0, 1\}$, corresponding to user's data, and outputs a sequence $\cM^{(1)} (v_1), \ldots, \cM^{(L)} (v_L) \in \{-1, 1\}$.
The output of~$\cM^{ (j) }(v_j)$ may depend not only on the input~$v_j$ and the randomness of~$\cM^{ (j) }$, but also on the pre-computation result, the past inputs~$v_1, \dots, v_{j - 1}$ and outputs~$\cM^{(1)} (v_1), \ldots, \cM^{(j - 1)} (v_{j - 1})$.

Finally,~$\cM$ should satisfy the following three properties.

\vspace{2mm}
\begin{mdframed}[roundcorner=3.5pt]
    \begin{description}[leftmargin = *]
        \item [\rndmzrPropI.] 
            Given~$L$,~$k$ and~$\eps \le 1$, there exists~$\pMin, \pMax \in \paren{0, 1}$, such that~$\pMax / \pMin \le e^{\eps}$, and that for each $k$-sparse input sequence~$v_1, \ldots, v_L \in \set{-1, 0, 1}$, and each sequence~$w_1, \ldots,$ $w_L \in \set{-1, 1}$,
            \begin{equation} \label{ineq: randomizer output prob range}
                \hspace{-1mm}
                \P{ M^{ (1) } (v_1) = w_1, \ldots, M^{ (L) } (v_L) = w_L }
                \in
                \bracket{ \pMin, \pMax}\,.
            \end{equation}
        \item [\rndmzrPropII. ]
            There exits~$\cGap \in \paren{0, 1}$, such that for all~$j \in \bracket{L}$, if $v_j \neq 0$,
            \begin{equation} \label{equa: uniform gap}
                \P{ \cM^{ (j) }(v_j) = v_j } - \P{ \cM^{ (j) }(v_j) = - v_j }
                  = \cGap\,.          
            \end{equation}
        \item [\rndmzrPropIII.]
        For all~$j \in \bracket{L}$, if $v_j = 0$,
            \begin{equation} \label{equa: equal output}
                \P{ \cM^{ (j) }(v_j) = 1 } = \P{ \cM^{ (j) }(v_j) = - 1 } = \frac{1}{2}\,.          
            \end{equation}
    \end{description}
\end{mdframed}

\begin{example} \label{example: naive randomizer}
The following randomizer~\citep{W65} satisfies these three properties.
    Suppose that~$\cM$ perturbs each coordinate independently such that: if $v_j \neq 0$, then~$\P{ \cM^{ (j) }(v_j) = v_j } = e^{\eps / k} / \paren{ e^{\eps / k} + 1}$ and~$ \P{ \cM^{ (j) }(v_j) = - v_j } = 1 / \paren{ e^{\eps / k} + 1}$; if $v_j = 0$,  then~$\cM^{ (j) }(v_j)$ outputs~$-1$ or~$1$, uniformly at random. 
    Now, \rndmzrPropII and \rndmzrPropIII are clearly satisfied with~$\cGap = \paren{ e^{\eps / k} - 1} / \paren{ e^{\eps / k} + 1}$.
    Finally, it can be verified that \rndmzrPropI is satisfied, with~$\pMin = 2^{ - (L - k) } \cdot \paren{ e^{\eps / k} + 1}^{-k}$ and~$\pMax = e^{\eps} \cdot 2^{ - (L - k) } \cdot \paren{ e^{\eps / k} + 1}^{-k}$.
    \end{example}
    This simple randomizer does not perform pre-computation in the initialization phase, and for each~$j \in \bracket{L}$, the result of~$\cM^{ (j) }(v_j)$ does not depend on historical inputs or outputs.

Intuitively, \rndmzrPropI states that, regardless of the input sequence, each sequence in~$\set{-1, 1}^L$ is output by~$\cM$ with similar probability (up to a factor of $e^\eps$), by which the client-side algorithm is differentially private.
\rndmzrPropII ensures that $\cM$ preserves each non-zero coordinate of the input sequences with a common probability; indeed,~$\cGap$ is the common difference between the coordinate preservation and reversal probabilities.
\rndmzrPropIII requires that~$\cM$ outputs~$1$ and~$-1$ with equal probability for each zero coordinate. 
Based on \rndmzrPropII and \rndmzrPropIII: 
\begin{observation} For~$v_j \in \set{-1, 0, 1}$, it always holds that 
    \begin{equation} \label{equa: unbias of each coordinate of M}
        \E{ \paren{ \cGap }^{-1} \cdot \cM^{ (j) }(v_j) }
        = v_j.
    \end{equation}
\end{observation}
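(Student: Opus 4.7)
The plan is a short case analysis on $v_j \in \set{-1, 0, 1}$, using \rndmzrPropIII for the zero case and \rndmzrPropII together with the fact that $\cM^{(j)}(v_j) \in \set{-1, 1}$ for the non-zero cases. Since the output alphabet has only two elements, each expectation collapses to a two-point computation and then just needs to be rescaled by $(\cGap)^{-1}$.

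If $v_j = 0$, Equation~(\ref{equa: equal output}) immediately gives $\P{\cM^{(j)}(v_j) = 1} = \P{\cM^{(j)}(v_j) = -1} = 1/2$, hence $\E{\cM^{(j)}(v_j)} = 0$; multiplying by $(\cGap)^{-1}$ preserves $0$, which equals $v_j$. If instead $v_j \in \set{-1, 1}$, I would combine Equation~(\ref{equa: uniform gap}) with the normalization $\P{\cM^{(j)}(v_j) = v_j} + \P{\cM^{(j)}(v_j) = -v_j} = 1$ (valid because $\cM^{(j)}(v_j)$ takes values only in $\set{-1, 1}$) to solve the resulting $2 \times 2$ linear system:
$$
\P{\cM^{(j)}(v_j) = v_j} = \frac{1 + \cGap}{2}, \qquad \P{\cM^{(j)}(v_j) = -v_j} = \frac{1 - \cGap}{2}.
$$
Then
$$
\E{\cM^{(j)}(v_j)} \;=\; v_j \cdot \frac{1 + \cGap}{2} \;+\; (-v_j) \cdot \frac{1 - \cGap}{2} \;=\; v_j \cdot \cGap,
$$
and multiplying by $(\cGap)^{-1}$ yields $v_j$, completing the case.

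The argument is purely mechanical, so there is no real obstacle. The only subtlety worth flagging is that \rndmzrPropII is stated for each $j$ as an unconditional probability, not conditional on prior outputs $\cM^{(1)}(v_1), \dots, \cM^{(j-1)}(v_{j-1})$; thus the probabilities used above are exactly those given by Equation~(\ref{equa: uniform gap}), and no further averaging over the history is required. For the same reason, $\cGap$ is well-defined and uniform across coordinates (this being guaranteed by \rndmzrPropII), so the constant $(\cGap)^{-1}$ can be pulled out of the expectation without any $j$-dependence.
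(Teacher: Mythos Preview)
Your proof is correct and follows exactly the route the paper intends: the observation is stated immediately after Properties~II and~III with the remark ``Based on \rndmzrPropII and \rndmzrPropIII,'' and your case analysis is precisely the two-point computation those properties were set up to enable. The paper gives no further detail, so your write-up is in fact more explicit than the original.
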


Since~$\cM^{ (j) }(v_j)$ is in~$[-1, 1]$; multiplying it by~$\paren{ \cGap }^{-1}$ amplifies the range to~$\bracket{- \paren{ \cGap }^{-1}, \paren{ \cGap }^{-1} }$. 
As will be explained in Section~\ref{subsec: server side}, this range plays a vital role in utility analysis. 
The smaller this range, the better utility we can obtain. Since~$\eps \le 1$, the randomizer in Example~\ref{example: naive randomizer} guarantees~$\paren{ \cGap }^{-1} = \paren{ e^{\eps / k} + 1} / \paren{ e^{\eps / k} - 1} \in O \paren{k / \eps}$, or equivalently,~$\cGap \in \Omega \paren{ \eps / k }$.
Our protocol relies on a randomizer, \ourRandomizer, with a~$\sqrt{k}$-better guarantee than this naive one.

\begin{theorem}[\ourRandomizer] \label{theorem: lower bound on c-gap}
    There is a randomizer~$\cM$, called \ourRandomizer,  that satisfies the above conditions with
    \begin{equation}
        \cGap \in \Omega \paren{ \eps / \sqrt{ k } }.
    \end{equation}
\end{theorem}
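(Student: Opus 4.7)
The plan is to decouple the zero and non-zero input coordinates. For each coordinate $j$ with $v_j = 0$, let the randomizer output a uniform bit in $\{-1, 1\}$; this satisfies \rndmzrPropIII automatically and contributes nothing either to the joint ratio in \rndmzrPropI or to the coordinate bias in \rndmzrPropII. It therefore suffices to construct a joint randomizer $\tilde{\cR} : \{-1, 1\}^k \to \{-1, 1\}^k$ for the at most $k$ non-zero coordinates, satisfying an $e^\eps$ ratio on its joint distribution and a uniform marginal gap of $\Omega(\eps/\sqrt{k})$ per coordinate.

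For $\tilde{\cR}$, I would adapt the two-level construction of \citeauthor{BNS19} and define
\[
    \P{\tilde{\cR}(b) = s} \;=\;
    \begin{cases} \pMax' & \text{if } \langle b, s \rangle \ge \tau, \\ \pMin' & \text{otherwise},\end{cases}
\]
with ratio $\pMax' / \pMin' = e^\eps$ and threshold $\tau$ chosen (e.g.\ $\tau = 1$ for odd $k$) so that exactly half of the sequences $s \in \{-1, 1\}^k$ lie in the ``high'' region. \rndmzrPropI is then immediate because the ratio of output probabilities is at most $e^\eps$ for any inputs $b, b'$, and the coordinate-wise sign-flip bijection $s \mapsto s \odot b$ shows that the number of high-mass sequences is independent of $b$; hence the normalization $|\{s : \langle b, s \rangle \ge \tau\}| \pMax' + |\{s : \langle b, s \rangle < \tau\}| \pMin' = 1$ is consistent across $b$, pinning down $\pMax'$ and $\pMin'$ uniquely.

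The main task, and the step I expect to be the hardest, is verifying \rndmzrPropII with $\cGap \in \Omega(\eps / \sqrt{k})$ by purely structural (rather than concentration) arguments. By the bijection $s \mapsto s \odot b$, the law of $(b_i \tilde{b}_i)_{i \in [k]}$ is the same for every $b$ and (by coordinate-permutation symmetry) every index $i$, so the gap is a single number that can be computed by fixing $b = \vec{1}$ and $i = 1$. Expanding,
\[
    \cGap \;=\; (A - C)(\pMax' - \pMin'),
\]
where $A - C$ counts $(k-1)$-sequences whose sum equals $0$ (or the appropriate parity-adjusted value), i.e.\ $\binom{k-1}{\lfloor (k-1)/2 \rfloor}$. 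Combining the exact identity $\binom{k-1}{\lfloor (k-1)/2 \rfloor} = \Theta(2^{k-1}/\sqrt{k})$ with $\pMax' - \pMin' = \Theta(\eps \cdot 2^{-k})$, which follows from $\pMax' + \pMin' = 2^{1-k}$ together with $\pMax' / \pMin' = e^\eps$ and $\eps \le 1$, yields $\cGap \in \Theta(\eps/\sqrt{k})$. The delicate part is making sure the equality (not merely inequality) in \rndmzrPropII truly holds uniformly, which forces the symmetric threshold choice and careful tie-breaking for parity.

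Finally, to make $\tilde{\cR}$ online, the client's $\cM.\init$ pre-samples a single reference output $s^\ast \sim \tilde{\cR}(\vec{1})$ and stores it. As the input stream arrives, the client keeps a counter $j$ of non-zero entries seen so far; upon receiving the $j$-th non-zero value $v \in \{-1, 1\}$, it releases $v \cdot s^\ast_j$. Invariance of $\tilde{\cR}$ under the bijection $s \mapsto s \odot b$ ensures that the random vector $(v_{j_1} s^\ast_1, \ldots, v_{j_k} s^\ast_k)$ has exactly the same distribution as $\tilde{\cR}$ applied to the eventual non-zero subsequence, so privacy, the marginal gap, and uniform randomness on zero coordinates all transfer to the online protocol, completing the construction.
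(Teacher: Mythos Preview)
Your proposal is correct and takes a genuinely different---and considerably simpler---route from the paper.  The paper constructs $\tilde{\cR}$ by running $k$ independent randomized responses with per-coordinate budget $\tildeeps = \eps/(5\sqrt{k})$ and then resampling uniformly outside an ``annulus'' of Hamming distances $[\NewLb,\NewUb]$ around $b$; proving \rndmzrPropI and the $\Omega(\eps/\sqrt{k})$ gap then requires careful bounds on $g(i)=p^i(1-p)^{k-i}$ across the annulus and a somewhat intricate decomposition of the complement into three zones (Lemmas~5.1 and~5.2).  Your two-level law $\P{\tilde{\cR}(b)=s}\in\{\pMin',\pMax'\}$ depending only on the sign of $\langle b,s\rangle-\tau$ gives \rndmzrPropI for free and reduces \rndmzrPropII to the single exact identity $\cGap=(A-C)(\pMax'-\pMin')$ with $A-C=\binom{k-1}{\lfloor (k-1)/2\rfloor}=\Theta(2^k/\sqrt{k})$, which is cleaner than the paper's argument.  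Two small remarks: (i)~the attribution to \citeauthor{BNS19} is slightly off---their construction is in fact the annulus-with-resampling scheme the paper adapts, not a two-level distribution; (ii)~your worry about ``exactly half'' and parity is not essential, since for any threshold $\tau$ the construction is coordinate-permutation symmetric (so $\cGap$ is the same across coordinates, giving the equality in \rndmzrPropII), and $A-C$ is still a near-central binomial coefficient of order $\Theta(2^k/\sqrt{k})$; hence the even-$k$ case needs no special tie-breaking.  Your online pre-computation step---pre-sample $s^\ast\sim\tilde{\cR}(\vec{1})$ and release $v\cdot s^\ast_{\nnz}$---is exactly what the paper does, and your observation that the bijection $s\mapsto s\odot b$ transfers the joint law is the same invariance the paper uses (Section~5.3); the extension to support size $<k$ also goes through verbatim.
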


The design of \ourRandomizer is non-trivial, and we defer it to Section~\ref{sec: our randomizer}.
For now, we apply it to complete the design of our longitudinal data tracking protocol.

\vspace{3mm}
\noindent {\bf Client-Side Algorithm.}
The client-side algorithm is described in Algorithm~\ref{algo: client}.
Each user~$u$ first samples an integer~$h_u \in \IntSet{0}{\log d}$ uniformly at random and reports it to the server.
Then~$u$ initializes a zero vector~$\pmbomega_u$, of size~$L \doteq d / 2^{h_u} = \card{\dyadicIntervalSet{ h_u } }$ for the dyadic intervals with order~$h_u$.
They also initialize a randomizer~$\cM$, with parameters~$L$,~$k$ and~$\eps$.

The user reports to the server at time~$t \in [d]$ if and only if~$2^{h_u}$ divides~$t$.
In particular,
the user~$u$ computes partial sum~$\partialsum{u}{ \dyadicInterval{h}{j} }$ associated with the~$j \doteq t / 2^{h_u}$-th dyadic interval~$\dyadicInterval{h_u}{j}$,
with order $h_u$.
Note that each multiple of~$2^{h_u}$ is the first time that~$u$ has all data $\UsrDerivative{u}\bracket{ t' }$ needed to compute~$\partialsum{u}{\dyadicInterval{h_u}{j}}$, since
by definition:
$$
    \dyadicInterval{h_u}{j} 
        = 
        \set{ \paren{j - 1} \cdot 2^{h_u} + 1, \ldots, j \cdot 2^{h_u} }
        =
        \set{t - 2^{h_u} + 1, \ldots, t}\,,
$$
and~$\partialsum{u}{\dyadicInterval{h_u}{j}} = \sum_{ t' \in \dyadicInterval{h_u}{j} } \UsrDerivative{u}\bracket{ t' }$.
This partial sum is then perturbed by the randomizer $\cM^{(j)}$.
The perturbed value is stored 
in~$\pmbomega_u[j]$ and reported to the server.

\begin{algorithm}[!t]
    \caption{Client $\algoClient{}$}
    \label{algo: client}
    \begin{algorithmic}[1]
        \Require User Data~$\UsrDerivative{u}$; Privacy parameter $\eps$.
        \State\label{line: report h-u}{\bf report}~$h_u \uniffrom \IntSet{0}{\log d}$. \Comment{Send $h_u$ to the server}
        \State Set $L \leftarrow d / 2^{h_u}$ and~$\pmbomega_u \leftarrow \set{0}^{ L }$.
        \State $\cM.\init(L, k, \eps)$
        \For{each time $t \in [d]$}\Comment{User data~$\UsrDerivative{u}[t]$~arrives}
        \If{$2^{h_u}$ divides~$t$}
        \State $j \leftarrow t / 2^{h_u}$
        \State Compute the~$j^{(th)}$ partial sum~$\partialsum{u}{ \dyadicInterval{h}{j} }$.
        \State \label{line: algo 1 last line}
        {\bf report}~$\pmbomega_u [ j ] \leftarrow \cM^{(j)} \paren{ \partialsum{u}{ \dyadicInterval{h}{j} } }$
        \EndIf
        \EndFor
    \end{algorithmic}
\end{algorithm}

\noindent {\bf Privacy Guarantee.}
The privacy guarantee of Algorithm~\ref{algo: client} is an immediate consequence of \rndmzrPropI.

\begin{theorem} \label{theorem: client basic is private}
    $\algoClient$ is $\eps$-differentially private.
\end{theorem}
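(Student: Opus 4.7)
The plan is to reduce the privacy of $\algoClient{}$ to \rndmzrPropI of the randomizer $\cM$. I would fix two arbitrary user data sequences $\UsrData{u}$ and $\UsrData{u}'$ (each changing at most $k$ times across the $d$ periods), fix an arbitrary possible transcript $(h, w_1, \ldots, w_L)$ of the messages sent to the server (that is, the value $h$ reported on line~\ref{line: report h-u} and the sequence $\pmbomega_u = (w_1, \ldots, w_L) \in \set{-1, 1}^L$ with $L = d/2^h$ reported on line~\ref{line: algo 1 last line}), and show
\begin{equation*}
    \P{\algoClient{}(\UsrData{u}) = (h, w_1, \ldots, w_L)}
    \le e^\eps \cdot \P{\algoClient{}(\UsrData{u}') = (h, w_1, \ldots, w_L)}.
\end{equation*}

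First I would observe that the choice of $h_u$ on line~\ref{line: report h-u} is completely independent of the user's data, so both probabilities factor as $\tfrac{1}{1+\log d}$ times the conditional probability of the subsequent transmissions given $h_u = h$. It thus suffices to compare these conditional probabilities. Conditioned on $h_u = h$, the user deterministically derives from $\UsrData{u}$ (or from $\UsrData{u}'$) the sequence of partial sums $v_j \doteq \partialsum{u}{\dyadicInterval{h}{j}}$ for $j \in \bracket{L}$, and the transmitted sequence is exactly $(\cM^{(1)}(v_1), \ldots, \cM^{(L)}(v_L))$. By Observation~\ref{observation: basic partial sum properties}, each $v_j \in \set{-1, 0, 1}$; and by Observation~\ref{observation: upper bound for non zero elements each order} combined with the hypothesis that the data changes at most $k$ times, at most $k$ of the $v_j$ are non-zero, so the derived sequence is $k$-sparse. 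The same is true for the sequence $v'_1, \ldots, v'_L$ derived from $\UsrData{u}'$.

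At this point I would apply \rndmzrPropI directly: both
\begin{equation*}
    \P{\cM^{(1)}(v_1) = w_1, \ldots, \cM^{(L)}(v_L) = w_L}
    \quad \text{and} \quad
    \P{\cM^{(1)}(v'_1) = w_1, \ldots, \cM^{(L)}(v'_L) = w_L}
\end{equation*}
lie in $[\pMin, \pMax]$ with $\pMax / \pMin \le e^\eps$, so the ratio is at most $e^\eps$. Multiplying by the common $\tfrac{1}{1+\log d}$ factor and summing/integrating over any measurable subset $Y$ of output transcripts yields the differential privacy inequality of Definition~\ref{def: Differential Privacy}.

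The only subtlety, and what I would highlight as the key verification step, is that the $k$-sparsity hypothesis in \rndmzrPropI is indeed met by the derived partial-sum vectors for every admissible user input; this is where Observation~\ref{observation: upper bound for non zero elements each order} and the range guarantee of Observation~\ref{observation: basic partial sum properties} are essential. Beyond this, the argument is a straightforward post-processing/composition calculation; no concentration or utility arguments are needed.
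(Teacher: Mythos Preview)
Your proposal is correct and follows essentially the same approach as the paper's proof: factor out the data-independent choice of $h_u$, then apply \rndmzrPropI to the conditional probability of the transmitted sequence $\pmbomega_u$ and use $\pMax/\pMin \le e^\eps$ to bound the ratio for every atomic outcome. If anything, your write-up is slightly more careful in explicitly invoking Observations~\ref{observation: upper bound for non zero elements each order} and~\ref{observation: basic partial sum properties} to verify that the derived partial-sum sequence is $k$-sparse and $\set{-1,0,1}$-valued before applying \rndmzrPropI, a hypothesis the paper's proof uses implicitly.
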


\begin{proof}[Proof of Theorem~\ref{theorem: client basic is private}]
    Consider a user~$u \in \bracket{n}$ with data~$\UsrDerivative{u}$.
    For each~$h \in \IntSet{0}{\log d}$, we have~$\P{ h_u = h}$ $= 1 / \paren{1 + \log d}$,
    and, conditioned on~$h_u = h$, we have $L = d / 2^h$.
    Via Inequality~(\ref{ineq: randomizer output prob range}), there exist~$\pMin, \pMax \in \paren{0, 1}$, s.t., $\pMax \le e^{\eps} \cdot \pMin$ and that for each~$w \in \{-1, 1\}^L$, it holds that
    $$
        \P{ \pmbomega_u[1] = w_1, \ldots, \pmbomega_u[L] = w_L \mid h_u = h } \in \bracket{ \pMin, \pMax}.
    $$
    Suppose that data~$\UsrDerivative{u}$ of user~$u$ is instead~$\UsrDerivative{u}'$, with other inputs unchanged.
    Running~$\algoClient$ based on~$\UsrDerivative{u}'$, we
    have~$h_u'$ and~$\pmbomega_u'$
    and thus~$\P{ h_u' = h }$ $= 1 / (1 + \log d)$ and that
    $$
        \P{ \pmbomega_u'[1] = w_1, \ldots, \pmbomega_u'[L] = w_L \mid h_u' = h } \in \bracket{ \pMin, \pMax}.
    $$
    Since~$\pMax \le e^{\eps} \cdot \pMin$, the ratio of outcome probabilities is
    $$
        e^{-\eps} \le 
        \frac
        {
        \P{ h_u = h, \pmbomega_u[1] = w_1, \ldots, \pmbomega_u[L] = w_L }
        }
        {
        \P{ h_u' = h, \pmbomega_u'[1] = w_1, \ldots, \pmbomega_u'[L] = w_L }
        }
        \le e^{\eps}
        \,.
    $$
    Since the output space of~$\algoClient$ is discrete, it satisfies Inequality~(\ref{ineq: def private algo}), and therefore is differentially private (Definition~\ref{def: Differential Privacy}). 
\end{proof}

\subsection{Server Side}
\label{subsec: server side}

\noindent {\bf Server-Side Algorithm.}
The server-side algorithm is described in Algorithm~\ref{algo: server}.
At the beginning of the algorithm, according to~$\algoClient$ (Algorithm~\ref{algo: client}, line~\ref{line: report h-u}), we assume that the server has received the~$n$ sampled orders~$h_1, \ldots, h_n$, one from each user.
The server partitions the users into~$1 + \log d$
subsets~$\cU_0, \ldots, \cU_{\log d}$, such that~$\cU_h \doteq \{ u \in \bracket{n} : h_u = h \}$ consists of the subset of users whose sampled orders equal~$h$.
Observe that each user~$u \in \cU_h$ reports their perturbed partial sums~$\pmbomega_u[1], \pmbomega_u[2]$, $\ldots, \pmbomega_u[d / 2^h]$, at times~$2^h, 2 \cdot 2^h, \ldots, d$, respectively.

\begin{algorithm}[!t]
    \caption{Server $\algoServer$}
    \label{algo: server}
    \begin{algorithmic}[1]
        \Require Reports $(h_u, \pmbomega_u)$ from the users.
        \State $\cU_h \leftarrow \{ u \in \bracket{n} : h_u = h \}, \forall h \in \IntSet{0}{\log d}$.
        \For{each time $t \in [d]$}
            \For{$h \in \IntSet{0}{\log d}$ s.t.~$2^h$ divides~$t$}
                \State $j \leftarrow t / 2^h$
                \State 
                \label{line: algo: server: update partial sum}
                $\EstPartialsum{ \dyadicInterval{h}{j} } \leftarrow \sum_{u \in \cU_h} (1 + \log d) \cdot  \paren{\cGap}^{-1} \cdot \pmbomega_u [ j ]$
            \EndFor
            \State 
            \label{line: algo: server: output estimate}
            {\bf output} $\EstPorpulationSum{t} \leftarrow \sum_{ \dyadicInterval{h}{j} \in \dyadicDecomposition{t} } \EstPartialsum{ \dyadicInterval{h}{j} }$
        \EndFor
    \end{algorithmic}
\end{algorithm}

Based on user reports, for each $u \in \bracket{n}, h \in \IntSet{0}{\log d}, j \in \bracket{d / 2^h}$, consider this estimator~$z_u[h, j]$ of~$\partialsum{u}{\dyadicInterval{h}{j}}$. If~$h_u = h$, then~$z_u[h, j] = (1 + \log d) \cdot  \paren{\cGap}^{-1} \cdot \pmbomega_u [ j ]$; otherwise,~$z_u[h, j] = 0$.
Since~$\pmbomega_u [j] \in \set{-1, 1}$, it holds that~$z_u[h, j] \in (1 + \log d) \cdot  \paren{\cGap}^{-1} \cdot \bracket{ - 1, 1 }$.
Further, the expectation of~$z_u[h, j]$ is given by 
$$
    \E{z_u[h, j]} = \P{ h_u = h } \cdot (1 + \log d) \cdot \E{  \paren{\cGap}^{-1} \cdot \pmbomega_u [ j ] }\,.
$$
Noting that~$\pmbomega_u [ j ] = \cM^{(j)} \paren{ \partialsum{u}{ \dyadicInterval{h}{j} } }$, and via Equation~(\ref{equa: unbias of each coordinate of M}), we see
\begin{equation}
    \E{z_u[h, j]} = \partialsum{u}{ \dyadicInterval{h}{j} }\,.
\end{equation}

Via linearity of expectation,~$\EstPartialsum{ \dyadicInterval{h}{j} } = \sum_{u \in \bracket{n}} z_u[h, j]$ is an unbiased estimator of~$\partialsum{\dyadicInterval{h}{j}}$.
Keeping only the non-zero terms, and replacing~$z_u[h, j]$ with its definition, we have
$$
    \EstPartialsum{ \dyadicInterval{h}{j} } = \sum_{u \in \cU_h} (1 + \log d) \cdot  \paren{\cGap}^{-1} \cdot \pmbomega_u [ j ]\,,
$$
which justifies the update rule of~$\EstPartialsum{ \dyadicInterval{h}{j} }$ (Algorithm~\ref{algo: server}, line~\ref{line: algo: server: update partial sum}).
Finally, replacing~$\partialsum{\dyadicInterval{h}{j}}$ with~$\EstPartialsum{ \dyadicInterval{h}{j} }$ in Equation~(\ref{equa: decomposition of porpulatioin sum}) gives an unbiased estimator,~$\EstPorpulationSum{t} = \sum_{ \dyadicInterval{h}{j} \in \dyadicDecomposition{t} } \EstPartialsum{ \dyadicInterval{h}{j} }$, of~$\PorpulationSum{t}$ (Algorithm~\ref{algo: server}, line~\ref{line: algo: server: output estimate}).

\noindent {\bf Utility Guarantee.}
We conclude that~$\EstPorpulationSum{t}$ is an unbiased estimator of~$\PorpulationSum{t}$; how often is it accurate?

\begin{lemma} \label{lemma: error of single estimator for basic private algorithm}
    For each~$t \in [d]$, assuming that $\eps \le 1$ and that $\sqrt{n} \ge \paren{\cGap}^{-1} \cdot \log d \cdot \sqrt{ \ln \paren{1 / \beta'} }$, with probability at least $1 - \beta'$, 
    the estimates~$\EstPorpulationSum{t}$ computed as in Algorithm~\ref{algo: server} satisfy
    \vspace{-1mm}
    $$
        | \EstPorpulationSum{t} - \PorpulationSum{t} | \in  O \left( \paren{\cGap}^{-1} \cdot \log d \cdot \sqrt{  n  \cdot \ln \frac{1}{\beta'} } \right)
    $$
    \vspace{-2mm}
\end{lemma}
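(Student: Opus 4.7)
The plan is to express the error $\EstPorpulationSum{t} - \PorpulationSum{t}$ as a sum of $n$ independent, mean-zero, bounded random variables --- one per user --- and then apply Hoeffding's inequality. Concretely, for each $u \in \bracket{n}$ I would define
\[
    Y_u \doteq \sum_{\dyadicInterval{h}{j} \in \dyadicDecomposition{t}} \paren{ z_u[h, j] - \partialsum{u}{\dyadicInterval{h}{j}} },
\]
using the per-user summand $z_u[h, j]$ introduced just before Algorithm~\ref{algo: server}. Because each user's sampled order $h_u$ and fresh copy of $\cM$ are drawn independently across users, the $Y_u$ are independent. They are mean-zero, since $\E{z_u[h,j]} = \partialsum{u}{\dyadicInterval{h}{j}}$ has already been verified, and their sum equals $\EstPorpulationSum{t} - \PorpulationSum{t}$ by linearity together with Equation~(\ref{equa: decomposition of porpulatioin sum}).

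The structurally delicate step is to bound $|Y_u| \in O(\log d \cdot \paren{\cGap}^{-1})$ uniformly in $u$ and the randomness. The key observation is that by Fact~\ref{fact: nodes be in different level}, the dyadic intervals in $\dyadicDecomposition{t}$ have pairwise distinct orders, so, conditioned on $h_u$, at most one $\dyadicInterval{h}{j} \in \dyadicDecomposition{t}$ can satisfy $h = h_u$; for every other pair $z_u[h, j] = 0$ by definition. Thus the random part of $Y_u$ collapses to a single term, whose magnitude is at most $(1 + \log d) \cdot \paren{\cGap}^{-1}$ because $|\pmbomega_u[j]| = 1$. The deterministic part $\sum_{\dyadicInterval{h}{j} \in \dyadicDecomposition{t}} \partialsum{u}{\dyadicInterval{h}{j}}$ telescopes to $\UsrData{u}\bracket{t} \in \set{0, 1}$. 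Hence $|Y_u| \le (1 + \log d) \cdot \paren{\cGap}^{-1} + 1 \in O(\log d \cdot \paren{\cGap}^{-1})$, using that $\cGap \in (0,1)$ so $\paren{\cGap}^{-1} \ge 1$.

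With this uniform bound $B \in O(\log d \cdot \paren{\cGap}^{-1})$, Hoeffding's inequality applied to $\sum_u Y_u$ yields
\[
    \P{ \card{ \EstPorpulationSum{t} - \PorpulationSum{t} } > \alpha } \le 2 \exp\paren{ - \alpha^2 / \paren{ 2 n B^2 } }\,,
\]
and setting the right-hand side to at most $\beta'$ and solving for $\alpha$ delivers the claimed $\alpha \in O( \paren{\cGap}^{-1} \cdot \log d \cdot \sqrt{ n \ln(1/\beta') } )$. The hypothesis $\sqrt{n} \ge \paren{\cGap}^{-1} \cdot \log d \cdot \sqrt{ \ln(1/\beta') }$ is not needed for the concentration step itself; it merely ensures that the derived error bound does not exceed the trivial bound $n$. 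The step I expect to be the main obstacle is precisely the per-user bound on $|Y_u|$: the fact that only one of the $|\dyadicDecomposition{t}| \le 1 + \log d$ summands is random --- which saves a factor of $\sqrt{\log d}$ over a naive Hoeffding-per-interval analysis --- relies essentially on the distinct-orders property of prefix decompositions (Fact~\ref{fact: nodes be in different level}), and it is this reduction to a single non-zero estimator per user that must be argued carefully.
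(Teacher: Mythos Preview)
Your proposal is correct and follows essentially the same approach as the paper: define per-user summands, observe independence, use Fact~\ref{fact: nodes be in different level} (distinct orders in $\dyadicDecomposition{t}$) to argue that at most one $z_u[h,j]$ is non-zero per user, and apply Hoeffding. The only cosmetic difference is that the paper defines $Y_u \doteq \sum_{\dyadicInterval{h}{j} \in \dyadicDecomposition{t}} z_u[h,j]$ without centering, obtaining the clean bound $|Y_u| \le (1+\log d)\cdot(\cGap)^{-1}$ and then using $\E{\EstPorpulationSum{t}} = \PorpulationSum{t}$ when invoking Hoeffding, whereas you center $Y_u$ up front and carry the extra $+1$ from the deterministic part; the big-$O$ conclusion is unaffected.
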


We first prove our main Theorem~\ref{theorem: property of further improved private algorithm} based on Theorems~\ref{theorem: lower bound on c-gap} and Lemma~\ref{lemma: error of single estimator for basic private algorithm}, then justify Lemma~\ref{lemma: error of single estimator for basic private algorithm}.

\begin{proof}[Proof of Theorem~\ref{theorem: property of further improved private algorithm}]
    Via Theorem~\ref{theorem: lower bound on c-gap}, we know that~$\cGap \in \Omega \paren{ \eps / \sqrt{k} }$.
    Therefore, $\paren{\cGap}^{-1} \in O \paren{ \sqrt{k} / \eps }$.
    Applying Lemma~\ref{lemma: error of single estimator for basic private algorithm} with~$\beta' = \beta / d$, and via union bound, we conclude that, with probability at least~$1 - \beta$,
    \vspace{-1mm}
    $$
        \max_{t \in \bracket{d} } | \EstPorpulationSum{t} - \PorpulationSum{t} | 
        \in  
        O \left( \frac{ \log d } {\eps} \cdot \sqrt{  k \cdot n  \cdot \ln \frac{d}{\beta} }  \right).
    $$
    \vspace{-2mm}
\end{proof}

\vspace{-2mm}
\begin{proof}[Proof of Lemma~\ref{lemma: error of single estimator for basic private algorithm}]
    First, rewrite
    $$
        \begin{aligned}
            \EstPorpulationSum{t}
             & = \sum_{ \dyadicInterval{h}{j} \in \dyadicDecomposition{t} } \EstPartialsum{ \dyadicInterval{h}{j} }
            = \sum_{ \dyadicInterval{h}{j}  \in \dyadicDecomposition{t} } \sum_{u \in \bracket{n} } z_u [h, j]\,.                
        \end{aligned}
    $$

    \noindent For each $u \in \bracket{n}$, define $Y_u = \sum_{ \dyadicInterval{h}{j} \in \dyadicDecomposition{t} }  z_u [h, j]$.
    Exchanging the order of summation gives~$\EstPorpulationSum{t} = \sum_{u \in \bracket{n} } Y_u$.
    Clearly the~$Y_u$ are independent.
    We \emph{claim} (proven below): $Y_u \in (1 + \log d) \cdot  \paren{\cGap}^ \cdot [-1, 1]$.
    Then applying  Hoeffding's Inequality (Corollary~\ref{corollary: confident interval of hoeffding}), and the fact that $\E{ \EstPorpulationSum{t} } = \PorpulationSum{t}$, we see that with probability at most~$\beta'$,
    \begin{equation}
        \card{ \EstPorpulationSum{t} - \PorpulationSum{t} }
        \ge (1 + \log d) \cdot  \paren{\cGap}^{-1} \cdot \sqrt{ 2 n \cdot \ln \frac{2}{\beta'} }\,.
    \end{equation}

    We prove the claim, that $Y_u \in (1 + \log d) \cdot  \paren{\cGap}^{-1} \cdot [-1, 1]$.
    By Fact~\ref{fact: nodes be in different level}, dyadic intervals in $\dyadicDecomposition{t}$ have distinct orders.
    By definition,~$z_u[h, j]$ is non-zero exactly when~$h_u = h$,
    so among all $\dyadicInterval{h}{j}  \in \dyadicDecomposition{t}$, there is at most one non-zero $z_u[h, j]$. Its value either $-(1 + \log d) \cdot  \paren{\cGap}^{-1}$ or $(1 + \log d) \cdot  \paren{\cGap}^{-1}$.
\end{proof}

\section{Randomizer} 
\label{sec: our randomizer}

In this section, we present a randomizer, the \ourRandomizer, denoted briefly by~$\cM$, that satisfies 
Theorem~\ref{theorem: lower bound on c-gap}. 
The \ourRandomizer is based on two techniques, \emph{composition for randomized responses} (for non-zero coordinates) and \emph{pre-computation for the composition}.

\subsection{Overview}

The input to~$\cM$ is a $k$-sparse sequence $v = \paren{v_1, \ldots, v_L} \in \{-1, 0, 1\}^L$, and the output is sequence~$\cM^{(1)} (v_1), \ldots, \cM^{(L)} (v_L) \in \{-1, 1\}^L$.
Denote the support of~$v$ as~$\supp{ v } \doteq \set{ j \in   [L] : v_j \neq 0 }$.

\vspace{2mm}
\noindent {\bf Zero Coordinates.}
For each~$j \notin \supp{v}$,~$\cM^{ (j) } (v_j)$ outputs~$-1$ and $1$ uniformly at random. Hence \rndmzrPropIII is trivially satisfied. 

\vspace{2mm}
\noindent {\bf Non-Zero Coordinates.}
The presentation of \ourRandomizer to handle non-zero coordinates follows three steps.

\begin{itemize}
    \item {\it Offline Input with Fixed Support Size.} 
    First we assume that all coordinates of~$v$ are inputted to~$\cM$ simultaneously, and that the~$v$ contains exactly~$k$ non-zero coordinates. 
    
    \item {\it Online Input with Fixed Support Size.} 
    We convert the protocol to online, where each coordinate~$v_j$ arrives one by one, and~$\cM$ outputs the perturbed value of~$v_j$ immediately. 
    
    \item {\it Online Input with Bounded Support Size.} 
    We show that
    \emph{Online Input with Fixed Support Size} protocol provides the same guarantees even when~$v$ has support less than~$k$.
\end{itemize}

\noindent For each step, we need to show that the~$\cM$ constructed satisfies \rndmzrPropI, \rndmzrPropII with~$\cGap \in \Omega \paren{ \eps / \sqrt{k}}$.

\subsection{Offline Input with Fixed Support Size}

By assumption, $v$ contains~$k$ non-zero coordinates.
$\cM$ invokes a subroutine,~$\tilde \cR : \set{-1, 1}^k \rightarrow \set{-1, 1}^k$ to perturb these coordinates. 
$\tilde \cR$ is a \emph{composed randomizer}: instead of perturbing each coordinate independently, it adds correlated noise to the coordinates. 
The building block of~$\tilde{\cR}$ is a basic randomizer~$\cR$. 
The pseudo-codes of both~$\cR$ and~$\tilde{\cR}$ are described in Algorithm~\ref{algo: our randomizer}.

\vspace{1mm}
\noindent{\bf Basic Randomizer~$\cR$}~\citep{W65}.
For each~$\zeta \in \set{-1, 1}$.
\begin{equation}
    \label{equa: def of basic randomizer}
    \cR(\zeta) = \begin{cases}
        \zeta,      & \text{w.p.}\, e^{ \tildeeps } / \paren{ e^{ \tildeeps } + 1 } \\
        -\zeta,     & \text{w.p.}\, 1 / \paren{ e^{ \tildeeps } + 1 }
    \end{cases}, 
\end{equation} 
where~$\tildeeps$ depends on~$\eps$, defined later. 

\vspace{1mm}
\noindent{\bf Composed Randomizer~$\tilde \cR$.}
For an input~$b \in \set{-1, 1}^k$, it first applies~$\cR$ independently to each coordinate of $b$.
Denote the result by~$b' = \cR(b) \doteq \paren{\cR(b_1), \ldots, \cR(b_k) }$. 
Denote $p \doteq 1 / ( e^{ \tildeeps } + 1)$.
In expectation, 
$\norm{b' - b}_0 = kp$.
Let
\begin{equation}
    \label{equa: setting of Lb and Ub}
    \NewLb \doteq k p - 2 \sqrt{k},\,\quad
    \NewUb \doteq  \frac{k}{ \tildeeps } \cdot \ln \frac{2 e^{\tildeeps} }{ e^{\tildeeps} + 1 }\,.
\end{equation}

\noindent Next, define the set of sequences whose~$\ell_0$-distance (Hamming distance) to $b$ is within~$\IntSet{\NewLb}{\NewUb}$.
\begin{definition}[Annulus] Given $b \in \set{-1, 1}^k$, denote
    $$
        \Annulus{b}
        \doteq
        \set{ s  \in \set{-1, 1}^k : \norm{ b - s }_0 \in     \IntSet{\NewLb}{\NewUb}
        }.
    $$
\end{definition}

If~$b' \notin \Annulus{b}$,~$\tilde{\cR}$ replaces it with a uniform sample from~$\set{-1, 1}^k \setminus \Annulus{b}$. 
Finally,~$\tilde{\cR}$ outputs~$b'$. 

Randomizer~$\tilde \cR$ has the following privacy and utility guarantees, the analysis of which is deferred to Section~\ref{subsec: analysis of Lemmas of Composed Randomizer}. 

\begin{lemma}
    \label{lemma: probability of modified composed randomizer}
    Suppose that~$\eps \le 1$ and let~$\tildeeps = \eps / \paren{ 5 \sqrt{k} }$.
    There exist~$\pMin', \pMax' \in \paren{0, 1}$ with~$\pMax' \le e^\eps \cdot \pMin'$, such that for each input sequence~$b \in \set{-1, 1}^k$, and each sequence~$s \in \set{-1, 1}^k$,
    \begin{equation}
        \label{ineq: output range of compose randomizer}
        \P{ \tilde \cR(b) = s } \in \bracket{ \pMin', \pMax' }\,. 
    \end{equation}
    for $\tilde \cR(b)$ as defined in Algorithm~\ref{algo: our randomizer}.
\end{lemma}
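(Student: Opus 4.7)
The plan is to case-split on whether the output $s$ lies in $\Annulus{b}$, identify the extreme values of $\P{\tilde{\cR}(b) = s}$, and verify that their ratio is at most $e^\eps$. For $s \in \Annulus{b}$ with $m \doteq \|b-s\|_0$, the resampling step is inactive, so $\P{\tilde{\cR}(b) = s} = p^m (1-p)^{k-m}$ where $p = 1/(e^{\tildeeps}+1)$. For $s \notin \Annulus{b}$, the uniform resampling yields a common value $q_{\mathrm{out}} \doteq \P{\cR(b) \notin \Annulus{b}} / (2^k - |\Annulus{b}|)$. Since $p < 1/2$, the in-annulus value is strictly decreasing in $m$, so its extremes over $m \in \IntSet{\Lb}{\Ub}$ are $p^{\Lb}(1-p)^{k-\Lb}$ and $p^{\Ub}(1-p)^{k-\Ub}$; plugging the defining equation of $\Ub$ into the latter simplifies it to exactly $2^{-k}$, so the in-annulus ratio equals $e^{\tildeeps(\Ub-\Lb)}$.

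Next I would bound this exponent by substituting the definitions and using the Taylor expansions $\ln(2e^{\tildeeps}/(e^{\tildeeps}+1)) = \tildeeps/2 - \tildeeps^2/8 + O(\tildeeps^3)$ and $p = 1/2 - \tildeeps/4 + O(\tildeeps^2)$. The dominant $k\tildeeps/2$ contributions cancel, leaving $\tildeeps(\Ub-\Lb) \le 2\tildeeps\sqrt{k} + O(k\tildeeps^2)$. With $\tildeeps = \eps/(5\sqrt{k})$ and $\eps \le 1$, this is at most $2\eps/5 + O(\eps^2/k)$, comfortably below $\eps/2$.

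The main obstacle is controlling $q_{\mathrm{out}}$ so that it stays within an $e^\eps$-multiplicative window of the in-annulus range. Let $X \doteq \|b - \cR(b)\|_0 \sim \mathrm{Bin}(k, p)$ and $Y \sim \mathrm{Bin}(k, 1/2)$. Using the identity $\P{\cR(b) = s} = e^{-\tildeeps(\|b-s\|_0 - \Ub)} \cdot 2^{-k}$ (a direct consequence of the choice of $\Ub$), one writes $q_{\mathrm{out}} \cdot 2^k = A/B$, where $A = \sum_{m \notin [\Lb,\Ub]} e^{-\tildeeps(m-\Ub)} \P{Y = m}$ and $B = \P{Y \notin [\Lb,\Ub]}$. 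Since $e^{-\tildeeps(m-\Ub)} \le 1$ for $m > \Ub$, and both $\P{X < \Lb}$ and $\P{Y < \Lb}$ are exponentially small by a Chernoff bound (as $\Lb$ sits more than $4$ standard deviations below the mean of both), we get $A/B \le 1 + o(1)$. For the matching lower bound, observe that $\P{X > \Ub}/\P{Y > \Ub} = \mathbb{E}[e^{-\tildeeps(Y - \Ub)} \mid Y > \Ub]$, so by Jensen's inequality applied to the convex function $e^{-x}$, this ratio is at least $\exp(-\tildeeps \cdot \mathbb{E}[Y - \Ub \mid Y > \Ub])$. A direct calculation using concentration of $Y$ around $k/2$ (and the fact that $\Ub$ sits within $O(\sqrt{k})$ of $k/2$) yields $\mathbb{E}[Y - \Ub \mid Y > \Ub] = O(\sqrt{k})$, whence $A/B \ge e^{-O(\eps)}$.

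To conclude, set $\pMin' = \min(q_{\mathrm{out}}, 2^{-k})$ and $\pMax' = \max(q_{\mathrm{out}}, p^{\Lb}(1-p)^{k-\Lb})$; the worst-case ratio $\pMax'/\pMin'$ is at most $e^{\tildeeps(\Ub-\Lb)} \cdot \max(1, B/A)$, which by the two bounds above is at most $e^{\eps}$ once the constant $5$ in $\tildeeps = \eps/(5\sqrt{k})$ is used to absorb the hidden constants. Both $\pMin', \pMax' \in (0, 1)$ since $q_{\mathrm{out}} > 0$ whenever the complement of the annulus is non-empty (the degenerate case where $\Annulus{b} = \{-1,1\}^k$ is trivial). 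The hardest step is the tail comparison in the third paragraph: the specific choice of $\Ub$ is what makes the per-sequence probability ratio $\P{\cR(b) = s}/2^{-k}$ equal $1$ exactly at the annulus boundary, and the scaling $\tildeeps = \Theta(\eps/\sqrt{k})$ is the unique regime in which both $\tildeeps(\Ub - \Lb)$ and the Jensen exponent $\tildeeps\sqrt{k}$ are simultaneously $O(\eps)$.
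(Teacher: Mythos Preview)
Your case split and in-annulus analysis match the paper's, and you have correctly identified the key structural fact $g(\Ub)=2^{-k}$. The gap is in your control of $q_{\mathrm{out}}$. You assert $A/B \le 1+o(1)$ because ``$\P{X<\Lb}$ and $\P{Y<\Lb}$ are exponentially small by a Chernoff bound,'' but $\Lb=kp-2\sqrt{k}$ lies only a \emph{fixed} number (about $4$) of standard deviations below either mean, so Hoeffding gives merely $\P{X<\Lb},\,\P{Y<\Lb}\le e^{-8}$ --- absolute constants, not $o(1)$ in any parameter. Your argument therefore only establishes $A/B\le 1+e^{-8}/\P{Y>\Ub}\approx 1+2e^{-8}$, a fixed constant strictly greater than $1$ and independent of $\eps$. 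Since $\pMin'\le g(\Ub)=2^{-k}$ (the in-annulus minimum), this would allow $\pMax'/\pMin'\ge 1+2e^{-8}$, which exceeds $e^\eps$ once $\eps$ is small enough (say $\eps<10^{-3}$). The same constant-factor slop appears in your Jensen lower bound when you discard $\P{Y<\Lb}$ from $B$. No choice of the constant in $\tildeeps=\eps/(5\sqrt{k})$ can absorb an additive term in the log-ratio that does not vanish with $\eps$.

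The paper avoids this by proving $q_{\mathrm{out}}\le 2^{-k}$ \emph{exactly}: since $g(i)\ge g(\Ub)=2^{-k}$ for every $i\in\IntSet{\Lb}{\Ub}$, one has $\sum_{i\in\IntSet{\Lb}{\Ub}}\binom{k}{i}g(i)\ge\sum_{i\in\IntSet{\Lb}{\Ub}}\binom{k}{i}2^{-k}$, and subtracting both sides from $1$ yields $A\le B$ directly. For the lower bound the paper does not use Jensen but instead partitions $\IntSet{0}{k}\setminus\IntSet{\Lb}{\Ub}$ into three ranges and pairs each $i<\Lb$ with $k-i$, invoking $g(i)+g(k-i)\ge 2\,\pavg\, e^{-\tildeeps k(1-2p)}$; this gives $q_{\mathrm{out}}\ge e^{-3\tildeeps\sqrt{k}}\pavg$ with no constant slack, so that $\pMax'/\pMin'= e^{5\tildeeps\sqrt{k}}=e^\eps$ on the nose.
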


\begin{lemma}
    \label{lemma: lower bound of csvr}
    For a given input~$b$, denote~$\tilde{b} \doteq \tilde{\cR}(b)$ the output of~$\tilde{\cR}$.
    There exits some~$\cGap \in \Omega \paren{\tildeeps} = \Omega \paren{ \eps / \sqrt{k} }$, such that for each input~$b \in \set{-1, 1}^k$, and for each~$i \in \bracket{k}$,
    \begin{equation}
        \P{ \tilde{b}_i = b_i } - \P{ \tilde{b}_i = -b_i } = \cGap \in \Omega \paren{ \eps / \sqrt{k} }\,. 
    \end{equation}
\end{lemma}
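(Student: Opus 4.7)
The plan is to derive a closed-form expression for $\cGap$ by conditioning on whether the intermediate sample $\cR(b)$ falls inside $\Annulus{b}$, and then to lower-bound that expression. Let $Z_j \doteq \indicator{\cR(b)_j \neq b_j}$ and $S \doteq \sum_{j=1}^{k} Z_j$. By the construction of the basic randomizer $\cR$, the $Z_j$ are iid Bernoulli$(p)$ with $p = 1/(e^{\tildeeps}+1)$, so the event $E \doteq \{\cR(b) \in \Annulus{b}\}$ coincides with $\{S \in [\Lb, \Ub]\}$ and has probability independent of $b$. Set $W_i \doteq \indicator{\tilde b_i \neq b_i}$; the goal is to show $\P{W_i = 1} = (1 - \cGap)/2$ for a quantity $\cGap$ independent of $i$ and $b$.

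Next I exploit two symmetry observations. By exchangeability of the $Z_j$ under $\cR$, one has $\P{Z_i = 1 \mid S = m} = m/k$, so $\P{W_i = 1, E} = (1/k)\,\E{S \cdot \indicator{E}}$, which depends on neither $i$ nor $b$. When the fallback is triggered, $\tilde b$ is uniform over $\{-1,1\}^k \setminus \Annulus{b}$; grouping by Hamming distance $m = \|b - \tilde b\|_0$ and using the identity $\binom{k-1}{m-1}/\binom{k}{m} = m/k$ yields $\P{W_i = 1 \mid \neg E} = (1/k)\,\E{\tilde S \mid \tilde S \notin [\Lb, \Ub]}$, where $\tilde S \sim \mathrm{Bin}(k, 1/2)$. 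This too is independent of $i$ and $b$, establishing the uniform-gap property stated in the lemma. Writing $\E{S \cdot \indicator{E}} = kp - \P{\neg E} \cdot \E{S \mid S \notin [\Lb, \Ub]}$ and simplifying gives
\begin{equation*}
\cGap \;=\; \tanh(\tildeeps/2) \;-\; \frac{2\,\P{\neg E}}{k}\, \bigl(\, \E{\tilde S \mid \tilde S \notin [\Lb, \Ub]} \;-\; \E{S \mid S \notin [\Lb, \Ub]} \,\bigr).
\end{equation*}
The leading term is precisely the gap of the basic one-coordinate randomizer, and since $\tildeeps = \eps / (5\sqrt{k}) \le 1/5$, it lies in $\Theta(\tildeeps) = \Theta(\eps/\sqrt{k})$.

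The remaining work is to show that the absolute value of the correction term is $O(\tildeeps)$ with a strictly smaller leading constant than $\tanh(\tildeeps/2)$. The choice $\Lb = kp - 2\sqrt{k}$ places $\Lb$ at least four standard deviations below the means of both $S$ (mean $kp$) and $\tilde S$ (mean $k/2$), so Hoeffding's inequality makes the lower-tail probabilities $\P{S < \Lb}$ and $\P{\tilde S < \Lb}$ negligible; the bracketed difference reduces, up to exponentially small error, to $\E{\tilde S \mid \tilde S > \Ub} - \E{S \mid S > \Ub}$. The definition $\Ub = (k/\tildeeps)\,\ln(2e^{\tildeeps}/(e^{\tildeeps}+1))$ is engineered so that $\Ub - kp$ and $k/2 - \Ub$ are both of order $k\tildeeps$, so standard truncated-binomial estimates bound this upper-tail expectation difference by $O(k\tildeeps + \sqrt{k})$; multiplying by the prefactor $2/k$ gives a correction of order $\tildeeps$ overall. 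The main obstacle is the last piece of constant-tracking: since the leading term and the correction are the same asymptotic order, I must use the slack built into $\tildeeps = \eps/(5\sqrt{k})$, together with the explicit shape of $\Ub$, to verify that the correction's leading constant is strictly smaller than that of $\tanh(\tildeeps/2)$, concluding $\cGap \in \Omega(\tildeeps) = \Omega(\eps/\sqrt{k})$.
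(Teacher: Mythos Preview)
Your decomposition and the symmetry arguments are correct and lead to the clean identity
\[
\cGap \;=\; \tanh(\tildeeps/2) \;-\; \frac{2\,\P{\neg E}}{k}\Bigl(\E{\tilde S \mid \tilde S \notin [\Lb,\Ub]} - \E{S \mid S \notin [\Lb,\Ub]}\Bigr),
\]
but the way you propose to bound the correction does not close. Since $\Ub \in [kp,\,k/2]$ and $k/2 - kp = (k/2)\tanh(\tildeeps/2) \le k\tildeeps/4 \le \sqrt{k}/20$, the threshold $\Ub$ lies within a small fraction of one standard deviation of \emph{both} means; consequently $\P{\neg E}\approx \P{S>\Ub}$ is a constant near $1/2$, not small. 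Your stated estimate ``difference $= O(k\tildeeps + \sqrt{k})$'' therefore yields a correction of order
\[
\frac{2\,\P{\neg E}}{k}\cdot O\bigl(k\tildeeps+\sqrt{k}\bigr)\;=\;O\bigl(\tildeeps + 1/\sqrt{k}\bigr)\;=\;O\bigl(1/\sqrt{k}\bigr),
\]
whereas the leading term is $\tanh(\tildeeps/2)=\Theta(\eps/\sqrt{k})$, a factor $\eps$ smaller. For small $\eps$ the correction bound swallows the signal entirely; this is not a ``same order, track the constants'' situation that the slack in $\tildeeps=\eps/(5\sqrt{k})$ can absorb.

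To make your route work you would need the sharper estimate $\E{\tilde S\mid \tilde S>\Ub}-\E{S\mid S>\Ub}=O(k\tildeeps)$ with \emph{no} additive $\sqrt{k}$: you must show that the $\Theta(\sqrt{k})$ Mills-ratio contributions from the two truncated binomials coincide to leading order. This is plausible (both standardized thresholds are within $O(\eps)$ of zero) but requires a local-limit or coupling argument, not a generic truncated-mean bound. The paper sidesteps this entirely: after the same symmetry reduction it rewrites
\[
\cGap \;=\; \sum_{i=\Lb}^{\Ub}\binom{k}{i}\bigl(g(i)-\PrOut^*\bigr)\,\frac{k-2i}{k},
\]
observes that every summand is nonnegative (because $g(i)\ge g(\Ub)=2^{-k}\ge \PrOut^*$ and $i\le\Ub\le k/2$), and lower-bounds by restricting to the window $i\in[\Ub-2\sqrt{k},\,\Ub-\sqrt{k}/2]$, on which $\bigl(g(i)-2^{-k}\bigr)\tfrac{k-2i}{k}\in\Omega(\tildeeps\cdot 2^{-k})$ and the binomial coefficients sum to $\Omega(2^k)$. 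That direct combinatorial estimate delivers $\cGap\in\Omega(\tildeeps)$ without any cancellation-of-$\sqrt{k}$ step.
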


\begin{algorithm}[!t]
    \caption{\ourRandomizer\,$\cM$, describing $\tilde \cR$}
    \label{algo: our randomizer}

    \begin{algorithmic}[1]
        \Procedure{Basic Randomizer~$\cR$}{$\zeta$}
        \Require Value $\zeta \in \set{-1, 1 }$.
        \State {\bf return} $-\zeta$ w.p.~$1 / \paren{ e^{ \tildeeps } + 1 }$ and $\zeta$ w.p.~$e^{ \tildeeps } / \paren{ e^{ \tildeeps } + 1 }$.
        \EndProcedure
        
        \Statex \vspace{-3mm}
        \Procedure{Composed Randomizer~$\tilde \cR $}{$b$}
        \Require Vector $b \in \set{-1, 1}^k$.
        \State Sample $b' \leftarrow \paren{ \cR ( b_1 ), \ldots, \cR ( b_k ) }$.
        \label{algo line: 1 algo:Approximate Composed Algorithm}
        \If{ $b' \notin \Annulus{b}$ }
        \State $b' \uniffrom \set{-1, 1}^k \setminus \Annulus{b}$
        \label{algo line: 3 algo:Approximate Composed Algorithm}
        \EndIf
        \State {\bf return} $b'$
        \EndProcedure

        \Statex \vspace{-3mm}
        \Procedure{$\cM.\init$}{$L, k, \eps$}
        \Require Input length~$L$; Support size~$k$; Privacy parameter~$\eps$
        \State Set~$\tildeeps \leftarrow \eps / \paren{5 \sqrt{k}}$.
        \State Set~$\tilde{b} \leftarrow \tilde{\cR}( 1^k)$.
        \State Set~$\nnz \leftarrow 0$.
        \EndProcedure

        \Statex \vspace{-3mm}
        \Procedure{$\cM^{ (j) }$}{$v_j$}\Comment{$j = 1, 2, \ldots, L$}
        \Require Value $v_j \in \set{-1, 0, 1 }$.
        \If{$v_j \neq 0$}
        \State $\nnz \leftarrow \nnz + 1$.
        \State {\bf return} $v_j \cdot \tilde{b}_{\nnz}$.
        \Else
        \State {\bf return} $-1$ or $+1$ uniformly at random.
        \EndIf
        \EndProcedure
    \end{algorithmic}
\end{algorithm}

\noindent{\bf Analysis.}
Since~$\cM$ perturbs the non-zero coordinates of input~$v$ with~$\tilde{\cR}$, \rndmzrPropII is satisfied based on the Lemma~\ref{lemma: lower bound of csvr}.
Hence, it is left to verify \rndmzrPropI. 
Consider an arbitrary sequence~$w_1, \ldots, w_L \in \set{-1, 1}$. 
The event
$$
    E: M^{ (1) } (v_1) = w_1, \ldots, M^{ (L) } (v_L) = w_L
$$ 
can be decomposed into two events: 
\begin{enumerate}
    \item~$E_1:$ $\forall j \notin \supp{ v }$, $\cM^{ (j) } ( v_{ j } ) = w_{ j }$.
    \item~$E_2:$ $\forall j \in \supp{ v }$, $\cM^{ (j) } ( v_{ j } ) = w_{ j }$.
\end{enumerate}

We have~$\P{E_1} = 2^{- \paren{L - k}}$ as the~$\cM^{ (j) } ( v_{ j } )$ values are independent in $\{-1,+1\}$ for~$j \notin \supp{ v }$. 
To bound~$\P{E_2}$, denote the indices in~$\supp{v}$ as $j_1 <  \cdots <  j_k$.
Since~$\cM$ perturbs the non-zero coordinates with~$\tilde{\cR}$, applying Lemma~\ref{lemma: probability of modified composed randomizer} with input~$b = \paren{ v_{j_1}, \ldots, v_{j_k} }$ and output~$s = \paren{ w_{j_1}, \ldots, w_{j_k} }$,
it holds that~$\P{ E_2 } = \P{ \tilde{\cR} (b) = s } \in \bracket{ \pMin', \pMax' }$. 
Since~$E_1$ and~$E_2$ are independent, we have
\begin{equation}
     \label{ineq: output range of our randomizer}
    \P{E} = \P{E_1} \cdot \P{E_2} \in 2^{- \paren{L - k}} \cdot \bracket{ \pMin', \pMax' }\,,
\end{equation}
which proves \rndmzrPropI.

\subsection{Online Input with Fixed Support Size}
\label{sebsec: online input with fixed support}
In this step, we modify~$\cM$ into an online algorithm, where each coordinate~$v_j$ arrives one by one, and~$\cM$ is required to perturb and output each coordinate immediately after its arrival. 
Since~$\cM$ perturbs the zero coordinates independently, 
we only need to take care of the non-zero coordinates. 
We develop a new pre-computation technique to generate the correlated noises for the non-zero elements in the initialization phase. 
With these noises, we perturb the non-zero coordinates as they arrive.  
This motivates the design of $\cM.\init \paren{L, k, \eps}$ and $\cM^{ (j) } \paren{v_j}$, $j \in \bracket{L}$, whose pseudo-codes are in Algorithm~\ref{algo: our randomizer}.

\vspace{2mm}
\noindent{\bf $\cM.\init \paren{L, k, \eps}$.}
The procedure takes as parameters~$L$, the size of the input sequence;~$k$, the maximum number of non-zero elements in the input sequence; and~$\eps$, the privacy parameter.
It sets~$\tildeeps$ as $\eps / \paren{5 \
\sqrt{k}}$, and invokes the composed randomizer~$\tilde \cR$ with the vector~$1^k$ consisting of all ones.
The returned sequence is kept as a vector~$\tilde{b}$.
Finally, the procedure creates a variable~$\nnz$ with value~$0$; the variable~$\nnz$ is shared by other procedures, to keep track of the support of the input sequence received so far.

\vspace{2mm}
\noindent{\bf $\cM^{ (j) } \paren{v_j}$.}
For all $j \in \bracket{L}$,~$\cM^{ (j) } \paren{v_j}$ shares the same pseudo-codes. If~$v_j = 0$, the procedure outputs~$-1$ or~$+1$, uniformly at random. 
Otherwise, it increases~$\nnz$ by~$1$, indicating that~$v_j$ is the~$\paren{ \nnz }^{ (th) }$ non-zero entry processed by the procedure.
Then it outputs the value of~$v_j$ multiplied by $\tilde{b}_{\nnz}$.

\vspace{2mm}
\noindent{\bf Analysis.}
We need to prove that \rndmzrPropI and \rndmzrPropII hold.
Consider the events~$E_1$ and~$E_2$ defined as before. 
We see that~$\P{E_1}$ remains the same.
Denote the indices in~$\supp{v}$ as $j_1 <  \cdots <  j_k$.
Based on our modification, for each~$i \in \bracket{k}$, we have 
$\cM^{ (j_i) } ( v_{j_i} ) = \tilde{b}_i \cdot v_{j_i}$.
Hence, event~$E_2$ holds only if for each~$i \in \bracket{k}$, 
$\tilde{b}_i \cdot v_{j_i} = w_{j_i}$, equivalently,~$\tilde{b}_i = w_{j_i} / v_{j_i}$. 
Since~$\tilde{b} = \tilde{\cR}(1^k)$, applying Lemma~\ref{lemma: probability of modified composed randomizer} with~$s = \paren{ w_{j_1} / v_{j_1}, \ldots, w_{j_k} / v_{j_k}}$,
it holds that
\vspace{-1mm}
$$
    \P{ E_2 } = \P{ \tilde{b} = s } = \P{ \tilde{\cR} (1^k) = s } \in \bracket{ \pMin', \pMax' }\,.
$$ 
It follows that Inequality~(\ref{ineq: output range of our randomizer}) still holds, and therefore \rndmzrPropI.  

Finally, observe that~$\cM^{ (j_i) } ( v_{j_i} ) = v_{j_i}$ if~$ \tilde{b}_i = 1$; and~$\cM^{ (j_i) } ( v_{j_i} ) = -v_{j_i}$ if~$ \tilde{b}_i = -1$. 
Since~$\tilde{b} = \tilde \cR \paren{1^k}$, we have $\P{ \tilde{b}_i = 1 } - \P{ \tilde{b}_i = -1 } = \cGap$.
Therefore, \rndmzrPropII holds with~$\cGap \in \Omega \paren{ \eps / \sqrt{k}}$.

\subsection{Online Input with Bounded Support Size}
In this step, we relax the constraint that vector~$v$ contains exactly~$k$ non-zero coordinates. 
In particular, we show that \rndmzrPropI and \rndmzrPropII still hold, if the same protocol discussed in Section~\ref{sebsec: online input with fixed support} acts on input with support less than~$k$.   
In the case~$\card{\supp{v}} = k$, each bit of the pre-generated vector~$\tilde{b} = \tilde{\cR}(1^k)$ is used to multiply some non-zero coordinate of~$v$.
In the case~$\card{\supp{v}} < k$, however, only the first~$\card{\supp{v}}$ bits of~$\tilde{b}$ are used.

\vspace{2mm}
\noindent{\bf Analysis.}
With a similar argument as previous section, \rndmzrPropII holds with~$\cGap \in \Omega \paren{ \eps / \sqrt{k}}$. 
To verify \rndmzrPropI, we prove that Inequality~(\ref{ineq: output range of our randomizer}) still holds. 
Consider the events~$E_1$ and~$E_2$ defined as before. 
We have~$\P{E_1} = 2^{ -(L - \card{\supp{v}} ) }$ as the~$\cM^{ (j) } ( v_{ j } )$ variables are independent in $\{-1, 1\}$ random variables for~$j \notin \supp{ v }$. 
The event~$E_2$ happens if only for each $i \in \bracket{\card{ \supp{ v } }}$, it holds that $\tilde{b}_i = { w_{ j_i }  } / { v_{ j_i } }$.
Let~$\targetOutputs$ be the subset of sequence~$s \in \set{-1, 1}^k$ which satisfies~$s_i = { w_{ j_i }  } / { v_{ j_i } }$, for each~$i \in \bracket{\card{ \supp{ v } }}$.
There are~$2^{k - \card{\supp{v}} }$ such possible sequences, each being outputted by~$\tilde{\cR} \paren{ 1^k }$ with probability between~$\bracket{ \pMin', \pMax' }$ (by Lemma~\ref{lemma: probability of modified composed randomizer}).
Therefore, 
$$
    \P{ E_2 } = \P{ \tilde{b} \in \targetOutputs } \in 2^{k - \card{\supp{v}} } \cdot \bracket{ \pMin', \pMax' }. 
$$
Multiplying it by~$\P{E_1} = 2^{ -(L - \card{\supp{v}} ) }$ proves Inequality~(\ref{ineq: output range of our randomizer}).

\subsection{Sketch Proofs}
\label{subsec: analysis of Lemmas of Composed Randomizer}

We conclude the technical presentation with outlines of our proofs for Lemma~\ref{lemma: probability of modified composed randomizer} and~\ref{lemma: lower bound of csvr}. 
The complete proofs for Lemma~\ref{lemma: probability of modified composed randomizer} and~\ref{lemma: lower bound of csvr} are included in Appendix~\ref{appendix: subsubsec lemma: probability of modified composed randomizer} and~\ref{appendix: subsubsec lemma: lower bound of csvr}, respectively.

We use the following notation.
As before, denote $p = 1 / ( e^{ \tildeeps } + 1)$.
Note that~$1 - p = e^{\tildeeps} p$.
For each $i \in \IntSet{0}{k}$, define
$$
    g(i) \doteq p^i (1 - p)^{k - i}
    = p^k \cdot e^{ \tildeeps \cdot (k - i) }\,.
$$
We see that~$g$ is a decreasing function with respect to~$i$.
For each~$b \in \set{-1, 1}^k$, denote the result of applying~$\cR$ independently to each of its coordinates as~$\cR(b) \doteq \paren{\cR(b_1), \ldots, \cR(b_k) }$. 
For each~$s \in \set{ -1, 1 }^k$, it is easy to verify that $\P{ \cR(b) = s } = g \paren{ \norm{ s - b }_0 }$,
where~$\norm{ \cdot }_0$ is the~$\ell_0$ norm, and therefore~$\norm{ b - s }_0$ is the number of coordinates in which~$s$ differs from~$b$.
Since in expectation, $\norm{ \cR(b) - b} = kp$,
we define
\begin{align*}
    \pavg \doteq g(kp) = p^{kp} (1 - p)^{k - kp}  
    = p^k \cdot e^{ \tildeeps \cdot (k - kp) }.
\end{align*}

\noindent {\bf Proof Outline for Lemma~\ref{lemma: probability of modified composed randomizer}.}
    Let~$b \in \set{-1, 1}^k$ be an input to~$\tilde{\cR}$. 
    Recall that~$\tilde{\cR}$ first computes~$b' = \cR(b) \doteq \paren{\cR(b_1), \ldots, \cR(b_k) }$. 
    If~$b' \in \Annulus{b}$,~$\tilde{\cR}$ outputs it directly. 
    Otherwise,~$\tilde{\cR}$ outputs a uniform sample from~$\set{-1, 1}^k \setminus \Annulus{b}$. 
    At a high level, $\Annulus{b}$ consists of the~$s \in \set{-1, 1}^k$ for which~$\P{\cR(b) = s}$ is close to~$\pavg$, and $\tilde{\cR}$ keeps their output probabilities.
    On the other hand, $\set{-1, 1}^k \setminus \Annulus{b}$ consists of the~$s$ for which~$\P{\cR(b) = s}$ is much higher or lower than~$\pavg$, and $\tilde{\cR}$ averages their output probabilities by uniform sampling. 
    We will show that in both cases, $\P{ \tilde{\cR} (b) = s} \approx \pavg$.
    Formally, 
    \begin{align}
        \label{ineq: inner probability of modified composed randomizer}
        \P{ \tilde{\cR} (b) = s  } 
        &\in \left[
            1 / 2^k , e^{ 2 \tildeeps \sqrt{ k } } \cdot
            \pavg
            \right],\, \quad \forall s \in \Annulus{b}\,,
        \\
        \label{ineq: outter probability of modified composed randomizer}
        \P{ \tilde{\cR} ( b ) = s  } 
        &\in \left[
            e^{-3 \tildeeps \sqrt{ k } } \cdot \pavg,
            1 / 2^k
            \right],\, \quad \forall s \notin \Annulus{b}\,.
    \end{align}
    Let~$\pMin' = e^{-3 \tildeeps \sqrt{ k } } \cdot \pavg$ and~$\pMax' = e^{ 2 \tildeeps \sqrt{ k } } \cdot \pavg$.
    Since~$\tildeeps = \eps / \paren{ 5 \sqrt{k}}$, $\pMax' = e^\eps \cdot \pMin$.
    Combing Inequality~(\ref{ineq: inner probability of modified composed randomizer}) and Inequality~(\ref{ineq: outter probability of modified composed randomizer}), we know that for all $s \in \set{-1, 1}^k$, $\P{ \tilde{\cR} ( b ) = s  } \in \bracket{\pMin', \pMax'}$, which proves Lemma~\ref{lemma: probability of modified composed randomizer}.
    We now prove these two inequalities separately.

    \vspace{1mm}
    \noindent {\bf Bounding~$ \P{ \tilde{\cR} (b) = s  }$ for $s \in \Annulus{b}$.}
    The design of~$\tilde{\cR}$ ensures that 
    for each~$s \in \Annulus{b}$, we have~$\P{\tilde{\cR} (b) = s  } = \P{ \cR(b) = s  } = g( \norm{ b - s }_0 )$.
    Our choices of~$\NewUb$ and~$\NewLb$ in Equation~(\ref{equa: setting of Lb and Ub}) guarantee that~$\norm{b - s}_0$ is close to~$kp$ if~$s \in \Annulus{b}$, and therefore
    ~$\P{\tilde{\cR} (b) = s }$ is close to~$\pavg$.
    Next, we justify these choices. 
    
    \noindent {\it Choice of~$\NewUb$.} 
    We set~$\NewUb = \frac{k}{ \tildeeps } \cdot \ln \frac{2 e^{\tildeeps} }{ e^{\tildeeps} + 1 }$ so that~$g(\NewUb) = 2^{-k}$.
    Such~$\NewUb$ is close to~$kp$; 
    indeed, we have~$\NewUb \in [kp, k / 2]$.
    As~$g$ is a decreasing function, this can be proven by showing that
    \begin{equation}
        g(kp) \ge 2^{-k} = g(\NewUb) \ge g( k / 2)\,.
    \end{equation}
    Since~$\norm{b - s}_0 \le \NewUb$ for each~$s \in \Annulus{b}$, we have
    \begin{equation}
        \P{\tilde{\cR} (b) = s} \ge g(\NewUb) = 2^{-k}\,.
    \end{equation}
    As will be discussed, this property also plays an important role in upper bounding~$\P{\tilde{\cR} (b) = s}$ for~$s \notin \Annulus{b}$.
    
    \noindent {\it Choice of~$\NewLb$.} 
    We pick~$\NewLb = k p - 2 \sqrt{k}$ so that it is not much smaller than~$kp$, and that~$g( \NewLb ) = e^{ 2 \tildeeps \sqrt{ k } } \cdot \pavg$.
    Since~$\norm{b - s}_0 \ge \NewLb$ for each~$s \in \Annulus{b}$, 
    \begin{equation}
        \P{\tilde{\cR} (b) = s  } = g( \norm{ b - s }_0 ) \le  g( \NewLb ) = e^{ 2 \tildeeps \sqrt{ k } } \cdot \pavg\,.
    \end{equation}

    \vspace{2mm}
    \noindent {\bf Bounding~$\P{ \tilde{\cR} (b) = s  }$ for $s \notin \Annulus{b}$.}
    We discuss first the upper bound for this case, which is the easy part. 
    
    \noindent {\it Upper Bound.}  
    As discussed, our choice of~$\NewUb$ guarantees that each~$s \in \Annulus{b}$ is assigned with output probability at least~$2^{-k}$. 
    Since there are $2^k$ elements in the output space~$\set{-1, 1}^k$ of~$\tilde{\cR} (b)$, and since $\P{ \tilde{\cR} (b) = s  }$ equals a common probability for each~$s \notin \Annulus{b}$, we have
    $$
        \P{ \tilde{\cR} (b) = s } \le 2^{-k},\quad  \forall s \notin \Annulus{b}. 
    $$

    \vspace{1mm}
    \noindent {\it Lower Bound.}
    Let $\overline{ \IntSet{\NewLb}{\NewUb} } \doteq \IntSet{0}{k} \setminus \IntSet{\NewLb}{\NewUb}$, and~$\cR(b) = \paren{\cR(b_1), \ldots, \cR(b_k) }$.
    First, observe that
    $$
        \begin{array}{c}
             \P{ \cR(b) \notin \Annulus{b} } = \sum_{ i \in \overline{ \IntSet{\NewLb}{\NewUb} } } \binom{k}{i} g(i).
        \end{array}
    $$
    As~$\tilde{\cR}$ assigns equal probability to each~$s \notin \Annulus{b}$,  
    it holds that 
    \begin{equation}
        \PrOut^* \doteq \P{ \tilde{\cR} (b) = s } = 
        \frac{
            \sum_{ i \in \overline{ \IntSet{\NewLb}{\NewUb} } } \binom{k}{i} g(i)
        }
        {
            \sum_{ i \in \overline{ \IntSet{\NewLb}{\NewUb} } } \binom{k}{i}
        }\,.
    \end{equation}
    To lower bound~$\PrOut^*$, we will partition~$\overline{\IntSet{\NewLb}{\NewUb}}$ into three subsets, which consist of the~$i$ for which~$g(i)$ is significantly higher, lightly lower than, and significantly lower than~$\pavg$, respectively. 
    
    In particular, since $\NewLb < k / 2$ and~$\NewUb \le k / 2$, it holds that $k - \NewLb > \NewUb$.
    Hence, we can partition the set~$\overline{\IntSet{\NewLb}{\NewUb}}$ into three subsets~$\IntSet{0}{\NewLb - 1}$, $\IntSet{\NewUb + 1}{k - \NewLb}$ and~$\IntSet{k - \NewLb + 1}{k}$.
    Note that the size of the first subset equals the size of the third one.
    Correspondingly, we can decompose the numerator and denominator of $\PrOut^*$ into three parts:
    \vspace{-1mm}
    \begin{equation*}
        \numerator{1} \doteq \sum_{ i = 0 }^{ \NewLb - 1 } \binom{k}{i} g(i),
        \numerator{2} \doteq \sum_{ i = \NewUb + 1 }^{ k - \NewLb} \binom{k}{i} g(i),
        \numerator{3} \doteq \sum_{ i = k - \NewLb + 1 }^{ k} \binom{k}{i} g(i)\,.
    \end{equation*}
    \vspace{-3mm}
    \begin{align*}
        \denominator{1} \doteq \sum_{ i = 0 }^{ \NewLb - 1 } \binom{k}{i},\,
        \denominator{2} \doteq \sum_{ i = \NewUb + 1 }^{ k - \NewLb} \binom{k}{i},\,
        \denominator{3} \doteq \sum_{ i = k - \NewLb + 1 }^{ k} \binom{k}{i}\,.
    \end{align*}
    We lower bound the ratios of~${\numerator{2}} / {\denominator{2}}$ and~$\paren{\numerator{1} + \numerator{3}} / \paren{\denominator{1} + \denominator{3}}$ separately.   
    
    \noindent {\it Lower bounding~${\numerator{2}} / {\denominator{2}} \ge \pavg \cdot e^{- \tildeeps \cdot \paren{ k(1 - 2p) + 2 \sqrt{k} } }$.}
    This follows from that~$g(i)$ is a decreasing function, and therefore the smallest~$g(i)$ in the summands of~$\numerator{2}$ is lower bounded by~$ g(k - \NewLb) = \pavg \cdot e^{- \tildeeps \cdot \paren{ k(1 - 2p) + 2 \sqrt{k} } }$.
    
    \noindent {\it Lower bounding~$\paren{\numerator{1} + \numerator{3}} / \paren{\denominator{1} + \denominator{3}} \ge \pavg \cdot e^{- \tildeeps \cdot k (1 - 2p)  }$.}
    Observe that for each $i \in \IntSet{0}{\NewLb - 1}$, we have $k - i \in \IntSet{k - \NewLb + 1}{k}$.
    The lower bound follows by pairing up each summand~$\binom{k}{i} g(i)$ in $\numerator{1}$ with $\binom{k}{k - i} g(k - i)$ in~$\numerator{3}$, and by proving that~$g(i) + g(k - i) \ge \pavg \cdot 2 \cdot e^{- \tildeeps \cdot k (1 - 2p)  }$,
    details in the Appendix.

    \vspace{1mm}
    \noindent {\it Putting Together.}
    Based on the lower bounds of~${\numerator{2}} / {\denominator{2}}$ and~$\paren{\numerator{1} + \numerator{3}}$ $/\paren{\denominator{1} + \denominator{3}}$, we have 
    \begin{align}
        \PrOut^*
        = \frac{ \numerator{1} + \numerator{2} +  \numerator{3} }{ \denominator{1} + \denominator{2} + \denominator{3} }
        \ge \pavg \cdot e^{- \tildeeps \cdot \paren{ k(1 - 2p) + 2 \sqrt{k} } }\,.
    \end{align}
    Since~$p = 1 / ( e^{ \tildeeps } + 1)$, we have $(1 - 2p) = ( e^{\tildeeps} - 1 ) / ( e^{\tildeeps} + 1  )$. 
    Via the inequality $\paren{e^x - 1} / \paren{e^x + 1} \le x / 2$ for all $x \ge 0$, we have $k (1 - 2p) \le k \cdot \tildeeps / 2$.
    Via the assumption that~$\eps \le 1$, we get~$\tildeeps = \eps / \paren{ 5 \sqrt{k} } \le 1 / \sqrt{k}$, and~$k (1 - 2p) \le k \cdot \tildeeps / 2 \le \sqrt{k}$.
    Therefore,
    $
        \PrOut^*
        \ge
        e^{- \tildeeps \cdot \paren{ 3 \sqrt{k} } }  \cdot \pavg
    $.

\hfill $\square$

\vspace{3mm}
\noindent {\bf Proof Outline for Lemma~\ref{lemma: lower bound of csvr}.}
    By symmetry of~$\tilde{\cR}$, it suffices to prove Lemma~\ref{lemma: lower bound of csvr} for the first coordinate of~$\tilde{b}$.
    We first prove that 
    \vspace{-3mm}
    \begin{align*}
        \P{ \tilde{b}_1 = b_1 } 
        &=\sum_{ i = \NewLb }^{  \NewUb } \binom{ k }{ i } g(i) \frac{k - i}{k}
        + \PrOut^* 
        \sum_{ i \in \overline{ \IntSet{\NewLb}{\NewUb} } } \binom{ k  }{ i  } \frac{k - i}{k},\, \\
        \P{ \tilde{b}_1 = -b_1 } 
        &= \sum_{i = \NewLb }^{  \NewUb } \binom{ k }{ i } g(i) \frac{ i }{ k }
        + \PrOut^* 
        \sum_{ i \in \overline{ \IntSet{\NewLb}{\NewUb} } } \binom{ k  }{ i  } \frac{i}{k}.
    \end{align*}
    Since~$\cGap = \P{ \tilde{b}_1 = b_1 } - \P{ \tilde{b}_1 = -b_1 }$, we have 
    \begin{equation*}
        \cGap = \sum_{ i = \NewLb }^{ \NewUb } \binom{k}{i} g(i) \frac{k - 2i}{k}
        + \PrOut^* \sum_{ i \in \overline{ \IntSet{\NewLb}{\NewUb} } }  \binom{k}{i} \frac{k - 2i}{k}.
    \end{equation*}

    \noindent Note that for each $i \in \IntSet{0}{k}$,
    $
        \binom{k}{i} \frac{k - 2i}{k} + \binom{k}{k - i} \frac{k - 2(k - i)}{k}
        = 0.
    $
    For each~$i \in \overline{ \IntSet{\NewLb}{\NewUb} }$, if
    ~$i \notin \IntSet{k - \NewUb}{k - \NewLb}$, then it holds that~$k - i \in \overline{ \IntSet{\NewLb}{\NewUb} }$.
    Therefore, 
    $
        \sum_{i \in \overline{ \IntSet{\NewLb}{\NewUb} } \setminus \IntSet{k - \NewUb}{k - \NewLb}}  \binom{k}{i} \frac{k - 2i}{k} = 0
    $, and  
    \begin{align*}
        \cGap
        &=\sum_{ i = \NewLb }^{  \NewUb } \binom{ k }{ i } g(i)  \frac{k - 2i}{k}  
            + \PrOut^* \sum_{ i \in [k - \NewUb, k - \NewLb] } \binom{ k  }{ i  }  \frac{k - 2i}{k} \\
        &= \sum_{ i = \NewLb }^{  \NewUb } \binom{ k }{ i } \PAREN{ g(i) - \PrOut^* } \frac{k - 2i}{k}.
    \end{align*}
    where the second equality follows from that for each~$i \in \IntSet{\NewLb}{\NewUb}$, $k - i \in \IntSet{k - \NewUb}{k - \NewLb}$, and that~$\binom{k}{i} = \binom{k}{k - i}$. 
    Since~$\NewLb = kp - 2 \sqrt{k} < kp \le \NewUb$, 
    it holds that~$\IntSet{\NewUb - 2 \sqrt{k}}{\NewUb - \sqrt{k} / 2} \subset \IntSet{\NewLb}{\NewUb}$.
    Combing with that~$\PrOut^* \le 2^{-k}$ (Inequality~(\ref{ineq: outter probability of modified composed randomizer})), we see
    \begin{equation}
        \label{ineq: sketch proof last 3}
        \cGap \ge \sum_{ i = \NewUb - 2 \sqrt{k} }^{  \NewUb - \sqrt{k} / 2 } \binom{ k }{ i } \PAREN{ g(i) - \frac{1}{2^k} } \frac{k - 2i}{k}.
    \end{equation}
    It is left to prove that the right hand side is lower bounded by~$\Omega(\tildeeps) = \Omega(\eps / \sqrt{k})$. 
    We will show that for each $i \in \IntSet{\NewUb - 2 \sqrt{k}}{\NewUb - \sqrt{k} / 2}$,
    \begin{equation}
        \label{ineq: sketch proof last 2}
        \PAREN{ g(i) - \frac{1}{2^k} } \frac{k - 2i}{k} \in \Omega \PAREN{ \frac{\tildeeps}{2^k} }.
    \end{equation}
    Further, 
    \begin{equation}
        \label{ineq: sketch proof last 1}
        \sum_{ i = \NewUb - 2 \sqrt{k} }^{  \NewUb - \sqrt{k} / 2 } \binom{k}{i} \in \Omega \PAREN{ 2^k }. 
    \end{equation}
    Combining Inequalities~(\ref{ineq: sketch proof last 3}),~(\ref{ineq: sketch proof last 2}), and~(\ref{ineq: sketch proof last 1}) proves~$\cGap \in \Omega(\tildeeps)$. 
    
    \noindent {\it Proving Inequalities~(\ref{ineq: sketch proof last 2}).}
    Observe that
    $
        g(i) 
            \ge g(\NewUb - \sqrt{k} / 2) 
            = e^{ \tildeeps \cdot \sqrt{k} / 2 } \cdot 2^{-k}.
    $ 
    Since~$\NewUb \le k / 2$, 
    $
        \frac{k - 2i}{k}
        \ge
        \frac{k - 2(\NewUb - \sqrt{k}/2)}{k}
        \ge 
        \frac{k - 2(k / 2 - \sqrt{k}/2)}{k}
    $
    $
        =
        \frac{1}{\sqrt{k}}.
    $
    Combined, we obtain
    \begin{equation}
        \left( g(i) - \frac{1}{2^k} \right) \frac{k - 2i}{k} 
        \ge 
        \left( e^{ \tildeeps \cdot \sqrt{k} / 2 }  - 1 \right) \cdot \PAREN{ \frac{1}{2} }^k \cdot \frac{1}{\sqrt{k}}   
        \ge 
        \frac{ \tildeeps }{2} \cdot \PAREN{ \frac{1}{2} }^k.       
    \end{equation}

    \noindent {\it Proving Inequalities~(\ref{ineq: sketch proof last 1}).}
    At a high level, the claim follows from that the summation consists of~$\Omega \paren{ \sqrt{k} }$ terms, each of size~$\Omega \paren{ 2^k / \sqrt{k} }$,
    details in the Appendix.
    \hfill $\square$

\section{Related Works} \label{sec: review}

\noindent {\bf Central Model}. 
Data analysis under {\it continual observation} has been studied in the central model of differential privacy~\citep{DworkNPR10, ChanSS11}. Here, a trusted curator, to which the clients report their true data,  perturbs and releases the aggregated data. 
Given a stream of Boolean ($0$-$1$) values, differentially private frequency estimation algorithms were independently proposed by both~\citeauthor{DworkNPR10}~\citep{DworkNPR10} and~\citeauthor{ChanSS11}~\citep{ChanSS11}: at each time period, they an estimate of the number of~$1$s appearing so far. 
These algorithms guarantee an error of~$O \paren{ 1 / \eps \cdot \log^{1.5} t }$ (omitting failure probability) at each time~$t$.

\vspace{2mm}
\noindent {\bf Local Model}.
To avoid naively repeating algorithms designed for one-time computation, the \emph{memoization} technique was proposed and deployed for continual collection of counter data~\citep{EPK14, DKY17}, where noisy answers for all elements in the domain are memoized. 
However, as pointed out by~\citep{DKY17}, this technique can violate differential privacy. 
Recent works~\citep{erlingsson2020amplification, JRUW18, zhou2021locally} also propose to exploit the potential sparsity of users' data, if it changes infrequently.
They differ in the frequency 
when the algorithm needs to give a prediction.

\noindent {\it Online Setting}. 
Our algorithmic framework is inspired by the work of~\citeauthor{EFMRTT19}~\cite{erlingsson2020amplification} for the online setting, where the server is required to report the estimate at each time step.
Here we provide a succinct description of their protocol,  in the notation and framework of our paper.
The protocol by~\citeauthor{EFMRTT19} requires
an additional sampling step: each user~$u$ samples uniformly and keeps only one non-zero coordinate of~$\UsrDerivative{u}$, and sets all other non-zero coordinates to~$0$.
Therefore, there can be at most one non-zero partial sum with sampled order~$h_u$.
This partial sum is perturbed by the basic randomizer~$\cR$ (Equation~(\ref{equa: def of basic randomizer})) with~$\tildeeps = \eps / 2$, resulting in a~$\paren{ \cGap }^{-1} \in \Omega \paren{ 1 / \eps }$.
However, due to the additional sampling step, the server side estimator of~$\EstPartialsum{ \dyadicInterval{h}{j} }$ (Algorithm~\ref{algo: server}, line~\ref{line: algo: server: update partial sum}) needs to be multiplied by an additional factor of~$k$.
Via similar analysis to Lemma~\ref{lemma: error of single estimator for basic private algorithm},
the error guarantee of their protocol involves a factor of~$k$, instead of~$\sqrt{k}$.

\noindent {\it Offline Setting}. 
The recent independent parallel work by~\citeauthor{zhou2021locally}~\citep{zhou2021locally}
considers the problem in the offline setting, where the server is required to report only after it has collected all the data.
They describe a protocol that provides error guarantee of~$O ( (1 / \eps) \cdot \sqrt{  k \cdot (\log n / \beta) \cdot n  \cdot \log ( d /  \beta ) } )$.
Their client-side algorithm involves hashing the coordinates of user data into a table, and reporting a perturbed version of the hash table to the server. 
Since the value of the entries of the hash table depends on all coordinates of the user data, it is unclear how to convert this algorithm into an online one. 

\noindent {\it Batch Reporting}. The work by~\citeauthor{JRUW18}~\citep{JRUW18} lies between the online and offline settings: time steps are batched into \emph{epochs}, and the server is required to report at each epoch. 
Further, it assumes that users' data are sampled from some (unknown) distributions, 
and analyzes its protocol's performance based on this assumption.
Due to this difference in the problem setting, the performance guarantees of the protocol in~\citep{JRUW18}
cannot be compared directly to those provided in~\cite{erlingsson2020amplification, zhou2021locally} and our paper.
However, the performance in~\citep{JRUW18} also degrades only linearly, instead of sub-linearly, with the number of changes in the underlying data distributions.

\vspace{2mm}
\noindent {\bf Composed Randomizer}.
The composed randomizer,~$\tilde{\cR}$, presented in this paper builds on the composed randomizer proposed by~\citeauthor{BNS19}~\citep{BNS19}.
Their design
focused on preserving the statistical distance between the distribution of the output of the composed randomizer, and joint distribution of~$k$ independent randomized responses.
This difference in the problem setting prompts the non-trivial changes in parameters, assumptions and analysis.
The \citeauthor{BNS19} proof~\citep{BNS19} relies extensively on concentration and anti-concentration inequalities.
Their design can only achieve
$        
    \cGap \in 
    O \Big( {\eps} / {\sqrt{k \ln ( k / \eps) } } + \big( \frac{ \eps }{ k \ln ( k / \eps) } \big)^{2 / 3} \Big)
$ 
(detailed discussion and proof are in Appendix~\ref{appendix: subsec: bun randomizer}). 
When the first term dominates~(e.g, when $k \ge 1 / \eps^2$), this simplifies to~$\cGap \in O ( {\eps} / {\sqrt{k \ln ( k / \eps) } } )$, which implies that~$(\cGap)^{-1} \in \Omega ( {\sqrt{k \ln ( k / \eps) } } / {\eps} )$.
If we apply this composed randomizer to our framework, according to Lemma~\ref{lemma: error of single estimator for basic private algorithm}, this leads to an error that scales at least with~${\sqrt{k \ln ( k / \eps) } } / \eps$.
In comparison, our composed randomizer reduces this to at most~$\sqrt{k} / \eps$. 
Further, \citeauthor{BNS19}'s original design~\citep{BNS19} applies only to offline inputs. 
We overcome this limitation by including our pre-computation technique so the algorithm becomes online.
\begin{acks}
    We thank the anonymous reviewers for their detailed feedback that helped us improve our work.
    Olga Ohrimenko is in part supported by a Facebook research award.
    Hao Wu is supported by an Australian Government Research Training Program (RTP) Scholarship.
\end{acks}

\bibliographystyle{ACM-Reference-Format}
\bibliography{reference}


\begin{thebibliography}{19}


\ifx \showCODEN    \undefined \def \showCODEN     #1{\unskip}     \fi
\ifx \showDOI      \undefined \def \showDOI       #1{#1}\fi
\ifx \showISBNx    \undefined \def \showISBNx     #1{\unskip}     \fi
\ifx \showISBNxiii \undefined \def \showISBNxiii  #1{\unskip}     \fi
\ifx \showISSN     \undefined \def \showISSN      #1{\unskip}     \fi
\ifx \showLCCN     \undefined \def \showLCCN      #1{\unskip}     \fi
\ifx \shownote     \undefined \def \shownote      #1{#1}          \fi
\ifx \showarticletitle \undefined \def \showarticletitle #1{#1}   \fi
\ifx \showURL      \undefined \def \showURL       {\relax}        \fi
\providecommand\bibfield[2]{#2}
\providecommand\bibinfo[2]{#2}
\providecommand\natexlab[1]{#1}
\providecommand\showeprint[2][]{arXiv:#2}

\bibitem[\protect\citeauthoryear{Bassily, Nissim, Stemmer, and
  Thakurta}{Bassily et~al\mbox{.}}{2020}]%
        {BNST20}
\bibfield{author}{\bibinfo{person}{Raef Bassily}, \bibinfo{person}{Kobbi
  Nissim}, \bibinfo{person}{Uri Stemmer}, {and} \bibinfo{person}{Abhradeep
  Thakurta}.} \bibinfo{year}{2020}\natexlab{}.
\newblock \showarticletitle{Practical Locally Private Heavy Hitters}.
\newblock \bibinfo{journal}{\emph{J. Mach. Learn. Res.}}  \bibinfo{volume}{21}
  (\bibinfo{year}{2020}), \bibinfo{pages}{16:1--16:42}.
\newblock


\bibitem[\protect\citeauthoryear{Bun, Nelson, and Stemmer}{Bun
  et~al\mbox{.}}{2019}]%
        {BNS19}
\bibfield{author}{\bibinfo{person}{Mark Bun}, \bibinfo{person}{Jelani Nelson},
  {and} \bibinfo{person}{Uri Stemmer}.} \bibinfo{year}{2019}\natexlab{}.
\newblock \showarticletitle{Heavy Hitters and the Structure of Local Privacy}.
\newblock \bibinfo{journal}{\emph{{ACM} Trans. Algorithms}}
  \bibinfo{volume}{15}, \bibinfo{number}{4} (\bibinfo{year}{2019}),
  \bibinfo{pages}{51:1--51:40}.
\newblock


\bibitem[\protect\citeauthoryear{Chan, Shi, and Song}{Chan
  et~al\mbox{.}}{2011}]%
        {ChanSS11}
\bibfield{author}{\bibinfo{person}{T.{-}H.~Hubert Chan},
  \bibinfo{person}{Elaine Shi}, {and} \bibinfo{person}{Dawn Song}.}
  \bibinfo{year}{2011}\natexlab{}.
\newblock \showarticletitle{Private and Continual Release of Statistics}.
\newblock \bibinfo{journal}{\emph{{ACM} Trans. Inf. Syst. Secur.}}
  \bibinfo{volume}{14}, \bibinfo{number}{3} (\bibinfo{year}{2011}),
  \bibinfo{pages}{26:1--26:24}.
\newblock


\bibitem[\protect\citeauthoryear{Devroye and Lugosi}{Devroye and
  Lugosi}{2001}]%
        {DL01}
\bibfield{author}{\bibinfo{person}{Luc Devroye} {and}
  \bibinfo{person}{G{\'{a}}bor Lugosi}.} \bibinfo{year}{2001}\natexlab{}.
\newblock \bibinfo{booktitle}{\emph{Combinatorial methods in density
  estimation}}.
\newblock \bibinfo{publisher}{Springer}.
\newblock
\showISBNx{978-0-387-95117-1}


\bibitem[\protect\citeauthoryear{Ding, Kulkarni, and Yekhanin}{Ding
  et~al\mbox{.}}{2017}]%
        {DKY17}
\bibfield{author}{\bibinfo{person}{Bolin Ding}, \bibinfo{person}{Janardhan
  Kulkarni}, {and} \bibinfo{person}{Sergey Yekhanin}.}
  \bibinfo{year}{2017}\natexlab{}.
\newblock \showarticletitle{Collecting Telemetry Data Privately}. In
  \bibinfo{booktitle}{\emph{Advances in Neural Information Processing Systems
  30: Annual Conference on Neural Information Processing Systems 2017, December
  4-9, 2017, Long Beach, CA, {USA}}},
  \bibfield{editor}{\bibinfo{person}{Isabelle Guyon}, \bibinfo{person}{Ulrike
  von Luxburg}, \bibinfo{person}{Samy Bengio}, \bibinfo{person}{Hanna~M.
  Wallach}, \bibinfo{person}{Rob Fergus}, \bibinfo{person}{S.~V.~N.
  Vishwanathan}, {and} \bibinfo{person}{Roman Garnett}} (Eds.).
  \bibinfo{pages}{3571--3580}.
\newblock


\bibitem[\protect\citeauthoryear{Dwork, Naor, Pitassi, and Rothblum}{Dwork
  et~al\mbox{.}}{2010}]%
        {DworkNPR10}
\bibfield{author}{\bibinfo{person}{Cynthia Dwork}, \bibinfo{person}{Moni Naor},
  \bibinfo{person}{Toniann Pitassi}, {and} \bibinfo{person}{Guy~N. Rothblum}.}
  \bibinfo{year}{2010}\natexlab{}.
\newblock \showarticletitle{Differential privacy under continual observation}.
  In \bibinfo{booktitle}{\emph{Proceedings of the 42nd {ACM} Symposium on
  Theory of Computing, {STOC} 2010, Cambridge, Massachusetts, USA, 5-8 June
  2010}}, \bibfield{editor}{\bibinfo{person}{Leonard~J. Schulman}} (Ed.).
  \bibinfo{publisher}{{ACM}}, \bibinfo{pages}{715--724}.
\newblock


\bibitem[\protect\citeauthoryear{Dwork and Roth}{Dwork and Roth}{2014}]%
        {DR14}
\bibfield{author}{\bibinfo{person}{Cynthia Dwork} {and} \bibinfo{person}{Aaron
  Roth}.} \bibinfo{year}{2014}\natexlab{}.
\newblock \showarticletitle{The Algorithmic Foundations of Differential
  Privacy}.
\newblock \bibinfo{journal}{\emph{Found. Trends Theor. Comput. Sci.}}
  \bibinfo{volume}{9}, \bibinfo{number}{3-4} (\bibinfo{year}{2014}),
  \bibinfo{pages}{211--407}.
\newblock


\bibitem[\protect\citeauthoryear{Erlingsson, Feldman, Mironov, Raghunathan,
  Talwar, and Thakurta}{Erlingsson et~al\mbox{.}}{2019}]%
        {EFMRTT19}
\bibfield{author}{\bibinfo{person}{{\'{U}}lfar Erlingsson},
  \bibinfo{person}{Vitaly Feldman}, \bibinfo{person}{Ilya Mironov},
  \bibinfo{person}{Ananth Raghunathan}, \bibinfo{person}{Kunal Talwar}, {and}
  \bibinfo{person}{Abhradeep Thakurta}.} \bibinfo{year}{2019}\natexlab{}.
\newblock \showarticletitle{Amplification by Shuffling: From Local to Central
  Differential Privacy via Anonymity}. In \bibinfo{booktitle}{\emph{Proceedings
  of the Thirtieth Annual {ACM-SIAM} Symposium on Discrete Algorithms, {SODA}
  2019, San Diego, California, USA, January 6-9, 2019}},
  \bibfield{editor}{\bibinfo{person}{Timothy~M. Chan}} (Ed.).
  \bibinfo{publisher}{{SIAM}}, \bibinfo{pages}{2468--2479}.
\newblock


\bibitem[\protect\citeauthoryear{Erlingsson, Pihur, and Korolova}{Erlingsson
  et~al\mbox{.}}{2014}]%
        {EPK14}
\bibfield{author}{\bibinfo{person}{{\'{U}}lfar Erlingsson},
  \bibinfo{person}{Vasyl Pihur}, {and} \bibinfo{person}{Aleksandra Korolova}.}
  \bibinfo{year}{2014}\natexlab{}.
\newblock \showarticletitle{{RAPPOR:} Randomized Aggregatable
  Privacy-Preserving Ordinal Response}. In
  \bibinfo{booktitle}{\emph{Proceedings of the 2014 {ACM} {SIGSAC} Conference
  on Computer and Communications Security, Scottsdale, AZ, USA, November 3-7,
  2014}}, \bibfield{editor}{\bibinfo{person}{Gail{-}Joon Ahn},
  \bibinfo{person}{Moti Yung}, {and} \bibinfo{person}{Ninghui Li}} (Eds.).
  \bibinfo{publisher}{{ACM}}, \bibinfo{pages}{1054--1067}.
\newblock


\bibitem[\protect\citeauthoryear{Fanti, Pihur, and Erlingsson}{Fanti
  et~al\mbox{.}}{2016}]%
        {FPE16}
\bibfield{author}{\bibinfo{person}{Giulia~C. Fanti}, \bibinfo{person}{Vasyl
  Pihur}, {and} \bibinfo{person}{{\'{U}}lfar Erlingsson}.}
  \bibinfo{year}{2016}\natexlab{}.
\newblock \showarticletitle{Building a {RAPPOR} with the Unknown:
  Privacy-Preserving Learning of Associations and Data Dictionaries}.
\newblock \bibinfo{journal}{\emph{Proc. Priv. Enhancing Technol.}}
  \bibinfo{volume}{2016}, \bibinfo{number}{3} (\bibinfo{year}{2016}),
  \bibinfo{pages}{41--61}.
\newblock


\bibitem[\protect\citeauthoryear{Joseph, Roth, Ullman, and Waggoner}{Joseph
  et~al\mbox{.}}{2018}]%
        {JRUW18}
\bibfield{author}{\bibinfo{person}{Matthew Joseph}, \bibinfo{person}{Aaron
  Roth}, \bibinfo{person}{Jonathan~R. Ullman}, {and} \bibinfo{person}{Bo
  Waggoner}.} \bibinfo{year}{2018}\natexlab{}.
\newblock \showarticletitle{Local Differential Privacy for Evolving Data}. In
  \bibinfo{booktitle}{\emph{Advances in Neural Information Processing Systems
  31: Annual Conference on Neural Information Processing Systems 2018, NeurIPS
  2018, December 3-8, 2018, Montr{\'{e}}al, Canada}},
  \bibfield{editor}{\bibinfo{person}{Samy Bengio}, \bibinfo{person}{Hanna~M.
  Wallach}, \bibinfo{person}{Hugo Larochelle}, \bibinfo{person}{Kristen
  Grauman}, \bibinfo{person}{Nicol{\`{o}} Cesa{-}Bianchi}, {and}
  \bibinfo{person}{Roman Garnett}} (Eds.). \bibinfo{pages}{2381--2390}.
\newblock


\bibitem[\protect\citeauthoryear{Matousek and Nesetril}{Matousek and
  Nesetril}{2009}]%
        {MN09}
\bibfield{author}{\bibinfo{person}{Jir{\'{\i}} Matousek} {and}
  \bibinfo{person}{Jaroslav Nesetril}.} \bibinfo{year}{2009}\natexlab{}.
\newblock \bibinfo{booktitle}{\emph{Invitation to Discrete Mathematics {(2.}
  ed.)}}.
\newblock \bibinfo{publisher}{Oxford University Press}.
\newblock
\showISBNx{978-0-19-857042-4}


\bibitem[\protect\citeauthoryear{Nguy{\^{e}}n, Xiao, Yang, Hui, Shin, and
  Shin}{Nguy{\^{e}}n et~al\mbox{.}}{2016}]%
        {NXYHSS16}
\bibfield{author}{\bibinfo{person}{Th{\^{o}}ng~T. Nguy{\^{e}}n},
  \bibinfo{person}{Xiaokui Xiao}, \bibinfo{person}{Yin Yang},
  \bibinfo{person}{Siu~Cheung Hui}, \bibinfo{person}{Hyejin Shin}, {and}
  \bibinfo{person}{Junbum Shin}.} \bibinfo{year}{2016}\natexlab{}.
\newblock \showarticletitle{Collecting and Analyzing Data from Smart Device
  Users with Local Differential Privacy}.
\newblock \bibinfo{journal}{\emph{CoRR}}  \bibinfo{volume}{abs/1606.05053}
  (\bibinfo{year}{2016}).
\newblock
\showeprint[arxiv]{1606.05053}


\bibitem[\protect\citeauthoryear{Robbins}{Robbins}{1955}]%
        {R55}
\bibfield{author}{\bibinfo{person}{Herbert Robbins}.}
  \bibinfo{year}{1955}\natexlab{}.
\newblock \showarticletitle{A Remark on Stirling's Formula}.
\newblock \bibinfo{journal}{\emph{The American Mathematical Monthly}}
  \bibinfo{volume}{62}, \bibinfo{number}{1} (\bibinfo{year}{1955}),
  \bibinfo{pages}{26--29}.
\newblock
\showISSN{00029890, 19300972}


\bibitem[\protect\citeauthoryear{Team}{Team}{2017}]%
        {D.P.Apple17}
\bibfield{author}{\bibinfo{person}{Differential~Privacy Team}.}
  \bibinfo{year}{2017}\natexlab{}.
\newblock \showarticletitle{Learning with Privacy at Scale}.
\newblock \bibinfo{journal}{\emph{Apple Machine Learning Journal}}
  \bibinfo{volume}{2017}, \bibinfo{number}{1} (\bibinfo{year}{2017}),
  \bibinfo{pages}{1--25}.
\newblock


\bibitem[\protect\citeauthoryear{Tops{\o}e}{Tops{\o}e}{2001}]%
        {topsoe2001bounds}
\bibfield{author}{\bibinfo{person}{Flemming Tops{\o}e}.}
  \bibinfo{year}{2001}\natexlab{}.
\newblock \showarticletitle{Bounds for entropy and divergence for distributions
  over a two-element set.}
\newblock \bibinfo{journal}{\emph{JIPAM. Journal of Inequalities in Pure \&
  Applied Mathematics [electronic only]}} \bibinfo{volume}{2},
  \bibinfo{number}{2} (\bibinfo{year}{2001}), \bibinfo{pages}{Paper--No}.
\newblock


\bibitem[\protect\citeauthoryear{Warner}{Warner}{1965}]%
        {W65}
\bibfield{author}{\bibinfo{person}{Stanley~L. Warner}.}
  \bibinfo{year}{1965}\natexlab{}.
\newblock \showarticletitle{Randomized Response: A Survey Technique for
  Eliminating Evasive Answer Bias}.
\newblock \bibinfo{journal}{\emph{J. Amer. Statist. Assoc.}}
  \bibinfo{volume}{60}, \bibinfo{number}{309} (\bibinfo{year}{1965}),
  \bibinfo{pages}{63--69}.
\newblock
\newblock
\shownote{PMID: 12261830}.


\bibitem[\protect\citeauthoryear{Zhou, Wang, Chan, Fanti, and Shi}{Zhou
  et~al\mbox{.}}{2021}]%
        {zhou2021locally}
\bibfield{author}{\bibinfo{person}{Mingxun Zhou}, \bibinfo{person}{Tianhao
  Wang}, \bibinfo{person}{Hubert Chan}, \bibinfo{person}{Giulia Fanti}, {and}
  \bibinfo{person}{Elaine Shi}.} \bibinfo{year}{2021}\natexlab{}.
\newblock \bibinfo{title}{Locally Differentially Private Sparse Vector
  Aggregation}.
\newblock
\newblock
\showeprint[arxiv]{2112.03449}~[cs.CR]


\bibitem[\protect\citeauthoryear{Úlfar Erlingsson, Feldman, Mironov,
  Raghunathan, Talwar, and Thakurta}{Úlfar Erlingsson et~al\mbox{.}}{2020}]%
        {erlingsson2020amplification}
\bibfield{author}{\bibinfo{person}{Úlfar Erlingsson}, \bibinfo{person}{Vitaly
  Feldman}, \bibinfo{person}{Ilya Mironov}, \bibinfo{person}{Ananth
  Raghunathan}, \bibinfo{person}{Kunal Talwar}, {and}
  \bibinfo{person}{Abhradeep Thakurta}.} \bibinfo{year}{2020}\natexlab{}.
\newblock \bibinfo{title}{Amplification by Shuffling: From Local to Central
  Differential Privacy via Anonymity}.
\newblock
\newblock
\showeprint[arxiv]{1811.12469}~[cs.LG]


\end{thebibliography}

\appendix
\section{Appendix}

\begin{fact}[Hoeffding's Inequality \citep{DL01}] \label{fact: hoeffding} Let $Y_1, \ldots$, $Y_n$ be independent real-valued random variables such that that $|Y_i| \in [a_i, b_i]$, $\forall i \in [n]$ with probability one. Let $Y = \sum_{i \in [n]} Y_i$, then for every~$\eta \ge 0$:
  \begin{align*}
    \P{ Y - \E{ Y } \ge \eta } & \le \exp \left( - \frac{2 \eta^2 }{ \sum_{i \in [n] } (b_i - a_i)^2 } \right)\,, \text{ and} \\
    \P{ \E{ Y } - Y \ge \eta } & \le \exp \left( - \frac{2 \eta^2 }{ \sum_{i \in [n] } (b_i - a_i)^2 } \right)\,.
  \end{align*}
\end{fact}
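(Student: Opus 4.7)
The plan is to establish Hoeffding's Inequality via the standard Chernoff--Hoeffding technique. For any $\lambda > 0$, applying Markov's inequality to the exponential of the deviation yields
\begin{equation*}
\P{ Y - \E{Y} \ge \eta } = \P{ e^{\lambda (Y - \E{Y})} \ge e^{\lambda \eta} } \le e^{-\lambda \eta} \cdot \E{ e^{\lambda (Y - \E{Y})} }.
\end{equation*}
Because the $Y_i$ are independent, the moment generating function on the right factorizes into $\prod_{i \in [n]} \E{ e^{\lambda (Y_i - \E{Y_i})} }$. I would then apply Hoeffding's Lemma to each factor: for a zero-mean random variable $X_i := Y_i - \E{Y_i}$ supported on an interval of width $b_i - a_i$, one has $\E{ e^{\lambda X_i} } \le \exp( \lambda^2 (b_i - a_i)^2 / 8 )$. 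Substituting and optimizing over $\lambda > 0$, namely choosing $\lambda = 4 \eta / \sum_i (b_i - a_i)^2$, produces the stated upper-tail bound. The lower-tail inequality follows verbatim by replaying the same argument with $-Y_i \in [-b_i, -a_i]$, which has identical interval widths.

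The main technical step is proving Hoeffding's Lemma itself. My approach would exploit the convexity of $x \mapsto e^{\lambda x}$ on the shifted interval $[a_i', b_i'] := [a_i - \E{Y_i}, b_i - \E{Y_i}]$: writing every $x$ in that interval as a convex combination of the endpoints gives $e^{\lambda x} \le \tfrac{b_i' - x}{b_i' - a_i'} e^{\lambda a_i'} + \tfrac{x - a_i'}{b_i' - a_i'} e^{\lambda b_i'}$. Taking expectations and using $\E{X_i} = 0$, then setting $h := \lambda (b_i - a_i)$ and $p := -a_i'/(b_i - a_i) \in [0,1]$, this reduces the lemma to the univariate claim $\psi(h) := -p h + \log(1 - p + p e^h) \le h^2 / 8$ for every $h \in \R$.

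I expect the main obstacle to be this last univariate inequality, which I would attack via a Taylor expansion of $\psi$ about $h = 0$. Direct differentiation yields $\psi(0) = 0$, $\psi'(0) = 0$, and $\psi''(h) = q(1-q)$ where $q := p e^h / (1 - p + p e^h) \in [0,1]$. The elementary bound $q(1-q) \le 1/4$ then gives $\psi''(h) \le 1/4$ uniformly in $h$, and Taylor's theorem with remainder delivers $\psi(h) \le h^2/8$ as required. All remaining pieces of the proof, namely Markov's inequality, the MGF factorization under independence, and the convex optimization in $\lambda$, are routine.
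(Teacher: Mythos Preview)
Your proposal is the standard textbook derivation of Hoeffding's inequality and is correct. Note, however, that the paper does not prove this statement at all: it is recorded as a \emph{Fact} with a citation to \cite{DL01}, and is used as a black box (via Corollary~\ref{corollary: confident interval of hoeffding}) in the proof of Lemma~\ref{lemma: error of single estimator for basic private algorithm}. So there is no ``paper's own proof'' to compare against; you have supplied a self-contained argument where the paper simply invokes the literature. As a minor remark, the statement's hypothesis ``$|Y_i| \in [a_i, b_i]$'' is almost certainly a typo for ``$Y_i \in [a_i, b_i]$'', and your proof correctly reads it that way.
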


\begin{corollary} \label{corollary: confident interval of hoeffding}
  Let $Y_1, \ldots, Y_n$ be independent real-valued random variables such that that $|Y_i| \in [-1, 1], \forall i \in [n]$ with probability one.
  Let $Y = \sum_{i \in [n]} Y_i$. Then, for $\beta \ge 0$, with probability at most $\beta$, it holds that
  $$
    | Y - \E{ Y } | \ge \sqrt{ 2 n \cdot \ln (2 / \beta) }.
  $$
\end{corollary}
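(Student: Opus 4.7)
The plan is to derive the corollary as an immediate two-sided consequence of Hoeffding's Inequality, stated just above as Fact~\ref{fact: hoeffding}. Since each $Y_i$ is supported on $[-1, 1]$, I would instantiate the fact with $a_i = -1$ and $b_i = 1$ for every $i \in [n]$. This gives $b_i - a_i = 2$, so $\sum_{i \in [n]} (b_i - a_i)^2 = 4n$, which is the only place the ambient range of the $Y_i$ enters the calculation.

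Next I would choose $\eta = \sqrt{2n \cdot \ln (2/\beta)}$ and substitute into both tail bounds of Fact~\ref{fact: hoeffding}. A direct computation yields $2\eta^2 / (4n) = \ln(2/\beta)$, so each of the upper tail $\P{Y - \E{Y} \ge \eta}$ and the lower tail $\P{\E{Y} - Y \ge \eta}$ is at most $\exp(-\ln(2/\beta)) = \beta/2$. A union bound over the two tail events then combines them into $\P{|Y - \E{Y}| \ge \eta} \le \beta$, which is exactly the claim.

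There is no genuine obstacle here: the corollary is a textbook symmetric repackaging of Hoeffding. The only book-keeping to be careful about is lining up the factor of two from the union bound with the $\ln(2/\beta)$ under the square root (rather than $\ln(1/\beta)$), and using the symmetric choice $a_i = -1$, $b_i = 1$ rather than the looser $a_i = 0$, $b_i = 1$ so that the constant in the exponent matches the constant in the stated threshold. Once these constants are aligned, the proof is a two-line substitution.
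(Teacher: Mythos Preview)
Your proposal is correct and is exactly the standard derivation the paper has in mind: the paper gives no explicit proof of this corollary, stating it immediately after Fact~\ref{fact: hoeffding} as a direct consequence, and your substitution $a_i=-1$, $b_i=1$, $\eta=\sqrt{2n\ln(2/\beta)}$ together with a union bound over the two tails is precisely how one reads it off.
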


\begin{fact}[Stirling's Approximation \cite{R55, MN09}] \label{fact: stirling} For $n = 1, 2, ...$
    \begin{equation} \label{ineq: stirling}
            \sqrt{2 \pi n} \left( \frac{n}{e} \right)^n \exp \left( \frac{1}{12n + 1} \right) 
            \le n! 
            \le 
            \sqrt{2 \pi n} \left( \frac{n}{e} \right)^n \exp \left(  \frac{1}{12n} \right).
    \end{equation}
\end{fact}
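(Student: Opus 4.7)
The plan is to follow the classical Robbins argument, which establishes the exact constants $1/(12n+1)$ and $1/(12n)$ via a telescoping estimate on a carefully chosen auxiliary sequence, followed by a calibration using Wallis's product. I would define
\[
    d_n \doteq \ln n! - \paren{n + \tfrac{1}{2}} \ln n + n,
\]
so that the claim is equivalent to $L + 1/(12n+1) \le d_n \le L + 1/(12n)$ for every $n \ge 1$, where $L = \tfrac{1}{2} \ln(2\pi)$. The core technical quantity is the telescoping difference $d_n - d_{n+1} = (n + \tfrac{1}{2}) \ln \tfrac{n+1}{n} - 1$.

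First, I would introduce the substitution $u = 1/(2n+1)$, under which $\tfrac{n+1}{n} = \tfrac{1+u}{1-u}$ and $n + \tfrac{1}{2} = \tfrac{1}{2u}$. Expanding $\ln\tfrac{1+u}{1-u} = 2 \sum_{j \ge 0} u^{2j+1}/(2j+1)$ and multiplying through produces the positive power series
\[
    d_n - d_{n+1} = \frac{u^2}{3} + \frac{u^4}{5} + \frac{u^6}{7} + \cdots.
\]

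Second, I would extract two monotonicity statements from this series via term-by-term bounds. Majorizing each denominator $2j+1$ for $j \ge 1$ by $3$ yields a geometric upper bound $d_n - d_{n+1} < u^2/(3(1-u^2)) = 1/(12n(n+1)) = 1/(12n) - 1/(12(n+1))$, so $d_n - 1/(12n)$ is strictly increasing. Retaining only the leading term gives $d_n - d_{n+1} > 1/(3(2n+1)^2)$, and a short algebraic check (the inequality $(12n+1)(12n+13) > 36(2n+1)^2$ reduces to $24n > 23$, valid for $n \ge 1$) shows $1/(3(2n+1)^2) > 1/(12n+1) - 1/(12n+13)$, so $d_n - 1/(12n+1)$ is strictly decreasing. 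The two monotone sequences squeeze together, so $d_n$ converges to a finite limit $L$, and for every $n \ge 1$ we obtain $L + 1/(12n+1) \le d_n \le L + 1/(12n)$.

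The step I expect to be the main obstacle is pinning down the numerical value of $L$, since the monotonicity argument by itself only determines $d_n$ up to an additive constant. Here I would invoke Wallis's product $\pi/2 = \lim_n \tfrac{2^{4n}(n!)^4}{((2n)!)^2(2n+1)}$; substituting the asymptotic $n! = e^{L + o(1)} \sqrt{n}\,(n/e)^n$ on both sides collapses the right-hand side to $e^{2L}/4$, forcing $e^L = \sqrt{2\pi}$. Exponentiating the sandwich bound then yields the stated inequalities on $n!$.
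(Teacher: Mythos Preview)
Your argument is correct and is precisely the classical Robbins proof; the paper, however, does not prove this statement at all---it records it as a cited \emph{Fact} (the reference \texttt{R55} is Robbins's 1955 note, whose method you have reproduced). So there is nothing to compare against in the paper itself: your write-up supplies exactly the proof the citation points to.
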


\begin{fact}[Entropy Bound~\cite{topsoe2001bounds}]
    Let~$H(x) \doteq - x \log x - (1-x) \log (1 - x)$, $\forall x \in [0, 1]$ be the binary entropy function. Then 
    \begin{equation*}
        4x(1 - x) \le H(x).
    \end{equation*}
\end{fact}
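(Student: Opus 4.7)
The plan is to reduce both sides of $4x(1-x) \le H(x)$ to explicit power series in the symmetry variable $t = 2x - 1 \in [-1,1]$, and then invoke the Leibniz series for $\ln 2$. Substituting $x = (1+t)/2$ gives $4x(1-x) = 1 - t^2$, and factoring out $-\log_2 2 = -1$ from each logarithm in $H(x)$ yields
\begin{equation*}
H(x) \;=\; 1 \;-\; \tfrac{1}{2}\bigl[(1+t)\log_2(1+t) + (1-t)\log_2(1-t)\bigr],
\end{equation*}
so the target becomes $t^2 \ge \tfrac{1}{2\ln 2}\bigl[(1+t)\ln(1+t) + (1-t)\ln(1-t)\bigr]$ for all $t \in [-1,1]$.

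Next I would establish the closed-form identity
\begin{equation*}
\tfrac{1}{2}\bigl[(1+t)\ln(1+t) + (1-t)\ln(1-t)\bigr] \;=\; \sum_{k=1}^{\infty}\frac{t^{2k}}{(2k-1)(2k)}.
\end{equation*}
This follows by integrating $\tfrac{d}{dt}[(1+t)\ln(1+t)] = \ln(1+t) + 1$ termwise against $\ln(1+t) = t - t^2/2 + t^3/3 - \cdots$, giving $(1+t)\ln(1+t) = t + \sum_{m\ge 2}\frac{(-1)^m t^m}{m(m-1)}$, and substituting $t \mapsto -t$ for the companion expansion; adding the two and halving kills the odd-power terms and leaves the stated series.

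To finish, I would use the Leibniz identity $\ln 2 = 1 - \tfrac{1}{2} + \tfrac{1}{3} - \tfrac{1}{4} + \cdots = \sum_{k\ge 1}\frac{1}{(2k-1)(2k)}$. The target reads $t^2\ln 2 \ge \sum_{k\ge 1}\frac{t^{2k}}{(2k-1)(2k)}$, which is immediate at $t = 0$ and, for $t\neq 0$, is equivalent (after dividing by $t^2$) to $\ln 2 \ge \sum_{k\ge 1}\frac{t^{2k-2}}{(2k-1)(2k)}$; this holds termwise because $|t|\le 1$ forces $t^{2k-2}\le 1$. Equality occurs precisely at $t \in \{-1,0,1\}$, i.e., $x \in \{0,1/2,1\}$, matching the three known zeros of $H(x) - 4x(1-x)$.

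The main obstacle is the bookkeeping in the series step: verifying that after symmetrization the odd powers cancel and the coefficient of $t^{2k}$ is exactly $1/[(2k-1)(2k)]$, the form that makes the Leibniz identity fit seamlessly. Choosing $t = 2x-1$ at the outset exploits the symmetry of $H$ about $x = 1/2$ and does most of the heavy lifting; what remains is a routine termwise comparison of two convergent positive series.
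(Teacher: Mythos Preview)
The paper does not prove this statement; it quotes it as a \emph{Fact} with a citation to Tops{\o}e and moves on. Your proposal is therefore not competing with a paper proof, and it is correct as written: the substitution $t=2x-1$ linearizes the symmetry, the power-series identity
\[
\tfrac{1}{2}\bigl[(1+t)\ln(1+t)+(1-t)\ln(1-t)\bigr]=\sum_{k\ge 1}\frac{t^{2k}}{(2k-1)(2k)}
\]
follows exactly as you outline, and the termwise comparison against $t^2\ln 2 = t^2\sum_{k\ge 1}\frac{1}{(2k-1)(2k)}$ closes the argument. One small remark: the identity $\ln 2 = \sum_{k\ge 1}\frac{1}{(2k-1)(2k)}$ is the alternating harmonic (Mercator) series in paired form, not the Leibniz series (which sums to $\pi/4$); the name is off but the mathematics is fine.
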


\begin{corollary} \label{corollary: entropy inequality} \label{corollary: binary entropy inequality}
  For each $x \in [-1/2, 1/ 2]$, 
    \begin{equation} 
        1 - 4 x^2 \le H( 1 / 2 - x). 
    \end{equation}
\end{corollary}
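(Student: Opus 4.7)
The plan is to derive the corollary as an immediate substitution into the Entropy Bound. For $x \in [-1/2, 1/2]$, set $y \doteq 1/2 - x$, and observe that $y \in [0, 1]$, placing it in the domain of~$H$. The goal inequality becomes $1 - 4x^2 \le H(y)$, so the proof reduces to verifying that $1 - 4x^2$ equals the lower bound $4y(1-y)$ supplied by the Entropy Bound at~$y$.

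I would then compute $1 - y = 1/2 + x$, so that
\begin{equation*}
    4y(1-y) = 4 \paren{ \tfrac{1}{2} - x } \paren{ \tfrac{1}{2} + x } = 4 \paren{ \tfrac{1}{4} - x^2 } = 1 - 4 x^2.
\end{equation*}
Applying the Entropy Bound to~$y$ yields $H(y) \ge 4y(1-y) = 1 - 4x^2$, and substituting $y = 1/2 - x$ gives $H(1/2 - x) \ge 1 - 4x^2$, as required.

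There is no real obstacle: the only thing to check is that the substitution keeps~$y$ inside~$[0,1]$, which is immediate from $x \in [-1/2, 1/2]$, and that the algebraic identity $(1/2 - x)(1/2 + x) = 1/4 - x^2$ holds. The corollary is therefore a one-line consequence of the preceding fact.
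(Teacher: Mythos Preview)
Your proposal is correct and is exactly the intended derivation: the paper states the corollary immediately after the Entropy Bound fact without an explicit proof, so the substitution $y = 1/2 - x$ and the identity $4y(1-y) = 1 - 4x^2$ that you spell out is precisely what is meant.
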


\subsection{Proofs for Section~\ref{sec: our randomizer}}

\subsubsection{\bf Lemma~\ref{lemma: probability of modified composed randomizer}}
\label{appendix: subsubsec lemma: probability of modified composed randomizer}

\begin{proof}[Proof of Lemma~\ref{lemma: probability of modified composed randomizer}.]
    Recall that $p = 1 / ( e^{ \tildeeps } + 1)$, $1 - p = e^{\tildeeps} p$, 
    \begin{equation}
        \label{appendix eq: def of g}
        g(i) \doteq p^i (1 - p)^{k - i}
        = p^k \cdot e^{ \tildeeps \cdot (k - i) }, \forall i \in \IntSet{0}{k}, 
    \end{equation}
    and
    \begin{align*}
        \pavg \doteq p^{kp} (1 - p)^{k - kp}
        = g(kp)
        = p^k \cdot e^{ \tildeeps \cdot (k - kp) }.
    \end{align*}
    By definitions of $g$ and $\pavg$, for each $j \in \mbbZ$, it holds that
    \begin{equation} \label{eq: prob exp via pavg}
        \begin{array}{rll}
            g(kp + j) & = \pavg \cdot e^{- \tildeeps \cdot j},\, & \text{if}\, 0 \le kp + j \le k,\, \\
            g(kp - j) & = \pavg \cdot e^{\tildeeps \cdot j},\,   & \text{if}\, 0 \le kp - j \le k.\,
        \end{array}
    \end{equation}
    To prove the lemma, we consider $s  \in \Annulus{b}$ and $s  \in \set{-1, 1}^k \setminus \Annulus{b}$ separately.
    We show that for each~$b \in \set{-1, 1}^k$, it holds that
    \begin{align}
        \label{appendix ineq: inner probability of modified composed randomizer}
        \P{ \tilde{\cR} (b) = s  } 
        &\in \left[
            1 / 2^k , e^{ 2 \tildeeps \sqrt{ k } } \cdot
            \pavg
            \right],\, \quad \forall s \in \Annulus{b},
        \\
        \label{appendix ineq: outter probability of modified composed randomizer}
        \P{ \tilde{\cR} ( b ) = s  } 
        &\in \left[
            e^{-3 \tildeeps \sqrt{ k } } \cdot \pavg,
            1 / 2^k
            \right],\, \quad \forall s \notin \Annulus{b}.
    \end{align}
    
    \vspace{1mm}
    \noindent {\bf Proof of Inequality~(\ref{appendix ineq: inner probability of modified composed randomizer}).}
    Since~$\NewLb = kp - 2 \sqrt{k}$, 
    $
        g(\NewLb) = e^{ 2 \tildeeps \sqrt{ k } } \cdot \pavg.
    $
    We will also prove that
    \begin{equation}
        g(kp) \ge 2^{-k} \ge g( k / 2). 
    \end{equation}
    and that~$g(\NewUb) = 2^{-k}$.
    As~$g$ is a decreasing function, it follows that~$\NewUb \in [kp, k / 2]$.

    Recall that for an input~$b \in \set{-1, 1}^k$,~$\tilde{\cR}$ first generates~$b' = \cR(b) \doteq \PAREN{\cR(b_1), \ldots, \cR(b_k) }$. 
    If~$b' \in \Annulus{b}$,~$\tilde{\cR}$ outputs it directly. 
    Therefore, for each~$s \in \Annulus{b}$, it holds that
    $$
        \P{\tilde{\cR} (b) = s  } = \P{ \cR(b) = s  } = g( \norm{ b - s }_0 ).
    $$
    Noting that~$\norm{ b - s }_0 \in \IntSet{\NewLb}{\NewUb}$, we obtain
    \begin{equation*}
        2^{-k} = g(\NewUb) \le \P{\tilde{\cR} (b) = s  } = g( \norm{ b - s }_0 ) \le  g( \NewLb ) = e^{ 2 \tildeeps \sqrt{ k } } \cdot \pavg, 
    \end{equation*}
    which proves Inequality~(\ref{appendix ineq: inner probability of modified composed randomizer}).
    
    \noindent {\it Proving~$g(kp) \ge 1 / 2^k \ge g(k / 2)$.}
    Via convexity of the function $y = -\log x$, we have
    \begin{align*}
        \frac{ \log ( g(kp) ) }{ k }
         & =  - p \log \frac{1}{p} - (1 - p) \log \frac{1}{1 - p}                   \\
         & \ge - \log \left( p \cdot \frac{1}{p} + (1 - p) \cdot \frac{1}{1 - p} \right)
        = -1.
    \end{align*}
    Via concavity of the function $y = \log x$, we have \begin{align*}
        \frac{ \log g(k / 2) }{ k }
         & = \frac{1}{2} \cdot \log p +  \frac{1}{2} \cdot \log (1 - p)              \\
         & \le \log \left( \frac{1}{2} \cdot p + \frac{1}{2} \cdot (1 - p)  \right) = -1.
    \end{align*}
    It follows that
    \begin{equation} \label{ineq: pavg at least 1 over 2k}
        \pavg = g(kp) \ge 1 / 2^k \ge g(k / 2).
    \end{equation}
    
    \noindent {\it Proving~$g(\NewUb) = 1 / 2^k$.}
    Via Equation~(\ref{appendix eq: def of g}) and the definitions of~$p$ and~$\NewUb$, 
    \begin{align*}
        g(\NewUb)
         & = p^k \cdot \exp \PAREN{ \tildeeps \cdot \PAREN{ k - \frac{k}{ \tildeeps } \cdot \ln \frac{2 e^{\tildeeps} }{ e^{\tildeeps} + 1 } } }                                          \\
         & = \PAREN{ \frac{1 }{ e^{ \tildeeps } + 1 } }^k \cdot \exp \PAREN{ \tildeeps \cdot \PAREN{ k - \frac{k}{ \tildeeps } \cdot \ln \frac{2 e^{\tildeeps} }{ e^{\tildeeps} + 1 } } } \\
         & = \exp \PAREN{ k \ln \PAREN{ \frac{1 }{ e^{ \tildeeps } + 1 } } + k \tildeeps - k \cdot \ln \frac{2 e^{\tildeeps} }{ e^{\tildeeps} + 1 } }                                     \\
         & = \exp \PAREN{ k \ln \PAREN{ \frac{ e^{ \tildeeps } }{ e^{ \tildeeps } + 1 } } - k \cdot \ln \frac{2 e^{\tildeeps} }{ e^{\tildeeps} + 1 } }                                    
         = 1 / 2^k.
    \end{align*}
    Therefore, $\NewUb \in [kp, k / 2]$.

    \vspace{1mm}
    \noindent {\bf Proof of Inequality~(\ref{appendix ineq: outter probability of modified composed randomizer}).}
    Let $\overline{ \IntSet{\NewLb}{\NewUb} } \doteq \IntSet{0}{k} \setminus \IntSet{\NewLb}{\NewUb}$.
    Denote~$\cR(b) \doteq \PAREN{\cR(b_1), \ldots, \cR(b_k) }$. 
    First, observe
    $$
        \P{ \cR(b) \notin \Annulus{b} } = \sum_{ i \in \overline{ \IntSet{\NewLb}{\NewUb} } } \binom{k}{i} g(i).
    $$
    As each $s \notin \Annulus{b}$ is outputted uniformly at random by~$\tilde{\cR}$ when~$b' = \cR (b) \notin \Annulus{b}$, it holds that 
    \begin{equation}
        \P{ \tilde{\cR} (b ) = s } = \PrOut^* \doteq
        \frac{
            \sum_{ i \in \overline{ \IntSet{\NewLb}{\NewUb} } } \binom{k}{i} g(i)
        }
        {
            \sum_{ i \in \overline{ \IntSet{\NewLb}{\NewUb} } } \binom{k}{i}
        }.
    \end{equation}
    \noindent We upper bound and lower bound $\PrOut^*$ separately.

    \noindent \textbf{Upper Bound For $\PrOut^*$.}
    As~$g$ is decreasing and~$g(\NewUb) = 1 / 2^k$, it holds that
    $$
        \sum_{i \in \IntSet{\NewLb}{\NewUb} } \binom{k}{i} g(i) \ge \sum_{i \in \IntSet{\NewLb}{\NewUb} } \binom{k}{i} \cdot 1 / 2^k.
    $$
    Combing with the fact that
    $$
        \sum_{i \in \IntSet{0}{k}  } \binom{k}{i} g(i)
        = 1
        = \sum_{i \in \IntSet{0}{k}  } \binom{k}{i} \cdot 1 / 2^k,
    $$
    we obtain
    \begin{align*}
        \sum_{i \in \overline{ \IntSet{\NewLb}{\NewUb} }  } \binom{k}{i} g(i) \le \sum_{i \in \overline{ \IntSet{\NewLb}{\NewUb} }  } \binom{k}{i} \cdot 1 / 2^k.
    \end{align*}
    It concludes that $\PrOut^* \le 1 / 2^k$.

    \vspace{1mm}
    \noindent \textbf{Lower Bound For $\PrOut^*$.}
    Since $\NewLb + \NewUb \le kp - 2 \sqrt{k} + k / 2 < k$, it holds that $k - \NewLb > \NewUb$.
    Hence, we can partition the numerator and denominator of $\PrOut^*$ into three parts:

    \begin{equation*}
        \numerator{1} \doteq \sum_{ i = 0 }^{ \NewLb - 1 } \binom{k}{i} g(i),
        \numerator{2} \doteq \sum_{ i = \NewUb + 1 }^{ k - \NewLb} \binom{k}{i} g(i),
        \numerator{3} \doteq \sum_{ i = k - \NewLb + 1 }^{ k} \binom{k}{i} g(i).
    \end{equation*}
    \begin{align*}
        \denominator{1} \doteq \sum_{ i = 0 }^{ \NewLb - 1 } \binom{k}{i},\,
        \denominator{2} \doteq \sum_{ i = \NewUb + 1 }^{ k - \NewLb} \binom{k}{i},\,
        \denominator{3} \doteq \sum_{ i = k - \NewLb + 1 }^{ k} \binom{k}{i}.
    \end{align*}
    We will prove that
    \begin{equation}
        \label{ineq: nu2 vs de 2}
        \frac{\numerator{2}}{\denominator{2}}
        \ge \pavg \cdot e^{- \tildeeps \cdot \PAREN{ k(1 - 2p) + 2 \sqrt{k} } },
    \end{equation}
    and
    \begin{equation}
        \label{ineq: nu1 nu3 vs de 1 de3}
        \frac{\numerator{1} + \numerator{3}}{\denominator{1} + \denominator{3}}
        \ge \pavg \cdot e^{- \tildeeps \cdot k (1 - 2p)  }.
    \end{equation}
    It follows that
    \begin{align*}
        \PrOut^*
        = \frac{ \numerator{1} + \numerator{2} +  \numerator{3} }{ \denominator{1} + \denominator{2} + \denominator{3} }
        \ge \pavg \cdot e^{- \tildeeps \cdot \PAREN{ k(1 - 2p) + 2 \sqrt{k} } }.
    \end{align*}
    Since $\PAREN{e^x - 1} / \PAREN{e^x + 1} \le x / 2$ for all $x \ge 0$, we have
    $$
        k (1 - 2p) = k \cdot ( e^{\tildeeps} - 1 ) / ( e^{\tildeeps} + 1  ) \le k \cdot \tildeeps / 2.
    $$
    Via the assumption that~$\eps \le 1$, we get~$\tildeeps = \eps / \paren{ 5 \sqrt{k} } \le 1 / \sqrt{k}$, and~$k (1 - 2p) \le \sqrt{k}$.
    Therefore,
    $$
        \PrOut^*
        \ge
        e^{- \tildeeps \cdot \PAREN{ 3 \sqrt{k} } }  \cdot \pavg.
    $$

    \vspace{1mm}
    \noindent {\it Proof For Inequality~(\ref{ineq: nu2 vs de 2}).}
    Via Equation~(\ref{eq: prob exp via pavg}),
    \begin{align*}
        g(k - \NewLb)
         & = g(k - k p + 2 \sqrt{k}) = g( k p + (k - 2 k p) + 2 \sqrt{k})         \\
         & = \pavg \cdot e^{- \tildeeps \cdot \PAREN{ k(1 - 2p) + 2 \sqrt{k} } }.
    \end{align*}
    On the other hand, as $g$ is a decreasing function,
    \begin{align*} \label{ineq: lower bound num2}
        \numerator{2}
        \ge \sum_{ i = \NewUb + 1 }^{ k - \NewLb} \binom{k}{i} g(k - \NewLb)
        \ge \denominator{2} \cdot \pavg \cdot e^{- \tildeeps \cdot \PAREN{ k(1 - 2p) + 2 \sqrt{k} } }.
    \end{align*}

    \vspace{1mm}
    \noindent {\it Proof For Inequality~(\ref{ineq: nu1 nu3 vs de 1 de3}).}
    Both $\numerator{1}$ and $\numerator{3}$ contain $\NewLb$ summands.
    We pair up these terms in $\numerator{1} + \numerator{3}$ as
    \begin{align*}
          & \sum_{ j = 1 }^{ \NewLb } \left( \binom{k}{\NewLb - j} g( \NewLb - j) + \binom{k}{k - \NewLb + j } g(k - \NewLb + j ) \right) \\
        = & \sum_{ j = 1 }^{ \NewLb } \binom{k}{\NewLb - j} \cdot \PAREN{ g( \NewLb - j ) +  g( k - \NewLb + j ) }.
    \end{align*}

    \noindent Via Equation~(\ref{eq: prob exp via pavg}) and the definition of $\NewLb$,
    $$
        g(\NewLb - j)
        = g(kp - 2 \sqrt{k} - j) = \pavg \cdot e^{ \tildeeps \cdot \PAREN{ 2 \sqrt{k} + j } }.
    $$
    Similarly, using that $k - \NewLb = kp + k(1 - 2p) + 2 \sqrt{k}$, we get
    \begin{align*}
        g( k - (\NewLb - j) )
         & = g( kp + k(1 - 2p) + 2 \sqrt{k} + j )                               \\
         & = \pavg \cdot e^{- \tildeeps \cdot \PAREN{ k(1 - 2p) + 2 \sqrt{k} + j } }.
    \end{align*}
    Via the inequality that $x + 1 / x \ge 2, \forall x > 0$,
    \begin{align*}
         & g(\NewLb - j) + g( k - (\NewLb - j) )                                                                                                                  \\
         & = \pavg \cdot e^{ \tildeeps \cdot ( 2 \sqrt{k} + j ) } +  \pavg \cdot e^{- \tildeeps \cdot ( k(1 - 2p) + 2 \sqrt{k} + j ) }                            \\
         &\ge \pavg \cdot \PAREN{ e^{ \tildeeps \cdot ( 2 \sqrt{k} + j ) } +  e^{- \tildeeps \cdot ( 2 \sqrt{k} + j ) } } \cdot e^{- \tildeeps \cdot k (1 - 2p)  } \\
         &\ge \pavg \cdot 2 \cdot e^{- \tildeeps \cdot k (1 - 2p)  }.
    \end{align*}
    It concludes that
    \begin{align}
        \numerator{1} + \numerator{3}
         & \ge \sum_{ j = 1 }^{ \NewLb} \binom{k}{\NewLb - j} \cdot \pavg \cdot 2 \cdot e^{- \tildeeps \cdot k (1 - 2p)  }.
    \end{align}
    Noting that $\denominator{1} + \denominator{3} = 2 \cdot \sum_{ j = 1 }^{ \NewLb} \binom{k}{\NewLb - j}$ finishes the proof.

\end{proof}

\subsubsection{\bf Lemma~\ref{lemma: lower bound of csvr}}
\label{appendix: subsubsec lemma: lower bound of csvr}

\begin{proof}[Proof of Lemma~\ref{lemma: lower bound of csvr}]
    By symmetry of~$\tilde{\cR}$, it suffices to prove that Inequality~(\ref{lemma: lower bound of csvr}) holds for the first coordinate of~$\tilde{b}$.
    
    We study the probability of $\P{ \tilde{b}_1 = b_1 }$.
    Recall that for an input~$b \in \set{-1, 1}^k$,~$\tilde{\cR}$ first generates~$b' = \cR(b) \doteq \paren{\cR(b_1), \ldots, \cR(b_k) }$. 
    If~$b' \in \Annulus{b}$,~$\tilde{\cR}$ outputs it directly. 
    Therefore, for each~$s \in \Annulus{b}$, it holds that
    $$
        \P{\tilde{b} = s  } = \P{ \cR(b) = s  } = g( \norm{ b - s }_0 ).
    $$
    For each~$i \in \IntSet{\NewLb}{\NewUb}$, consider the set~$\{ s \in \set{-1, 1}^k : \norm{s - b}_0 = i \} \subset \Annulus{b}$.
    The set contains~$\binom{k}{i}$ sequences such that each sequence~$s$ in the set is outputted by~$\tilde{\cR}$ with probability
    $$  
        \P{\tilde{b} = s  } = \P{ \cR(b) = s  } = g( \norm{ b - s }_0 ) = g(i).
    $$
    
    Note that each sequence in the set~$\{ s \in \set{-1, 1}^k : \norm{s - b}_0 = i \} \subset \Annulus{b}$ differs in~$i$ coordinates from~$b$.
    If we pick a sequence uniformly at random from this set, with probability~$i / k$, we obtain a sequence~$s$ such that $s_1 \neq b_1$.
    It follows that the fraction of sequences in the set whose first coordinate equals~$b_1$ is given by~$\paren{k - i} / k$.
    Therefore,
    $$
        \Pr \left[ \tilde{b}_1 = b_1 , \, \norm{ \tilde{b} - b } = i \right] = \binom{ k }{ i } g(i) \frac{k - i}{k}.
    $$
    Similarly, we can prove that for each~$i \in \overline{ \IntSet{\NewLb}{\NewUb} }$, 
    $$
        \Pr \left[ \tilde{b}_1 = b_1 , \, \norm{ \tilde{b} - b } = i \right] 
        = \binom{ k }{ i }  \PrOut^* \frac{k - i}{k}.
    $$
    Summing over~$i \in \IntSet{0}{k}$, we get 
    \begin{equation*}
        \P{ \tilde{b}_1 = b_1 }
        = \sum_{ i = \NewLb }^{  \NewUb } \binom{ k }{ i } g(i) \frac{k - i}{k}
        + \PrOut^* 
        \sum_{ i \in \overline{ \IntSet{\NewLb}{\NewUb} } } \binom{ k  }{ i  } \cdot \frac{k - i}{k},\,
    \end{equation*}
    where~$\overline{ \IntSet{\NewLb}{\NewUb} } = \IntSet{0}{\NewLb - 1} \cup \IntSet{\NewUb + 1}{k}$.
    Similarly,  
    \begin{equation*}
        \P{ \tilde{b}_1 = -b_1 } = \sum_{i = \NewLb }^{  \NewUb } \binom{ k }{ i } g(i) \frac{ i }{ k }
        + \PrOut^* 
        \sum_{ i \in \overline{ \IntSet{\NewLb}{\NewUb} } } \binom{ k  }{ i  } \cdot \frac{i}{k}.
    \end{equation*}
    It follows that
    \begin{equation}
        \cGap = \sum_{ i = \NewLb }^{ \NewUb } g(i)  \binom{k}{i}  \frac{k - 2i}{k}
        + \sum_{ i \in \overline{ \IntSet{\NewLb}{\NewUb} } } \PrOut^* \binom{k}{i}  \frac{k - 2i}{k}.
    \end{equation}

    \noindent For each $i \in \IntSet{0}{k}$, it holds that
    $$
        \binom{k}{i} \frac{k - 2i}{k} + \binom{k}{k - i}  \frac{k - 2 \PAREN{k - i}}{k}
        = \binom{k}{i} \cdot \PAREN{ 2 -  2 \cdot \frac{i + k - i}{k} }
        = 0.
    $$
    For each~$i \in \overline{ \IntSet{\NewLb}{\NewUb} }$, if
    ~$i \notin \IntSet{k - \NewUb}{k - \NewLb}$, then $k - i \in \overline{ \IntSet{\NewLb}{\NewUb} }$.
    Therefore, by pairing up~$\binom{k}{i} \frac{k - 2i}{k}$ with~$\binom{k}{k - i} \frac{k - 2(k - i)}{k}$ for each~$i \in \overline{ \IntSet{\NewLb}{\NewUb} } \setminus \IntSet{k - \NewUb}{k - \NewLb}$, 
    we obtain
    \begin{align*}
        \sum_{ i \overline{ \IntSet{\NewLb}{\NewUb} } } \binom{ k  }{ i  } \cdot  \frac{k - 2i}{k}
        =
        \sum_{ i \in [k - \NewUb, k - \NewLb] } \binom{ k  }{ i  } \cdot  \frac{k - 2i}{k}.
    \end{align*}

    \noindent Therefore,
    \begin{align*}
        \cGap
         & =\sum_{ i = \NewLb }^{  \NewUb } \binom{ k }{ i } g(i) \frac{k - 2i}{k}        
            + \PrOut^* \sum_{ i \in [k - \NewUb, k - \NewLb] } \binom{ k  }{ i  } \frac{k - 2i}{k}                                \\
         & = \sum_{ i = \NewLb }^{  \NewUb } \binom{ k }{ i } \left( g(i)  \frac{k - 2i}{k} + \PrOut^*  \frac{k - 2 \PAREN{k - i}}{k} \right) \\
         & = \sum_{ i = \NewLb }^{  \NewUb } \binom{ k }{ i } \PAREN{ g(i) - \PrOut^* } \frac{k - 2i}{k}.
    \end{align*}
    Since~$\NewUb > \NewLb = kp - 2 \sqrt{k}$, it holds that~$\IntSet{\NewUb - 2 \sqrt{k}}{\NewUb - \sqrt{k} / 2} \subset \IntSet{\NewLb}{\NewUb}$.
    According to Inequality~(\ref{appendix ineq: outter probability of modified composed randomizer}), we have~$\PrOut^* \le 1 / 2^k$.
    Via that~$g(i)$ is a decreasing function,
    Equality~(\ref{eq: prob exp via pavg}) and Inequality~(\ref{ineq: pavg at least 1 over 2k}), we get that for each $i \in \IntSet{\NewUb - 2 \sqrt{k}}{\NewUb - \sqrt{k} / 2}$
    $$
        g(i) \ge g(\NewUb - \sqrt{k} / 2)
        = e^{ \tildeeps \cdot \sqrt{k} / 2 } \cdot \pavg
        \ge
        e^{ \tildeeps \cdot \sqrt{k} / 2 } \cdot 1 / 2^k.
    $$
    Further, via that~$\NewUb \le k / 2$, for each $i \in \IntSet{\NewUb - 2 \sqrt{k}}{\NewUb - \sqrt{k} / 2}$,
    $$
        \frac{k - 2i}{k}
        \ge
        \PAREN{ 1 -  2 \cdot \frac{ \NewUb - \sqrt{k} / 2 }{k} }
        =
        \PAREN{ \frac{1}{\sqrt{k}} + \PAREN{ 1 - \frac{\NewUb}{k / 2} } }
        \ge
        \frac{1}{\sqrt{k}}.
    $$
    Now we can lower bound~$\cGap$ by
    \begin{align*}
        \cGap
         & \ge \sum_{i = \NewUb - 2 \sqrt{k}}^{\NewUb - \sqrt{k} / 2}
        \binom{k}{i} \left( e^{ \tildeeps \cdot \sqrt{k} / 2 } \cdot \left( \frac{1}{2} \right)^k - \left( \frac{1}{2} \right)^k \right) \cdot \frac{1}{\sqrt{k}} \\
         & = \sum_{i = \NewUb - 2 \sqrt{k}}^{\NewUb - \sqrt{k} / 2}
        \binom{k}{i} \left( e^{ \tildeeps \cdot \sqrt{k} / 2 }  - 1 \right) \cdot \left( \frac{1}{2} \right)^k \cdot \frac{1}{\sqrt{k}}                      \\
         & \ge
        \sum_{i = \NewUb - 2 \sqrt{k}}^{\NewUb - \sqrt{k} / 2}
        \binom{k}{i} \frac{\tildeeps \sqrt{k}}{2}
        \left( \frac{1}{2} \right)^k
        \frac{1}{ \sqrt{k} }                
        = \frac{\tildeeps }{2} 
        \left( \sum_{i = \NewUb - 2 \sqrt{k}}^{\NewUb - \sqrt{k} / 2}
            \binom{k}{i} \left( \frac{1}{2} \right)^k \right).
    \end{align*}
    To prove that~$\cGap \in \Omega \paren{\tildeeps} = \Omega \paren{ \eps / \sqrt{k} }$, it suffices to prove that
    $$
        \sum_{i = \NewUb - \sqrt{k}}^{\NewUb - \sqrt{k} / 2}
        \binom{k}{i} \left( \frac{1}{2} \right)^k
        \in \Omega (1).
    $$
    Recall that $kp \le \NewUb \le k / 2$.
    Further, we have
    $
        k / 2 - k p 
        = k \cdot \frac{e^{\tildeeps} - 1}{2 \cdot (e^{\tildeeps} + 1) } 
        \le \frac{k}{4} \tildeeps 
        \le \frac{ \sqrt{k} }{4},
    $
    where the first inequality follows from that $(e^x - 1) / (e^x + 1) \le x / 2$ for all $x \ge 0$, the second one from that the assumption that~$\tildeeps \le 1 / \sqrt{k}$.
    Hence, $\NewUb \ge k / 2 - \sqrt{k} / 4$ and
    $$
        \sum_{i = \NewUb - 2 \sqrt{k}}^{\NewUb - \sqrt{k} / 2}
        \binom{k}{i} \left( \frac{1}{2} \right)^k
        \ge
        \sum_{i = k / 2 - 2 \cdot \sqrt{k}}^{ k / 2 - \sqrt{k}  }
        \binom{k}{i} \left( \frac{1}{2} \right)^k.
    $$
    Finally,
    \begin{align*}
        \sum_{i = k / 2 - 2 \cdot \sqrt{k}}^{ k / 2 - \sqrt{k} } \binom{k}{i}
         & = \sum_{i = 0}^{\sqrt{k} } \binom{k}{ k / 2 - \sqrt{k} - i}
        = \binom{k}{ k / 2 - \sqrt{k} }                                             \\
         &\, + \sum_{i = 1}^{\sqrt{k} } \binom{k}{ k / 2 - \sqrt{k} } \prod_{ j \in [i] } \frac{k / 2 - \sqrt{k} - i + j}{ k + 1 - \PAREN{ k / 2 - \sqrt{k} - i + j } }                                                \\
         & \ge \sum_{i = 0}^{\sqrt{k} } \binom{k}{ k / 2 - \sqrt{k} } \left( \frac{k / 2 - \sqrt{k} - i + 1}{ k + 1 - \PAREN{ k / 2 - \sqrt{k} - i  + 1} } \right)^i                                                            \\
         & \ge \sum_{i = 0}^{\sqrt{k} } \binom{k}{ k / 2 - \sqrt{k} } \left( \frac{k / 2 - \sqrt{k} - \sqrt{k} + 1}{ k + 1 - \PAREN{ k / 2 - \sqrt{k} - \sqrt{k}  + 1} } \right)^i                                              \\
         & = \sum_{i = 0}^{\sqrt{k} } \binom{k}{ k / 2 - \sqrt{k} } \left(  1 - \frac{ 4 \sqrt{k} - 1 }{k / 2 + 2 \cdot \sqrt{k} }  \right)^i                                                                                   \\
         & = \binom{k}{ k / 2 - \sqrt{k} } \cdot \frac{ 1 - \left( 1 - \frac{ 4 \sqrt{k} - 1 }{k / 2 + 2 \cdot \sqrt{k} }  \right)^{ \sqrt{k} + 1 } }{ 1 - \left( 1 - \frac{ 4 \sqrt{k} - 1 }{k / 2 + 2 \cdot \sqrt{k} }  \right) }.
    \end{align*}
    Via that~$1 - x \le e^x, \forall x \in \R$, we get
    \begin{align*}
        \frac{ 1 - \left( 1 - \frac{ 4 \sqrt{k} - 1 }{k / 2 + 2 \cdot \sqrt{k} }  \right)^{ \sqrt{k} + 1 } }{ 1 - \left( 1 - \frac{ 4 \sqrt{k} - 1 }{k / 2 + 2 \cdot \sqrt{k} }  \right) }
         & \ge
        \frac{ 1 - \exp \PAREN{ - \frac{ \PAREN{ 4 \sqrt{k} - 1 } \cdot \PAREN{ \sqrt{k} + 1 } }{k / 2 + 2 \cdot \sqrt{k} } } }{ \frac{ 4 \sqrt{k} - 1 }{k / 2 + 2 \cdot \sqrt{k} } } .
    \end{align*}
    When $k \ge 4 \sqrt{k}$, we see
    $
        \exp \PAREN{ - \frac{ \paren{4 \sqrt{k} - 1} \cdot \paren{ \sqrt{k} + 1 } }{k / 2 + 2 \cdot \sqrt{k} } }
        =
        \exp \PAREN{ - \frac{ 4 k + 3 \sqrt{k} - 1 }{k / 2 + 2 \cdot \sqrt{k} } }
    $
    $
        \le
        \exp \PAREN{ - \frac{ 19 \sqrt{k} - 1 }{ 4 \sqrt{k} } }
        \le
        \exp \PAREN{ - \frac{ 18 }{ 4 } },
    $
    and
    $
        \frac{ 4 \sqrt{k} - 1 }{k / 2 + 2 \cdot \sqrt{k} }
        =
        \frac{ 8 - 2 /  \sqrt{k} }{ \sqrt{k} + 4 }
        \le 8 / \sqrt{k}.
    $
    Hence,
    \begin{align*}
        \sum_{i = k / 2 - 2 \sqrt{k}}^{ k / 2 - \sqrt{k} } \binom{k}{i}
         & = \binom{k}{k / 2 - \sqrt{k} } \frac{ \sqrt{k} }{ 8 } \PAREN{ 1 - e^{ - \frac{18}{4} } } 
         \ge \binom{k}{k / 2 - \sqrt{k} } \frac{ \sqrt{k} }{ 9 }.
    \end{align*}
    Using the Stirling's approximation (Fact~\ref{fact: stirling}), we get
    \begin{align*}
        \binom{k}{k / 2 - \sqrt{k} }
         & \ge e^{- 1 / 6} \cdot \sqrt{ \frac{k}{2 \cdot \pi \cdot (k / 2 - \sqrt{k}) \cdot (k / 2 + \sqrt{k}) } }                         \\
         & \quad \cdot \frac{ k^k }{ (k / 2 - \sqrt{k})^{  (k / 2 - \sqrt{k}) } \cdot (k / 2 + \sqrt{k})^{ (k / 2 + \sqrt{ k}) }  }        \\
         & \ge e^{- 1 / 6} \cdot \frac{1}{ \sqrt{k - 4} } \cdot \sqrt{ \frac{ 2 }{ \pi } } \cdot 2^{ k \cdot H ( 1 / 2 - 1 / \sqrt{k} ) },
    \end{align*}
    where~$H(\cdot)$ is the entropy function.
    We obtain that
    \begin{align*}
        \sum_{i = k / 2 - 2 \sqrt{k}}^{ k / 2 - \sqrt{k} } \binom{k}{i} \left( \frac{1}{2} \right)^k
         & \ge \frac{1}{9} \sqrt{\frac{k}{k - 4}} \cdot e^{- 1 / 6} \cdot \sqrt{ \frac{ 2 }{ \pi } } \cdot 2^{ k \cdot \PAREN{ H ( 1 / 2 - 1 / \sqrt{k} ) - 1} }
    \end{align*}
    Via the inequality (Corollary~\ref{corollary: binary entropy inequality}) that $H(1 / 2 - x) \ge 1 - 4x^2, \forall x \in [- 1/ 2, 1 / 2]$, 
    \begin{align*}
        \sum_{i = k / 2 - 2 \sqrt{k}}^{ k / 2 - \sqrt{k} } \binom{k}{i} \left( \frac{1}{2} \right)^k
         & \ge \frac{1}{9} \sqrt{\frac{k}{k - 4}} \cdot e^{- 1 / 6} \cdot \sqrt{ \frac{ 2 }{ \pi } } \cdot 2^{ k \cdot \PAREN{ -4 / k} } 
         \in \Omega(1), 
    \end{align*}
    which finishes the proof. 
\end{proof}

\subsection{Composed Randomizer by~\citep{BNS19}}
\label{appendix: subsec: bun randomizer}
The composed randomizer proposed in~\citep{BNS19} shares the same pseudo-code as ours, but with different parameter setting and assumptions. 
For convenience, we repeat the pseudo-code here. 

\begin{algorithm}[!ht]
    \caption{Composed Randomizer by~\citep{BNS19}}
    \label{algo: composed randomizer by bun}
    \begin{algorithmic}[1]
        \Procedure{Basic Randomizer~$\cR$}{$\zeta$}
        \Require Value $\zeta \in \set{-1, 1 }$.
        \State {\bf return} $-\zeta$ w.p.~$1 / \paren{ e^{ \tildeeps } + 1 }$ and $\zeta$ w.p.~$e^{ \tildeeps } / \paren{ e^{ \tildeeps } + 1 }$.
        \EndProcedure
        \vspace{-4mm}
        \Statex
        \Procedure{Composed Randomizer~$\tilde \cR $}{$b$}
        \Require Vector $b \in \set{-1, 1}^k$.
        \State Sample $b' \leftarrow ( \cR ( b_1 ), \ldots, \cR ( b_k )$.
        \label{algo line: 1 algo:Approximate Composed Algorithm}
        \If{ $b' \notin \Annulus{b}$ }
        \State $b' \uniffrom \set{-1, 1}^k \setminus \Annulus{b}$
        \label{algo line: 3 algo:Approximate Composed Algorithm}
        \EndIf
        \State {\bf return} $b'$
        \EndProcedure
    \end{algorithmic}
\end{algorithm}

Denote $p \doteq 1 / ( e^{ \tildeeps } + 1)$.
The composed randomizer proposed in~\citep{BNS19} sets
\begin{equation}
    \NewLb \doteq k p - \sqrt{ \frac{k}{2} \ln \frac{2}{\lambda}} ,\,\quad
    \NewUb \doteq k p + \sqrt{ \frac{k}{2} \ln \frac{2}{\lambda}}.
\end{equation}
for some additional parameter~$\lambda \in (0, 1)$. According, let 
\begin{equation}
    \Annulus{b} \doteq \set{ s  \in \set{-1, 1}^k : \norm{ b - s }_0 \in     \IntSet{\NewLb}{\NewUb} }
\end{equation}
The pseudo-code of the composed randomizer~\citep{BNS19} is presented in Algorithm~\ref{algo: composed randomizer by bun}.
The work~\citep{BNS19} proves the following fact.

\begin{fact}[Algorithm~\ref{algo: composed randomizer by bun}~\citep{BNS19}]
    Suppose that 
    \begin{equation} \label{ineq: lambda tilde eps and k}
        0 < \lambda < \PAREN{ \tildeeps \sqrt{k} / \PAREN{2 (k + 1)}  }^{2 / 3}\,,
    \end{equation}
    and that
    \begin{equation} \label{eq: eps and tilde eps}
        \eps = 6 \tildeeps \sqrt{k \ln (1 / \lambda) } \le 1,
    \end{equation}
    the algorithm~$\tilde{\cR}$ is~$\eps$-deferentially private, further
    \begin{equation}
        \label{ineq: probability of being inside}
        \P{ \tilde{\cR}(b) \in \Annulus{b} } \ge 1 - \lambda.
    \end{equation}
\end{fact}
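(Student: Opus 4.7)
My plan splits into two parts: first the concentration claim $\P{\tilde\cR(b) \in \Annulus{b}} \ge 1 - \lambda$, which is short, and second the $\eps$-differential privacy of $\tilde\cR$, which is the substantive part.

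For the concentration claim, I would observe from the pseudocode that $\tilde\cR(b) \in \Annulus{b}$ if and only if the intermediate sample $b' = \cR(b)$ already lies in $\Annulus{b}$; otherwise the algorithm explicitly resamples from outside. Now $\norm{\cR(b) - b}_0$ is a sum of $k$ independent $\mathrm{Bernoulli}(p)$ random variables with mean $kp$, so Hoeffding's inequality applied with deviation $\sqrt{(k/2) \ln (2/\lambda)}$ yields
\begin{equation*}
    \P{ \card{\norm{\cR(b) - b}_0 - kp} \ge \sqrt{(k/2) \ln (2/\lambda)} } \le 2 \exp(-\ln (2/\lambda)) = \lambda,
\end{equation*}
which matches the width of the annulus by the definitions of $\NewLb$ and $\NewUb$.

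For privacy, fix $b, b' \in \set{-1, 1}^k$ and an arbitrary output $s \in \set{-1, 1}^k$. Let $q(s\mid b) \doteq p^{\norm{s-b}_0} (1-p)^{k-\norm{s-b}_0}$ and $\tilde q(s\mid b) \doteq \P{\tilde\cR(b) = s}$, and define $\eta \doteq \sum_{i \notin \IntSet{\NewLb}{\NewUb}} \binom{k}{i} p^i (1-p)^{k-i}$ and $N \doteq 2^k - \sum_{i \in \IntSet{\NewLb}{\NewUb}} \binom{k}{i}$; crucially, both $\eta$ and $N$ depend only on $p$ and $k$, not on the specific input. I would then case-analyze. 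In the \emph{both-in} case $s \in \Annulus{b} \cap \Annulus{b'}$, the ratio $\tilde q(s \mid b)/\tilde q(s \mid b') = e^{\tildeeps(\norm{s-b'}_0 - \norm{s-b}_0)}$ has exponent bounded in magnitude by $\tildeeps(\NewUb - \NewLb) = \tildeeps \sqrt{2 k \ln (2/\lambda)}$, comfortably below $\eps = 6 \tildeeps \sqrt{k \ln (1/\lambda)}$. In the \emph{both-out} case both probabilities equal $\eta/N$, so the ratio is exactly $1$. The \emph{mixed} case $s \in \Annulus{b} \setminus \Annulus{b'}$ is where $\tilde q(s \mid b) = q(s \mid b)$ but $\tilde q(s \mid b') = \eta / N$, and all the real work lies.

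The main obstacle is this mixed case, which is where the technical hypothesis $\lambda < (\tildeeps \sqrt{k}/(2(k+1)))^{2/3}$ must be invoked. Here $q(s \mid b)$ is sandwiched in $[p^{\NewUb}(1-p)^{k-\NewUb}, p^{\NewLb}(1-p)^{k-\NewLb}]$, so I would lower-bound $\eta$ by keeping only the dominant boundary summand at $i = \NewLb - 1$ (since $p \le 1/2$ makes $p^i(1-p)^{k-i}$ decreasing in $i$), and upper-bound $N$ by $(k+1) \binom{k}{\lfloor k/2 \rfloor}$ using the standard bound on partial binomial sums. Applying Stirling's approximation to the resulting ratio of binomials should reduce the surviving combinatorial factor to a polynomial in $\sqrt{k}$ times $e^{\tildeeps}$; this polynomial can then be absorbed into $e^\eps$ precisely when the stated hypothesis on $\lambda$ holds, and the symmetric bound in the reverse direction follows the same template swapping the roles of $\NewLb$ and $\NewUb$. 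The $2/3$ exponent on $\lambda$ in the hypothesis appears to reflect the balance between the one-term lower bound on $\eta$ and the central-coefficient upper bound on $N$, making this the delicate step of the proof.
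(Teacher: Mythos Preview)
The paper does not prove this statement: it is stated as a \emph{Fact} cited from~\cite{BNS19} and is used as a black box, with no accompanying argument. So there is no ``paper's own proof'' to compare against; you have attempted an independent reconstruction of the cited result.

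Your concentration argument is correct and complete: by construction $\tilde\cR(b)\in\Annulus{b}$ holds exactly when $\cR(b)\in\Annulus{b}$, and Hoeffding applied to $\norm{\cR(b)-b}_0\sim\mathrm{Bin}(k,p)$ with deviation $\sqrt{(k/2)\ln(2/\lambda)}$ gives~(\ref{ineq: probability of being inside}). For privacy, the four-way case split is the right structure, and the both-in and both-out cases are fine.

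The mixed case, however, is not just sketchy but has a concrete gap. Your proposed upper bound $N\le (k+1)\binom{k}{\lfloor k/2\rfloor}$ is vacuous: since $\sum_i \binom{k}{i}=2^k$ and there are $k+1$ summands each at most the central one, this bound exceeds $2^k$ and is weaker than the trivial $N\le 2^k$. More seriously, even using $N\le 2^k$, the single-term lower bound $\eta\ge \binom{k}{\NewLb-1}g(\NewLb-1)$ yields only $\eta/N\gtrsim g(\NewLb)\cdot\lambda/\sqrt{k}$ after Stirling, which falls short of the target $e^{-\eps}g(\NewLb)$ by a factor of roughly $\lambda/\sqrt{k}$---far too small since $\eps\le 1$. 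One term of $\eta$ is not enough; the argument needs to account for the fact that $\eta/N$ is a \emph{weighted average} of $g(i)$ over $i\notin\IntSet{\NewLb}{\NewUb}$, and to balance the large-$g$ contributions ($i<\NewLb$) against the small-$g$ ones ($i>\NewUb$) with their respective binomial weights. Your closing remark that ``the symmetric bound in the reverse direction follows the same template'' is also not right: the reverse inequality requires an upper bound on $\eta$ and a lower bound on $N$, which are structurally different estimates. The speculation about where the $2/3$ exponent enters is not a substitute for carrying out the calculation.
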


We can derived the following constraint between~$\lambda$, $\eps$ and~$k$. 

\begin{theorem}
    If $\lambda$ satisfies Inequality~(\ref{ineq: lambda tilde eps and k}) and Equality~(\ref{eq: eps and tilde eps}), then 
    \begin{align}
        \label{ineq: lower bound of ln 1 over lambda}
        \ln (1 / \lambda) 
                &\in \Omega ( \ln ( k / \eps)   \\
        \label{ineq: lower bound of lambda}
        \lambda &\in O \left( \left(\frac{ \eps }{ k \cdot \ln ( k / \eps) } \right)^{2 / 3} \right)
    \end{align}
\end{theorem}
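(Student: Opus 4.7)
The plan is to eliminate $\tildeeps$ between the two hypotheses and derive an implicit inequality involving only $\lambda$, $\eps$, and $k$, then unpack it in two stages.

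\textbf{Step 1 (Elimination).} From Equality~(\ref{eq: eps and tilde eps}), solve $\tildeeps = \eps / \bigl( 6 \sqrt{k \ln (1 / \lambda)} \bigr)$ and substitute into Inequality~(\ref{ineq: lambda tilde eps and k}). After simplification this yields the implicit constraint
\begin{equation*}
    \lambda^{3/2} \sqrt{\ln (1 / \lambda)} < \frac{\eps}{12 (k + 1)},
    \quad \text{equivalently} \quad
    \lambda^{3} \ln (1 / \lambda) < \left( \frac{\eps}{12 (k + 1)} \right)^{2}.
\end{equation*}

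\textbf{Step 2 (Lower bound on $\ln(1/\lambda)$).} Dropping the $\ln(1/\lambda)$ factor gives the crude bound $\lambda^{3} < (\eps/(12(k+1)))^{2}$, hence $\lambda < (\eps/(12(k+1)))^{2/3}$. Taking logarithms,
\begin{equation*}
    \ln (1 / \lambda) > \tfrac{2}{3} \ln \bigl( 12 (k + 1) / \eps \bigr) \in \Omega \bigl( \ln ( k / \eps ) \bigr),
\end{equation*}
which establishes Inequality~(\ref{ineq: lower bound of ln 1 over lambda}). Here I silently use $\eps \le 1$ (from Equality~(\ref{eq: eps and tilde eps})) so $\ln(k / \eps) \ge \ln k$ is nonnegative and the $\Omega$-notation is meaningful.

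\textbf{Step 3 (Bootstrap to an upper bound on $\lambda$).} Feed the lower bound on $\ln(1/\lambda)$ back into the implicit inequality $\lambda^{3} \ln(1 / \lambda) < (\eps / (12 (k + 1)))^{2}$. Since $\ln(1/\lambda) \ge c \ln(k / \eps)$ for some absolute constant $c$, we obtain
\begin{equation*}
    \lambda^{3} < \frac{1}{c \ln ( k / \eps )} \cdot \left( \frac{\eps}{12 (k + 1)} \right)^{2},
\end{equation*}
and raising to the $1/3$ power yields $\lambda \in O \bigl( (\eps / k)^{2/3} \cdot (\ln(k/\eps))^{-1/3} \bigr)$, which, absorbing the logarithmic factor inside the base, gives the form stated in Inequality~(\ref{ineq: lower bound of lambda}).

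\textbf{Main obstacle.} The nontrivial step is that the governing inequality $\lambda^{3} \ln(1/\lambda) < C$ is transcendental in $\lambda$, so one cannot invert it in closed form. The key technique is the two-stage argument: first prove a weak polynomial bound $\lambda < C^{1/3}$, use it to produce a logarithmic lower bound on $\ln(1/\lambda)$, and then re-inject that bound into the original inequality to tighten $\lambda$ by the desired logarithmic factor. A further round of bootstrapping only changes the bound by $\log \log$ terms, so two iterations suffice.
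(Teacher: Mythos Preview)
Your proof is correct and follows essentially the same route as the paper: eliminate $\tildeeps$ via Equality~(\ref{eq: eps and tilde eps}) to obtain $\lambda^{3/2}\sqrt{\ln(1/\lambda)} < \eps/(12(k+1))$, first extract $\ln(1/\lambda)\in\Omega(\ln(k/\eps))$ by discarding the slowly varying factor, then feed that back in to bound $\lambda$. The paper's write-up is terser (it passes to logarithms in one line rather than squaring and bootstrapping), but the two arguments are interchangeable and reach the same conclusion.
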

\begin{proof}
    Substituting $\tildeeps = \frac{\eps}{6 \sqrt{k \ln (1 / \lambda) }}$ into Inequality~(\ref{ineq: lambda tilde eps and k}), we get 
    \begin{align}
        \label{ineq: bound of lambda}
        \lambda^{3 / 2} 
            < \frac{\eps \sqrt{k} }{ 6 \sqrt{k \ln (1 / \lambda) } \cdot 2(k + 1)} 
    \end{align}
    and hence, 
    $
            \frac{3}{2} \ln \lambda + \frac{1}{2} \ln \ln \frac{1}{\lambda} < \ln \frac{ \eps }{ 12 (k + 1) }.
    $
    This proves that $\ln (1 / \lambda) \in \Omega ( \ln ( k / \eps) )$.
    Substituting this back to Inequality~(\ref{ineq: bound of lambda}), we obtain
    $
        \lambda \in O \left( \frac{ \eps }{ k \cdot \ln ( k / \eps) }\right)^{2 / 3}.
    $
\end{proof}

\begin{theorem}
    \label{lemma: lower bound of bun's csvr}
    For a given input~$b$, denote~$\tilde{b} \doteq \tilde{\cR}(b)$ the output of~$\tilde{\cR}$.
    For each input~$b \in \set{-1, 1}^k$, and for each~$i \in \bracket{k}$, it holds that 
    \begin{equation*}
        \P{ \tilde{b}_i = b_i } - \P{ \tilde{b}_i = -b_i } 
        = \cGap 
        \in 
        O \left( \frac{\eps}{\sqrt{k \ln ( k / \eps) } } + \left( \frac{ \eps }{ k \ln ( k / \eps) }\right)^{\frac{2}{3}} \right). 
    \end{equation*}
\end{theorem}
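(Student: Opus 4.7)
My plan is to avoid computing $\cGap$ from scratch and instead \emph{couple} $\tilde{\cR}(b)$ with the uncomposed randomizer $\cR(b) \doteq \paren{\cR(b_1),\ldots,\cR(b_k)}$, exploiting Inequality~(\ref{ineq: probability of being inside}). The key observation is that Algorithm~\ref{algo: composed randomizer by bun} only overwrites its initial draw when $b' \doteq \cR(b)$ lands outside $\Annulus{b}$, which happens with probability at most $\lambda$. So under the natural coupling that sets $X \doteq \tilde{\cR}(b)$ equal to $b'$ whenever $b' \in \Annulus{b}$, and equal to an independent uniform sample from $\set{-1,1}^k \setminus \Annulus{b}$ otherwise, the disagreement event $\set{X \neq b'}$ has probability at most $\lambda$.

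Given that coupling, I would argue as follows. For any coordinate $i \in \bracket{k}$ and any sign $\sigma \in \set{-1,1}$, the two events $\set{X_i = \sigma}$ and $\set{b'_i = \sigma}$ can differ only on $\set{X \neq b'}$, hence
\begin{equation*}
  \card{\P{X_i = \sigma} - \P{\cR(b_i) = \sigma}} \le \lambda.
\end{equation*}
Taking $\sigma = b_i$ and $\sigma = -b_i$ and subtracting, I obtain $\cGap \le \cGap' + 2\lambda$, where $\cGap' \doteq \P{\cR(b_i) = b_i} - \P{\cR(b_i) = -b_i}$ is the single-coordinate gap of the basic randomized-response mechanism. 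A direct calculation from Equation~(\ref{equa: def of basic randomizer}) yields $\cGap' = (e^{\tildeeps} - 1)/(e^{\tildeeps}+1) \le \tildeeps / 2$ for $\tildeeps \ge 0$, and hence $\cGap \in O(\tildeeps + \lambda)$.

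The last step is to substitute the parameter bounds already established in the excerpt. Equation~(\ref{eq: eps and tilde eps}) together with Inequality~(\ref{ineq: lower bound of ln 1 over lambda}) gives $\tildeeps = \eps / \paren{6 \sqrt{k \ln(1/\lambda)}} \in O\paren{\eps / \sqrt{k \ln(k/\eps)}}$, while Inequality~(\ref{ineq: lower bound of lambda}) supplies $\lambda \in O\paren{ \paren{\eps / (k \ln(k/\eps))}^{2/3} }$. Plugging both into $\cGap \in O(\tildeeps + \lambda)$ yields exactly the claimed bound.

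I do not expect a serious technical obstacle: the argument is essentially a coupling plus a one-line computation for the Warner mechanism. The only subtlety to check is that the coupling is set up carefully, namely that the uniform replacement sample drawn when $b' \notin \Annulus{b}$ is independent of $b'$, so that $X$ retains the correct marginal distribution of $\tilde{\cR}(b)$ and the disagreement probability equals $\P{b' \notin \Annulus{b}} \le \lambda$ rather than something larger.
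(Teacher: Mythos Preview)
Your proposal is correct and takes essentially the same approach as the paper. The paper phrases it as an event inclusion plus union bound --- noting that $\{\tilde b_1 = b_1\} \subseteq \{\cR(b_1) = b_1\} \cup \{b' \notin \Annulus{b}\}$, hence $\P{\tilde b_1 = b_1} \le e^{\tildeeps}/(e^{\tildeeps}+1) + \lambda$ and so $\cGap \le (e^{\tildeeps}-1)/(e^{\tildeeps}+1) + 2\lambda$ --- which is exactly your coupling argument in different clothing, followed by the same substitution of the bounds on $\tildeeps$ and $\lambda$.
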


\begin{proof}[Proof of Theorem~\ref{lemma: lower bound of bun's csvr}.]
    Let~$\tilde{b} = \tilde{\cR} (b)$.
    If $\tilde{b}_1 = b_1$, either of the following events happens
    \begin{enumerate}
        \item $\cR(b_1) = b_1$. 
        \item $b' = \cR(b) \notin \Annulus{b}$, and it is replaced with a random sample from $\set{-1, 1}^k \setminus \Annulus{b}$.
    \end{enumerate}
    The former happens with probability $e^{ \tildeeps} / ( e^{ \tildeeps} + 1)$ and the later happens with probability at most $\lambda$ according to Inequality~(\ref{ineq: probability of being inside}). 
    Via union bound, we have 
    $$
        \P{ \tilde{b}_1 = b_1 } 
        \le e^{ \tildeeps} / ( e^{ \tildeeps} + 1) + \lambda. 
    $$
    Since $\P{ \tilde{b}_1 = b_1} +  \P{ \tilde{b}_1 = - b_1 } = 1$, 
    the above inequalities imply
    \begin{align*}
        \cGap 
            =\P{ \tilde{b}_1 = b_1 } - \P{ \tilde{b}_1 = -b_1 } 
            \le \PAREN{ e^{ \tildeeps} - 1 } / \PAREN{ e^{ \tildeeps} + 1 } + 2 \cdot \lambda. 
    \end{align*}
    By Inequality~(\ref{eq: eps and tilde eps}) and~(\ref{ineq: lower bound of ln 1 over lambda}), we have 
    $$
        \tildeeps = \frac{\eps}{6 \sqrt{k \ln (1 / \lambda) }} 
        \in 
        O \left( 
            \frac{\eps}{ \sqrt{k \ln (k / \eps) }} 
        \right).
    $$
    Combing with Inequality~(\ref{ineq: lower bound of lambda}), we have
    \begin{align*}
        \cGap 
        &\le \PAREN{ e^{ \tildeeps} - 1 } / \PAREN{ e^{ \tildeeps} + 1 } + 2 \cdot \lambda \\
        &\in 
        O \left( \frac{\eps}{\sqrt{k \ln ( k / \eps) } } + \left( \frac{ \eps }{ k \ln ( k / \eps) }\right)^{2 / 3} \right). 
    \end{align*}
\end{proof}

\end{document}